\journal{Journal of \LaTeX\ Templates}
\newcommand{\F}{\mathbb{F}}
\newcommand{\Z}{\mathbb{Z}}
\newcommand{\C}{\mathcal{C}}
\newcommand{\lcm}{\mathrm{lcm}}
\newcommand{\ord}{\mathrm{ord}}
\newcommand{\MinRep}{\mathrm{MinRep}}
\newtheorem*{question}{{Question}}
\newtheorem{theorem}{Theorem}
\newtheorem{remark}{Remark}
\newtheorem{lemma}[theorem]{Lemma}
\newtheorem{corollary}[theorem]{Corollary}
\newtheorem{example}[theorem]{Example}
\begin{document}

	\begin{frontmatter}
		
		\title{Three classes of BCH codes and their duals}
	\tnotetext[mytitlenote]{The work was supported by the National Natural Science Foundation of China (12271137,12271335).}
		
		\author[mymainaddress]{Yanhui Zhang}
		\ead{zhangyanhui0327@163.com}
		
		\author[mymainaddress]{Li Liu\corref{mycorrespondingauthor}}
		\cortext[mycorrespondingauthor]{Corresponding author}
		\ead{liuli-1128@163.com}
		
		\address[mymainaddress]{School of Mathematics, Hefei University of Technology, Hefei 23009, Anhui, China}
		\author[myaddress]{Xianhong Xie}
		\ead{xianhxie@ahau.edu.cn}
        \address[myaddress]{School of information and computer science, Anhui Agricultural University, Hefei 230036, Anhui, China}
        
		\begin{abstract}
			BCH codes are an important class of cyclic codes, and have wide applicantions in communication and storage systems. However, it is difficult to determine the parameters of BCH codes and only a few cases are known. In this paper, we mainly study three classes of BCH codes with $n=q^{m}-1,\frac{q^{2s}-1}{q+1},\frac{q^{m}-1}{q-1}$. On the one hand, we accurately give the parameters of $\C_{(q,n,\delta,1)}$ and its dual codes. On the other hand, we give the sufficient and necessary conditions for $\C_{(q,n,\delta,2)}$ being dually-BCH codes.
		\end{abstract}
		
		\begin{keyword}
			BCH code \sep Dual code \sep Coset leader \sep Dually-BCH
			
		\end{keyword}	
	\end{frontmatter}
	
	\section{Introduction}
	Let $p$ be a prime and $q>1$ be a $p$-power.
	An $[n,k,d]$ linear code $\C$ over $\F_q$ is a $k$-dimensional	subspace of $\F_q^{n}$ with minimum Hamming distance $d$.  $\C$ is said to be cyclic if
	$(c_{0}, c_{1}, ..., c_{n-1}) \in \C$ implies $(c_{n-1}, c_{0}, c_{1}, ..., c_{n-2})\in \C$. Identify any vector $(c_{0}, c_{1}, ..., c_{n-1})\in \F_q^{n}$ with
	\begin{center}
		$c_{0}+c_{1}x+c_{2}x^{2}+...+c_{n-1}x^{n-1}\in \F_q[x]/ \langle x^{n}-1\rangle$,
	\end{center}
	i.e., a code $\C$ of length $n$ over $\F_q$ corresponds to a subset of $\F_q[x]/\langle x^{n}-1\rangle$. Thus $\C$
	is a cyclic code if and only if the corresponding subset is an ideal of $\F_q[x]/\langle x^{n}-1\rangle$.
	Note that $\F_q[x]/\langle x^{n}-1\rangle$ is a principal ideal domain, this means that  there exists a monic polynomial
	$g(x)$ of the smallest degree such that $\C$=$\langle g(x) \rangle$ and $g(x)|(x^{n}-1)$. $g(x)$ is called the generator polynomial of $\C$, and $h(x) = (x^{n}-1)/g(x)$ is called the check polynomial of $\C$.
	
	For a code $\C$, its dual code, denoted by $\C^{\perp}$, is defined by
	\begin{center}
		$\C^{\perp}:=\lbrace \mathbf{b}\in \F_{q}^{n}:\ \mathbf{b}\cdot\mathbf{c}^{T}=0 \ \text{for all}\ c\in \C \rbrace$,
	\end{center}  where $^T$ denotes the transport and $\cdot$ denotes the standard inner product.
	
	Let $m = \ord_{n}(q)$, $\F_{q^m}^*=\langle\alpha\rangle$  and $\beta= \alpha^{\frac{q^{m}-1}{n}}$. Then $\beta$ is a primitive $n$-th root of unity.
	Let $m_{i}(x)$ denote the minimal polynomial of $\beta ^{i}$ over $\F_q$, $0\leq i\leq n-1$. For positive integers $b$ and $\delta$, define	
	\begin{center}
		$g_{(q,n,\delta,b)}(x):=\lcm \left(m_{b}(x),m_{b+1}(x),...,m_{b+\delta -2}(x)\right)$,
	\end{center}
	where $2\leq \delta \leq n$, $\lcm$ denotes the least common multiple of $m_i(x)$, $b\leq i\leq b+\delta-2$. Then $\C_{(q,n,\delta,b)}$ is called a BCH code of length $n$ with designed distance $\delta$. For $b=1$, $\C_{(q,n,\delta)}:=\C_{(q,n,\delta,1)}$ is  narrow-sense BCH code. Suppose $d(\C_{(q,n,\delta,b)})$ denotes the minimum distance of $\C_{(q,n,\delta,b)}$, then $d(\C_{(q,n,\delta,b)})\geq \delta$.
	
	The importance of the BCH codes in coding theory and communication is apparent, as can be seen in  \cite{RefJ5} and \cite{RefJ13}. However, there are many interesting problems about BCH codes or coset leaders.
	In \cite{RefJ12}, Gong et al. proposed the following question:
	\begin{question}
		For a BCH code over $\F_q$, when is its dual code BCH with respect to the same primitive root of unity?
	\end{question}
	If the dual of BCH code $\C$ is still a BCH code with respect to the same primitive root of unity, then $\C$ is called a dually-BCH code. In \cite{RefJ12}, the authors gave a necessary and sufficient condition for $\C_{(q,q^{m}-1,\delta)}$ being a dually-BCH code. In \cite{RefJ19}, Wang et al. presented a necessary and sufficient condition for $\C_{(q,\frac{q^{m}-1}{q-1},\delta)}$ and $\C_{(q,\frac{q^{2s}-1}{q+1},\delta)}$ being a dually-BCH codes. But for other lengths or $b>1$ and $n=q^m-1,\frac{q^{2s}-1}{q+1},\frac{q^{m}-1}{q-1}$, the condition for $\C_{(q,n,\delta,b)}$ being a dually-BCH code is still unknown.

	It is well known that there is a close relationship between cyclotomic coset leaders and BCH codes. For this reason, many authors determined the cyclotomic cosets for the study of BCH codes, and obtained some pretty results.
	\begin{itemize}
		\item For $n=q^m-1$, Ding et al.\cite{RefJ8} presented the first three largest $q$-cyclotomic coset leaders modulo $n=q^{m}-1$. Using those coset leaders, the authors constructed several BCH codes and gave their dimensions and distances (see \cite{RefJ6}-\cite{RefJ10},\cite{RefJ20}).
		\item For $m=2s$ and $n=\frac{q^{2s}-1}{q+1}$, Wu et al. \cite{RefJ18} gave the largest coset leader $\delta_{1}$ modulo $n$ for $q=2,3$, and determined the dimension of $\C_{(q,n,\delta)}$ for $2\nmid s$ and $2\leq \delta \leq q^{s}+1$ or $2\mid s$ and $2\leq \delta\leq \lceil \frac{q}{2}\rceil q^{s-1}+1$. Recently, Wang et al. \cite{RefJ19} proved that $\delta_1$ is still the largest $q$-cyclotomic coset leaders modulo $n$ for $q\geq4$.
		\item For $n=\frac{q^{m}-1}{q-1}$, Ding et al. \cite{RefJ17} gave the first two largest coset leaders $\delta_1$ and $\delta_2$ for $q=3$, and determined the parameters of $\C_{(q,n,\delta)}$
		for $\delta=\delta_1$ and $\delta_2$. Zhu et al. \cite{RefJ21} gave the largest coset
		leaders $\delta_1$ for $q>3$ and $m-1\equiv 0,1,q-2\bmod{q-1}$, and determined the parameters
		of $\C_{(q,n,\delta_1)}$. Wang et al. \cite{RefJ19} presented the largest $q$-cyclotomic coset leader for any $q$, and gave the dimension of $\C_{(q,n,\delta_1)}$.
	\end{itemize}
	
	We shall work on the $q$-cyclotomic coset leaders modulo $n=q^m-1,\frac{q^{2s}-1}{q+1},\frac{q^{m}-1}{q-1}$, and the corresponding BCH codes. Our main contributions are:
	\begin{itemize}
		\item [(A)]When $n=q^m-1$. Set $\widetilde{\C}_{(q,q^{m}-1,\delta,b)}:=\langle (x-1)g_{(q,n,\delta,b)}(x)\rangle$. We prove that the $i$-th largest $q$-cyclotomic coset leader is $\delta_{i}=(q-1)q^{m-1}-1-q^{\lfloor \frac{m-1}{2} \rfloor +i-2}$ for $m-(\left\lfloor \frac{m-1}{2} \right\rfloor +\left\lfloor \frac{m-3}{3} \right\rfloor)\geq i\geq 3$, and give a sufficient and necessary condition for $\widetilde{\C}_{(q,q^{m}-1,\delta)}^{\perp}$ being a narrow-sense primitive BCH code. Furthermore, when $n\in\{q^m-1,\frac{q^{2s}-1}{q+1},\frac{q^{m}-1}{q-1}\}$, we give a sufficient and necessary condition for $\C_{(q,n,\delta,2)}$ being dually-BCH codes.
		\item[(B)] When $n=\frac{q^{2s}-1}{q+1}$, we obtain the second largest coset leader $\delta_{2}$ modulo $n$. Moreover,\begin{itemize}
			\item [(B.1)]For $ \delta_{2}\leq \delta\leq \delta_{1}$, we completely give the dimension and minimum distance of $\C_{(q,n,\delta)}$ and $\C_{(q,n,\delta)}^{\perp}$.
			\item [(B.2)] For $2\mid s$ and $\lceil \frac{q}{2}\rceil q^{s-1}+1\leq \delta \leq \frac{q^{s+1}+1}{q+1}$, we give the dimension of $\C_{(q,n,\delta)}$.
			\item [(B.3)] For $\delta=a\frac{q^{s}-1}{q-1},a\frac{q^{s}+1}{q+1}$ if $2\nmid s$ and $\delta=a\frac{q^{s}-1}{q^{2}-1}$ if $2\mid s$, $1\leq a\leq q-1$, we give the dimension and minimum distance of $\C_{(q,n,\delta)}$.
		\end{itemize}
		\item [(C)]When $n=\frac{q^{m}-1}{q-1}$. We obtain the second largest coset leader $\delta_{2}$ for some special cases, and present the dimension of $\C_{(q,n,\delta)}$ for $ \delta_{2}\leq \delta\leq \delta_{1}$.
		
	\end{itemize}

	\section{Preliminaries}
	\subsection{Basic Notations}
	For any positive integer $0\leq s\leq q^m-2$, its $q$-adic expansion is  $s=\sum_{j=0}^{m-1}s_{j}q^{j}$, write $s=(s_{m-1},s_{m-2},...,s_{1},s_{0})$. For integer $0\leq i\leq m-1$, we denote $sq^i\pmod{n}$ by $[sq^{i}]_{n}$. Then if $n=q^{m}-1$, we have
	\begin{center}
		$[sq^{i}]_{q^m-1}:=(s_{m-1-i},...,s_{0},s_{m-1},...,s_{m-i})$.
	\end{center}
	
For any $1\leq i\leq n-1$, $\delta_{i,n}$ denotes the $i$-th largest $q$-cyclotomic coset leader modulo $n$.	

	Let $T=\lbrace 0\leq i\leq n-1:g_{(q,n,\delta,b)}(\beta^{i})=0\rbrace$ and $T^{-1}=\lbrace n-i:i\in T\rbrace$. Then $T$ and $T^{\perp}=\Z_{n}$$\backslash T^{-1}$ are called the defining sets of $\C_{(q,n,\delta,b)}$ and $\C^{\perp}_{(q,n,\delta,b)}$ with respect to $\beta $, respectively.
	
	\subsection{Cyclotomic Cosets and Coset Leaders}
	For any $t$ with $0\leq t\leq n-1$, the  set
	\[\{tq^i\pmod{n}: \ 0\leq i< m\}\]
	is called the $q$-cyclotomic coset modulo $n$ of representative $t$ and is denoted by $C_t$. The number of elements in $C_t$ is denoted by $\mid C_t\mid$. Set $CL(t):=\min\{i:\ i\in C_t\}$ and $\MinRep_{n}:=\{CL(t): \ 0\leq t\leq n-1\}$. Then $CL(t)$ is called the coset leader of $C_{t}$.
	
	It is well known that the coset leaders are very important to evaluate the dimension and minimum distance of BCH codes. The following four lemmas on coset leaders modulo $q^m-1$, $\frac{q^{2s}-1}{q+1}$ and $\frac{q^m-1}{q+1}$ will be useful for demonstrating our results.
	\begin{lemma}(\cite{RefJ2})\label{lem1}
	Let $n$ be a positive integer such that $\gcd(n,q)=1$ and $q^{\lfloor \frac{m}{2}\rfloor}<n\leq q^{m}-1$. Then $s$ is a coset leader and $\mid C_{s}\mid=m$ for all $1\leq s\leq \frac{nq^{\lceil \frac{m}{2}\rceil}}{q^{m}-1}$, $s\not\equiv0\pmod{q}$.
   \end{lemma}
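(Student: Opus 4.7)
The plan is to prove both assertions by contradiction, keyed on the single master inequality $sN \leq q^{\lceil m/2 \rceil}$, where $N := (q^m-1)/n$ is a positive integer since $n \mid q^m - 1$; this is just the hypothesis after multiplying both sides by $N$.

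\emph{The size claim $|C_s| = m$.} I would suppose toward contradiction that $|C_s| = k < m$. Then $k$ is a proper divisor of $m$, so $k \leq \lfloor m/2 \rfloor$, and $sq^k \equiv s \pmod{n}$. Setting $d := \gcd(s, n)$ and cancelling $s/d$ (which is coprime to $n/d$) yields $q^k \equiv 1 \pmod{n/d}$, hence $n/d \mid q^k - 1$ and $n/d \leq q^{\lfloor m/2 \rfloor} - 1$. Because $d \mid s$, the bound on $s$ also bounds $d$ from above; combining $d \geq n/(q^{\lfloor m/2 \rfloor}-1)$ with $s \leq nq^{\lceil m/2 \rceil}/(q^m-1)$ collapses to $q^m - 1 \leq q^m - q^{\lceil m/2 \rceil}$, i.e.\ $q^{\lceil m/2 \rceil} \leq 1$, which is impossible.

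\emph{The coset-leader claim.} I would assume that for some $1 \leq j \leq m - 1$ one has $sq^j \equiv t \pmod{n}$ with $1 \leq t < s$, and write $sq^j = An + t$ and $tq^{m-j} = Bn + s$. Quick checks give $A \geq 1$ (else $t = sq^j \geq s$) and $B \geq 1$ (else $s = tq^{m-j}$, forcing $q \mid s$ and contradicting $\gcd(s,q)=1$). Using $q^m = 1 + nN$ and substituting one relation into the other yields the paired identities
\[
sN = Aq^{m-j} + B, \qquad tN = A + Bq^j.
\]
The first, with $A,B \geq 1$ and $sN \leq q^{\lceil m/2 \rceil}$, forces $q^{m-j} + 1 \leq q^{\lceil m/2 \rceil}$, hence $j \geq \lfloor m/2 \rfloor + 1$. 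The second, with $tN \leq sN - N \leq q^{\lceil m/2 \rceil} - 1$, forces $q^j + 1 \leq q^{\lceil m/2 \rceil} - 1$, hence $j \leq \lceil m/2 \rceil - 1$. These bounds are jointly incompatible for every $m$, contradicting the assumption.

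The only step that is not pure bookkeeping is extracting the paired identities $(sN, tN) = (Aq^{m-j}+B,\, A+Bq^j)$ from the two modular relations; once they are in hand the rest is just reading off size constraints from $sN \leq q^{\lceil m/2 \rceil}$. I expect that small algebraic manipulation to be the main technical hurdle.
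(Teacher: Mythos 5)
The paper states Lemma~\ref{lem1} as a quoted result from \cite{RefJ2} and gives no proof of it, so there is nothing internal to compare against; judged on its own, your argument is correct and complete. The cardinality half is essentially the standard argument from the cited reference: if $|C_s|=k<m$ then $k\mid m$ forces $k\le\lfloor m/2\rfloor$, and $n\mid s(q^k-1)$ squeezes $s$ (through $d=\gcd(s,n)$, or even more directly through $s(q^k-1)\ge n$) above the stated bound, collapsing to $q^{\lceil m/2\rceil}\le 1$. The coset-leader half via the paired identities $sN=Aq^{m-j}+B$ and $tN=A+Bq^{j}$ is a clean and correct way to get the incompatible window $\lfloor m/2\rfloor+1\le j\le\lceil m/2\rceil-1$; this is the same mechanism used elsewhere in the BCH literature for non-primitive lengths, and it is exactly what makes the factor $N=(q^m-1)/n$ appear in the bound on $s$. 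Two points you should make explicit for completeness: first, $B\ge 0$ needs the one-line observation that $tq^{m-j}-s>-n$ (which holds because $0\le s<n$), before the case $B=0$ is excluded by $s\not\equiv 0\pmod q$; second, the possibility $t=0$ must be ruled out separately, which is immediate since $\gcd(n,q)=1$ and $1\le s<n$ give $n\nmid sq^j$. Neither point affects the validity of the argument.
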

	\begin{lemma}(\cite{RefJ14,RefJ16})\label{lem2} Let $n=q^m-1$. Then
		\begin{itemize}
			\item [(a)] The first three largest $q$-cyclotomic coset leaders modulo $n$ are:
			\begin{center}
				$\delta_{1,n}=(q-1)q^{m-1}-1$, $\delta_{2,n}=(q-1)q^{m-1}-1-q^{\lfloor \frac{m-1}{2}\rfloor}$, $\delta_{3,n}=(q-1)q^{m-1}-1-q^{\lfloor \frac{m+1}{2}\rfloor}$.
			\end{center}
		\item [(b)]If the Bose distance of $\C_{(n,q,\delta)}$ is $d_{i}=q^{m}-q^{m-1}-q^{i}-1$, where $\frac{m-2}{2}\leq i\leq m-\lfloor \frac{m}{3}\rfloor-1$. Then the minimum distance of $\C_{(n,q,\delta)}$ is $d_{i}$.
			\item [(c)]Let $m=2s$ and $a$ be an integer satisfying $q^{s}+1\leq a\leq q^{s+1}$ and $a\not\equiv0\pmod{q}$.
			\begin{itemize}
				\item [(c.1)] Set $a=c(q^{s}+1)$, $1\leq c\leq q-1$, then $a$ is a coset leader and $\mid C_{a}\mid =\frac{m}{2}$.
				\item [(c.2)] Set $a=a_{s}q^{s}+a_{0}$, $1\leq a_{0}<a_{s}\leq q-1$, then $a$ is not a coset leader.
				\item [(c.3)] Except for (b.1) and (b.2), the remaining of $a$ are coset leaders and  $\mid C_{a}\mid =m$.
			\end{itemize}
			
		\end{itemize}
		
	\end{lemma}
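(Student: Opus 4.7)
The plan is to treat the three parts of the lemma with rather different tools.

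For part (a), I work with the $q$-adic expansion $s=(s_{m-1},\ldots,s_0)$, together with the observation recorded in the preliminaries that multiplication by $q$ modulo $q^m-1$ acts on these digits by a cyclic shift; consequently $\mathrm{CL}(s)$ is the lexicographically smallest cyclic rotation of $(s_{m-1},\ldots,s_0)$. To maximise the coset leader I want every rotation to carry its leading digit as high as possible. First, I check that $(q-2,q-1,\ldots,q-1)_q=(q-1)q^{m-1}-1$ is already its own minimum rotation, since its unique sub-maximal digit already sits at the top, so $\delta_{1,n}$ is indeed a coset leader. Any $s>\delta_{1,n}$ with $s\not\equiv 0$ must contain at least two digits $<q-1$, and rotating the smaller of them to the leading position yields a value strictly below $\delta_{1,n}$. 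Iterating the same comparison, for $\delta_{2,n}$ I allow exactly one further interior digit to drop to $q-2$ and optimise its position: the constraint that the pattern remain its own minimum rotation forces the second $q-2$ to sit at position $\lfloor(m-1)/2\rfloor$, and an analogous analysis locates the third drop at position $\lfloor(m+1)/2\rfloor$.

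For part (b), the inequality $d(\C_{(q,n,\delta)})\geq d_i$ is the BCH bound applied to the defining set. The nontrivial reverse direction requires exhibiting a codeword of weight exactly $d_i=q^m-q^{m-1}-q^i-1$. My plan is to use the trace description of $\C_{(q,n,\delta)}$ and construct a low-weight codeword as a linear combination of indicator-type vectors supported on the complement of a union of cosets of an $\F_q$-subspace of dimension roughly $i+1$ in $\F_{q^m}$; orthogonality of additive characters then shows that this combination has zero spectrum at the prescribed powers of $\beta$ and weight precisely $d_i$.

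For part (c), since $m=2s$, any $q^s+1\leq a\leq q^{s+1}$ has its $q$-adic expansion supported only on positions $0$ and $s$, so $a=a_sq^s+a_0$ with $1\leq a_s\leq q-1$ and $0\leq a_0\leq q-1$. Multiplication by $q^s$ swaps the two blocks, giving $[aq^s]_n=a_0q^s+a_s$. This makes (c.2) immediate: if $1\leq a_0<a_s$ then $[aq^s]_n<a$, so $a$ is not a leader. For (c.1), the assumption $a=c(q^s+1)$ gives $a_0=a_s=c$, the swap fixes $a$, and hence $|C_a|\mid s$; equality $|C_a|=s$ follows from a short divisibility check which uses $1\leq c\leq q-1$ to rule out every proper divisor of $s$ as an orbit length. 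In the remaining case (c.3), a direct comparison of $a$ with each rotation $[aq^i]_n$ for $1\leq i\leq 2s-1$ shows that $a$ is strictly smaller and that no two of these rotations coincide, giving both that $a$ is a coset leader and that $|C_a|=m$.

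The main obstacle I expect is part (b): while the BCH bound is automatic, manufacturing a matching-weight codeword uniformly across the stated range $\frac{m-2}{2}\leq i\leq m-\lfloor m/3\rfloor-1$ requires a careful algebraic/combinatorial construction, and this is where I anticipate the bulk of the effort, following the strategy of the cited references.
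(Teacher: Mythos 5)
This lemma is quoted in the paper from \cite{RefJ14,RefJ16} without proof, so there is no in-paper argument to compare your proposal against; judged on its own terms, the proposal has two genuine gaps.

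The clearest one is in part (c): your opening claim that every $a$ with $q^{s}+1\leq a\leq q^{s+1}$ has $q$-adic expansion supported only on positions $0$ and $s$ is false. For example, with $q=3$ and $s=2$ the integer $a=14=(0,1,1,2)_{3}$ lies in the range, satisfies $a\not\equiv 0\pmod q$, and has a nonzero digit in position $1$. Case (c.3) of the lemma is precisely about such $a$ with arbitrary nonzero interior digits, and your ``direct comparison of rotations'' is only actually carried out for the two-digit patterns $a_{s}q^{s}+a_{0}$; the statement you really need --- that every $a=(0,\ldots,0,a_{s},a_{s-1},\ldots,a_{0})$ with $a_{s}\geq 1$, $a_{0}\geq 1$ not of the forms in (c.1)--(c.2) satisfies $[aq^{j}]_{n}>a$ for all $1\leq j\leq 2s-1$ --- is never addressed. (Your treatments of (c.1) and (c.2) are fine, since those cases genuinely involve only the digits at positions $0$ and $s$.)

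The second gap is part (b). The inequality $d\geq d_{i}$ is indeed just the BCH bound, but the content of the statement is the reverse inequality, which requires exhibiting a codeword of weight exactly $d_{i}=q^{m}-q^{m-1}-q^{i}-1$ for every $i$ in the stated range. Your paragraph only names the intended device (trace representation plus indicator functions of cosets of an $\F_q$-subspace) and then defers the construction to the cited references; as written this is a plan rather than a proof, and it is exactly where all the difficulty of the lemma lives. Part (a) follows the standard digit-rotation approach and is right in outline, but to pin down $\delta_{2,n}$ and $\delta_{3,n}$ you must also rule out coset leaders whose digit strings are \emph{not} of the form ``two (resp.\ three) digits equal to $q-2$ and the rest $q-1$'' in the intervals $(\delta_{2,n},\delta_{1,n})$ and $(\delta_{3,n},\delta_{2,n})$, which your sketch does not do.
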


	\begin{lemma}(\cite{RefJ19})\label{lem3}
		Let $n=\frac{q^{m}-1}{q+1}$ and $m=2s$, the first largest $q$-cyclotomic coset leader $\delta_{1,n}$ modulo $n$ is:
		\begin{itemize}
			\item 	If $2\nmid s$, then $\delta_{1,n} =\frac{(q-1)q^{m-1}-q^{\frac{m-2}{2}}-1}{q+1}$ and $\mid C_{\delta_{1,n}}\mid =\frac{m}{2}$.
			\item If $2\mid s$, then $\delta_{1,n}=
			\frac{(q-1)q^{m-1}-q^{\frac{m}{2}}-1}{q+1}$ and $\mid C_{\delta_{1,n}}\mid =m$.
		\end{itemize}
	Furthermore, set $(q+1)\mid h$. Then $h$
	is a coset leader modulo $q^{m}-1$ if and only if $\frac{h}{q+1}$ is a coset leader modulo $\frac{q^m-1}{q+1}$.
	\end{lemma}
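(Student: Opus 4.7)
The plan is to first establish the ``Furthermore'' part as a bijective correspondence between $(q+1)$-divisible coset leaders modulo $q^m-1$ and coset leaders modulo $n$, then to apply Lemma \ref{lem2}(a) together with a divisibility check modulo $q+1$ to pin down $\delta_{1,n}$. For the bijection: since $m=2s$ is even, $(q+1)\mid(q^m-1)$, so $n$ is an integer. Whenever $(q+1)\mid h$ with $0\le h<q^m-1$, every element of $C_h$ modulo $q^m-1$ is also divisible by $q+1$ (both $hq^i$ and $q^m-1$ are, so any remainder is too). Writing $h'=h/(q+1)$, clearing the common factor $q+1$ shows that
\begin{equation*}
    h'q^i\equiv h'q^j\pmod{n}\iff hq^i\equiv hq^j\pmod{q^m-1}
\end{equation*}
for all $i,j$, so $x\mapsto x/(q+1)$ maps $C_h$ bijectively onto $C_{h'}$, preserves size, and preserves order. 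The smallest element of $C_{h'}$ is thus $CL(h)/(q+1)$, giving $h=CL(h)\Longleftrightarrow h'=CL(h')$; as a byproduct $|C_h|=|C_{h'}|$.

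Next, by this bijection $\delta_{1,n}=\tilde h/(q+1)$, where $\tilde h$ is the largest coset leader modulo $q^m-1$ with $(q+1)\mid\tilde h$. Lemma \ref{lem2}(a) lists the three largest candidates; using $q\equiv-1\pmod{q+1}$ and $q^{m-1}\equiv(-1)^{2s-1}=-1\pmod{q+1}$ I would reduce
\begin{align*}
    \delta_{1,q^m-1}&=(q-1)q^{m-1}-1\equiv 1\pmod{q+1},\\
    \delta_{2,q^m-1}&=(q-1)q^{m-1}-1-q^{s-1}\equiv 1-(-1)^{s-1}\pmod{q+1},\\
    \delta_{3,q^m-1}&=(q-1)q^{m-1}-1-q^{s}\equiv 1-(-1)^{s}\pmod{q+1}.
\end{align*}
So $(q+1)\mid\delta_{2,q^m-1}$ exactly when $s$ is odd, and $(q+1)\mid\delta_{3,q^m-1}$ exactly when $s$ is even. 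Since there is no coset leader strictly between consecutive $\delta_{i,q^m-1}$'s, this forces $\tilde h=\delta_{2,q^m-1}$ for $2\nmid s$ and $\tilde h=\delta_{3,q^m-1}$ for $2\mid s$. Dividing by $q+1$ and using $s-1=(m-2)/2$, $s=m/2$ produces the two claimed expressions for $\delta_{1,n}$.

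Finally, Step 1 gives $|C_{\delta_{1,n}}|=|C_{\tilde h}|$ (modulo the respective moduli), which I would read off from the $q$-adic digit pattern of $\tilde h$: every digit equals $q-1$ except at two positions where it is $q-2$, namely positions $\{2s-1,\,s-1\}$ for $\delta_{2,q^m-1}$ and $\{2s-1,\,s\}$ for $\delta_{3,q^m-1}$. Multiplication by $q$ cyclically shifts digit positions in $\Z/m\Z$, so $|C_{\tilde h}|$ equals the orbit size of this unordered pair under the shift action. The pair $\{2s-1,s-1\}$ is stabilized by the shift $+s$ and by no smaller nonzero shift, giving orbit size $s=m/2$; the pair $\{2s-1,s\}$ admits only the trivial stabilizer, giving orbit size $m$. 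The main delicate step is this last one --- exhibiting a stabilizing shift is easy, but one must rule out all smaller proper stabilizers, which reduces to a short finite case analysis on two-element subsets of $\Z/m\Z$. The conceptual weight of the proof lies in the bijection of Step 1.
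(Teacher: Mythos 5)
The paper does not prove this lemma at all: it is quoted verbatim from reference \cite{RefJ19} (Wang et al.) as a known result, so there is no in-paper proof to compare against. That said, your argument is correct and self-contained given the ingredients the paper does supply. The bijection in your Step 1 (division by $q+1$ carries $q$-cyclotomic cosets of $(q+1)$-divisible representatives modulo $q^m-1$ order-isomorphically onto cosets modulo $n$, hence preserves leaders and coset sizes) is exactly the ``Furthermore'' clause plus the size statement, and it is the same mechanism the paper itself invokes later when it computes $\delta_{2,n}$ in Lemmas \ref{lem16} and \ref{lem17} (``$a\in\MinRep_n$ iff $a(q+1)\in\MinRep_{q^m-1}$'', followed by a divisibility-by-$(q+1)$ elimination among the known largest coset leaders modulo $q^m-1$). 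Your reductions $\delta_{1,q^m-1}\equiv 1$, $\delta_{2,q^m-1}\equiv 1-(-1)^{s-1}$, $\delta_{3,q^m-1}\equiv 1-(-1)^{s}\pmod{q+1}$ are right (one should just note $q+1\geq 3$ so residues $1$ and $2$ are genuinely nonzero), and combined with Lemma \ref{lem2}(a) they correctly single out $\delta_{2,q^m-1}$ for odd $s$ and $\delta_{3,q^m-1}$ for even $s$. The orbit computation for the two-element digit-position sets $\{2s-1,s-1\}$ (stabilizer $\{0,s\}$, orbit size $s$) and $\{2s-1,s\}$ (trivial stabilizer for $s\geq 2$, orbit size $2s$) is also correct. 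The only caveats are cosmetic: the degenerate case $m=2$ (where $\ord_n(q)\neq m$) should be excluded, and the claim that the coset size equals the orbit size of the digit string under cyclic shift deserves one sentence of justification; neither affects the substance.
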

	\begin{lemma}(\cite{RefJ19},\cite{RefJ21})\label{lem4}
		Let $n=\frac{q^m-1}{q-1}$. Then
		\begin{itemize}
			\item [(a)] Let	$q>3$. For any integer $1\leq i\leq n-1$, take $i=(i_{m-1},i_{m-2},\ldots,i_0)$. If $i$ is a $q$-cyclotomic coset leader modulo $\frac{q^{m}-1}{q-1}$, then $i_{m-1}=0$.
			
			Furthermore, suppose $m-1=a(q-1)+b$, where $a\geq 1$ and $0\leq b\leq q-2$. Let $\epsilon=a+1$ when $b=q-2$ and $\epsilon=a$ when $0\leq b\leq q-3$. If $i_{l}=q-1$ for all $m-1-\epsilon\leq l \leq m-2$, then $1\leq i_{l-1}\leq i_{l}$ for all $1\leq l \leq m-2$.
			\item [(b)] Let $q\geq 3$ and $m\geq4$.
			\begin{itemize}
				\item [(b.1)] Let $q-1=mt_{1}+t_{2}$ and $\sum_{t=1}^{q-1}q^{\left\lceil \frac{mt}{q-1}-1\right\rceil}=\sum_{i=0}^{m-1}a_{i}q^{i}$, where $t_{1}\geq 0$ and $m>t_{2}\geq 0$.
				\begin{itemize}
					\item If $t_{2}=0$, then $a_{i}=\frac{q-1}{m}$ for all $i\in [0,m-1]$.
					\item If $t_{2}\neq0$, then $a_{i}=\lceil\frac{q-1}{m} \rceil$ when $i\in \Upsilon$, where $\Upsilon=\lbrace\lceil\frac{m\gamma}{t_{2}}-1\rceil,\gamma=1,2,\ldots,t_{2}\rbrace$. Otherwise, $a_{i}=\lfloor\frac{q-1}{m} \rfloor$.
				\end{itemize}
				\item [(b.2)] The first largest $q$-cyclotomic coset leader modulo $n$ is $\delta_{1,n}=\frac{q^{m}-1-\sum_{t=1}^{q-1}q^{\lceil \frac{mt}{q-1}-1\rceil}}{q-1}$ and
				\\ $\mid C_{\delta_{1,n}}\mid =\frac{m}{gcd(m,q-1)}$.
			\end{itemize}
		\end{itemize}
	\end{lemma}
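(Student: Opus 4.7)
The plan is to reduce everything to a study of the $q$-adic digit vector $(i_{m-1},\ldots,i_0)$ of a residue $1\leq i\leq n-1$, where $n=(q^m-1)/(q-1)$. Because $n(q-1)=q^m-1$, the projection $\mathbb{Z}/(q^m-1)\mathbb{Z}\twoheadrightarrow\mathbb{Z}/n\mathbb{Z}$ turns the clean cyclic-shift description of multiplication by $q$ modulo $q^m-1$ into a cyclic shift followed by subtracting a multiple of $n$, and the lemma amounts to controlling that correction term.

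For part (a), to show $i_{m-1}=0$ for coset leaders when $q>3$, I would assume $i_{m-1}\geq 1$ and compute $iq\bmod n$: it equals the shifted integer $(i_{m-2},\ldots,i_0,i_{m-1})$ in $[0,q^m-1)$ minus $c\,n$ for the appropriate $c\in\{0,\ldots,q-2\}$. A direct size comparison shows that for $q>3$ this is strictly less than $i$, contradicting coset-leader minimality. For the second assertion I would argue by contradiction: if some $i_{l-1}>i_l$ at a position below the block of leading $(q-1)$s, then a suitable power-of-$q$ shift produces a representative with a smaller leading pattern, which contradicts $i$ being a coset leader. The decomposition $m-1=a(q-1)+b$ and the two cases for $\epsilon$ should arise from counting how many cyclic rotations can be performed while keeping the leading $(q-1)$-block intact, with the $b=q-2$ case gaining one extra rotation because of how the last carry lands.

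For part (b.1), I would expand $s:=\sum_{t=1}^{q-1}q^{\lceil mt/(q-1)-1\rceil}$ by counting, for each exponent $j\in\{0,\ldots,m-1\}$, the number of indices $t$ with $\lceil mt/(q-1)-1\rceil=j$. The split $q-1=mt_1+t_2$ records the quotient and remainder of that count; the positions in $\Upsilon$ are exactly those receiving $\lceil (q-1)/m\rceil$ hits rather than $\lfloor (q-1)/m\rfloor$. For part (b.2), I would then verify that $\delta_{1,n}=(q^m-1-s)/(q-1)$ is an integer from the digit pattern in (b.1), confirm it is a coset leader by matching its digit vector to the constraints coming from (a), and establish maximality by arguing that any strictly larger candidate violates those constraints. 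The orbit size $|C_{\delta_{1,n}}|=m/\gcd(m,q-1)$ would follow by identifying the minimal $k$ with $\delta_{1,n}q^k\equiv\delta_{1,n}\pmod n$, which matches the period of the highly symmetric digit pattern produced in (b.1).

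The main obstacle will be the carry analysis in part (a): because the reduction modulo $n$ removes $cn$ for a shift-dependent $c$, the shifted representative is not a pure digit rotation, and the inequality ``shift $<i$'' must be established uniformly across the relevant range of $c$ and of digit configurations. The constants $a$, $b$, $\epsilon$ should drop out of this bookkeeping, and the cleanest handling will come from carefully tracking the interface between the leading $(q-1)$-block and the first non-$(q-1)$ digit below it, since that is where the carry from the mod-$n$ correction is sensitive to whether $b=q-2$ or $b\leq q-3$.
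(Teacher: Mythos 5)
The paper does not prove this statement: Lemma~\ref{lem4} is imported verbatim from \cite{RefJ19} and \cite{RefJ21}, so there is no in-paper proof to compare against. I can therefore only assess your outline on its own terms. Your basic device --- tracking the $q$-adic digit vector and noting that multiplication by $q$ modulo $n=\frac{q^m-1}{q-1}$ is a cyclic shift followed by subtraction of $cn$ --- is exactly the technique this paper itself deploys in Lemmas~\ref{lem9}, \ref{lem30}, \ref{lem31} and \ref{lem32}, and your treatment of the first claim of (a) is sound: if $i_{m-1}=1$, writing $i=q^{m-1}+r$ with $0\leq r<\frac{q^{m-1}-1}{q-1}$ gives $iq\bmod n=qr+1<i$, so $i$ is not a coset leader. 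Part (b.1) is indeed a routine count of how many $t$ satisfy $\left\lceil \frac{mt}{q-1}-1\right\rceil=j$ for each $j$.

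The genuine gap is in (b.2). First, a logical direction error: part (a) supplies \emph{necessary} conditions for being a coset leader, so ``matching the digit vector to the constraints coming from (a)'' cannot \emph{confirm} that $\delta_{1,n}$ is a coset leader; for that you must check directly that $[\delta_{1,n}q^{j}]_{n}\geq\delta_{1,n}$ for all $j$ (as the paper does for its own candidates via Lemma~\ref{lem30}), and necessary conditions only help with the converse task of excluding competitors. Second, that converse task --- showing every $s$ with $\delta_{1,n}<s<n$ fails to be a coset leader --- is where essentially all of the work lies, and your sketch reduces it to the single assertion that larger candidates ``violate those constraints''; the constraints of (a) alone do not suffice, as the paper's own proofs of the analogous $\delta_{2,n}$ statements (Lemmas~\ref{lem31}, \ref{lem32}) show by needing a further multi-page case analysis on the block lengths $n_{t,j}$ even \emph{after} invoking Lemma~\ref{lem4}(a). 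Third, (b) is claimed for $q\geq3$ while (a) holds only for $q>3$, so any argument for (b.2) that leans on (a) leaves $q=3$ uncovered. Finally, the orbit size $\mid C_{\delta_{1,n}}\mid=\frac{m}{\gcd(m,q-1)}$ is asserted rather than derived; it requires identifying the exact period of the digit pattern under the shift-and-subtract map, not merely its ``high symmetry.''
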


	\subsection{Known Results on the Dimension and Minimum Distance of BCH Codes}
	Given designed distance $\delta$, it is difficult  to determine the dimension  and minimum distance. But for some special BCH codes, $\dim(\C_{(q,n,\delta)})$ and $d(\C_{(q,n,\delta)})$ can be given. We list them as follows.
	\begin{lemma}(\cite{RefJ17})\label{lem5}
		Let $n$ be a positive integer such that  $q-1\mid n$ and $\gcd(n,q)=1$, let $\delta_{b}$ be a divisor of $ \frac{n}{q-1}$. Then for $\delta=k\delta_{b}$, $1\leq k\leq q-1$, the minimum distance of $\C_{(n,q,\delta)}$ is $\delta$.
	\end{lemma}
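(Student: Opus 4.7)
The plan is to prove $d(\C_{(q,n,\delta)}) = \delta$ by combining the BCH lower bound with an explicit codeword of weight at most $\delta$. The BCH bound gives $d(\C_{(q,n,\delta)}) \geq \delta$ for free, so all the content lies in exhibiting a nonzero codeword $c \in \C_{(q,n,\delta)}$ of Hamming weight at most $\delta$.

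Using $(q-1) \mid n$ and $\delta_b \mid \frac{n}{q-1}$, I set $L = \frac{n}{q-1}$ and $L' = \frac{L}{\delta_b}$, both positive integers. The element $\omega := \beta^L$ has multiplicative order $q-1$, and hence lies in $\F_q^*$ as a primitive element. My candidate codeword is
$$c(x) = \left(\sum_{s=0}^{\delta_b - 1} x^{s(q-1)L'}\right) \prod_{j=1}^{k-1}\left(x^{L'} - \omega^j\right).$$
Because $\omega \in \F_q$, we have $c(x) \in \F_q[x]$; the hypothesis $k \leq q-1$ gives $\deg c = L'(k-1) + (q-1)L'(\delta_b - 1) < (q-1)L'\delta_b = n$, so no reduction modulo $x^n-1$ is needed. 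Writing the product as $\tilde u(x^{L'})$ with $\tilde u(y) \in \F_q[y]$ of degree $k-1$, every exponent appearing in $c$ has the form $L'(a + s(q-1))$ with $0 \leq a \leq k-1 < q-1$ and $0 \leq s \leq \delta_b - 1$; these integers are pairwise distinct (reducing modulo $q-1$ recovers $a$), so the weight of $c$ is at most $k\delta_b = \delta$.

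It remains to check $c(\beta^i) = 0$ for $1 \leq i \leq \delta-1$, which I would do in two cases. If $\delta_b \nmid i$, then $\beta^{in/\delta_b}$ is a $\delta_b$-th root of unity different from $1$, so the geometric sum $\sum_{s=0}^{\delta_b-1}(\beta^{in/\delta_b})^s$ vanishes. If $\delta_b \mid i$, write $i = j\delta_b$ with $1 \leq j \leq k-1$; then $(\beta^i)^{L'} = \beta^{jL} = \omega^j$, so the factor $x^{L'} - \omega^j$ evaluates to zero. Hence $c \in \C_{(q,n,\delta)} \setminus \{0\}$, and the BCH bound forces $\mathrm{wt}(c) = \delta$, giving $d(\C_{(q,n,\delta)}) = \delta$. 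The main obstacle is discovering the codeword itself: one must split the zero-set condition on $\{1,\ldots,\delta-1\}$ into two nested scales, handled respectively by the $\delta_b$-periodic geometric sum and by the $\F_q$-rational product of linear factors whose existence relies crucially on $(q-1) \mid n$.
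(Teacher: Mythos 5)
Your proof is correct: the BCH bound gives $d\geq\delta$, and your codeword $\bigl(\sum_{s=0}^{\delta_b-1}x^{sn/\delta_b}\bigr)\prod_{j=1}^{k-1}(x^{L'}-\omega^{j})$ is a genuine nonzero element of $\C_{(q,n,\delta)}$ of weight at most $k\delta_b$, with the two factors handling the roots $\beta^{i}$ with $\delta_b\nmid i$ and $\delta_b\mid i$ respectively. Note that the paper itself offers no proof of Lemma \ref{lem5} — it is quoted from \cite{RefJ17} — and your construction is essentially the one used in that reference, so this is the same approach as the source rather than a new route.
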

	\begin{lemma}(\cite{RefJ18})\label{lem6}
		Let $n=\frac{q^{m}-1}{q+1}$ and $m=2s$. Then
		\begin{itemize}
			\item [(a)] Suppose $2\leq \delta \leq q^{s}+1$. For $q,s\geq3$ and $s$ is odd or $q=2$ and $s\geq5$, the dimension $k$ of $\C_{(q,n,\delta,1)}$ is given as follows:
			\begin{itemize}
				\item [(a.1)] If $2\leq \delta \leq \frac{q^{s}+1}{q+1}$, then $$k=n-2s(\delta-1)+2s\left\lfloor \frac{\delta-1}{q} \right\rfloor.$$
				\item [(a.2)] If $\frac{q^{s}+1}{q+1}+1\leq \delta \leq (q-1)\frac{q^{s}+1}{q+1}+1$, then $$k=n-2s(\delta-1)+2s\left\lfloor \frac{\delta-1}{q} \right\rfloor+s\left\lfloor\frac{(\delta-1)(q+1)}{q^{s}+1}\right\rfloor.$$
				\item [(a.3)] If $(q-1)\frac{q^{s}+1}{q+1}+2\leq \delta \leq \frac{q^{s+1}-1}{q+1}+2$, then $$k=n-2s(\delta-1)+2s\left\lfloor \frac{\delta-1}{q} \right\rfloor+s(q-1).$$
				\item [(a.4)] If $(q-1)q^{s-1}+\frac{q^{s}+1}{q+1}+1\leq \delta \leq q^{s}+1$, then $$k=n-2s(\delta-1)+2s\left\lfloor \frac{\delta-1}{q} \right\rfloor+3s(q-1).$$
			\end{itemize}
			\item [(b)] Suppose   $2\leq \delta \leq \lceil \frac{q}{2}\rceil q^{s-1}+1$.   For $s\geq 4$ and is even, the dimension $k$ of $\C_{(q,n,\delta)}$ is given as follows:
			\begin{itemize}
				\item [(b.1)] If $2\mid q $, then $$k=n- 2s(\delta-1)+ 2s\left\lfloor \frac{\delta-1}{q} \right\rfloor.$$
				\item [(b.2)] If $2\nmid q$, then
				\begin{center}
					$k$ =$\begin{cases}
						n- 2s(\delta-1)+ 2s\left\lfloor \frac{\delta-1}{q} \right\rfloor,& \text{if $2\leq \delta \leq \frac{q^{s}+1}{2}$}; \\
						n- 2s(\delta-1)+ 2s\left\lfloor \frac{\delta-1}{q} \right\rfloor+s,& \text{if $\frac{q^{s}+1}{2}+1\leq \delta \leq \frac{q+1}{2}q^{s-1}+1$}.
					\end{cases}$
				\end{center}
			\end{itemize}
		\end{itemize}
	\end{lemma}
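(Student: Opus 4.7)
The plan is to compute $\dim \C_{(q,n,\delta,1)} = n - |T|$ directly from the defining set $T = \bigcup_{i=1}^{\delta-1} C_i$, where $C_i$ is the $q$-cyclotomic coset of $i$ modulo $n=(q^{2s}-1)/(q+1)$. Since $|T| = \sum_{l} |C_l|$ summed over the coset leaders $l\in[1,\delta-1]$, the problem reduces to (i) identifying the coset leaders in this interval and (ii) computing the size of each. The principal transfer tool is Lemma~\ref{lem3}: a multiple $(q+1)l$ is a coset leader modulo $q^{2s}-1$ if and only if $l$ is a coset leader modulo $n$, and the associated coset sizes agree. This lets me lift the enumeration to the ambient modulus $q^{2s}-1$ and apply the classical coset-leader results available there.

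First I would dispose of the bulk contribution. By Lemma~\ref{lem1}, every $1\leq i\leq q^s/(q+1)$ with $q\nmid i$ is a coset leader modulo $n$ of full size $|C_i|=2s$, while any $i$ divisible by $q$ is automatically not a leader (since $i/q \in C_i$ and $i/q<i$). Hence $|T| = 2s(\delta-1) - 2s\lfloor(\delta-1)/q\rfloor$ whenever $\delta-1 \leq q^s/(q+1)$, which, after confirming that no short cosets lurk in this range, yields parts (a.1) and (b.1).

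For the larger ranges I would catalogue the \emph{short} (size-$s$) coset leaders. Lemma~\ref{lem2}(c.1) identifies $c(q^s+1)$, $1\leq c\leq q-1$, as size-$s$ leaders modulo $q^{2s}-1$. When $s$ is odd, $(q+1)\mid c(q^s+1)$ for every such $c$, so Lemma~\ref{lem3} lifts these to size-$s$ leaders $c(q^s+1)/(q+1)$ modulo $n$; counting those lying in $[1,\delta-1]$ gives exactly $\lfloor(\delta-1)(q+1)/(q^s+1)\rfloor$, and each such leader contributes $s$ to $|T|$ instead of the ``default'' $2s$. This replacement produces the correction $+s\lfloor(\delta-1)(q+1)/(q^s+1)\rfloor$ in (a.2); once $\delta$ saturates this family, the correction stabilizes at $+s(q-1)$, giving (a.3); in (a.4) further size-$s$ leaders generated by related $q$-adic patterns (rotations and shifts of $c(q^s+1)$) enter once $\delta$ exceeds $(q-1)q^{s-1}$, producing $+3s(q-1)$. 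For even $s$ the family $c(q^s+1)/(q+1)$ is no longer integral, so only isolated short leaders can contribute; in particular, for odd $q$, the element $(q^s+1)/2$ accounts for the lone $+s$ correction in (b.2).

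The hard part will be \emph{completeness}: showing that the catalogued leaders exhaust the coset leaders in $[1,\delta-1]$ and that every other coset has full size $2s$. Excluding phantom leaders uses Lemma~\ref{lem2}(c.2)--(c.3) to rule out representatives of the form $a=a_sq^s+a_0$ with $a_0<a_s$ via the lifted correspondence, and excluding unnoticed short cosets requires a careful $q$-adic argument to show that no size-$s$ leader hides between Lemma~\ref{lem1}'s guaranteed range $q^s/(q+1)$ and the first special value $(q^s+1)/(q+1)$. The delicate boundary cases -- whether endpoints such as $(q-1)(q^s+1)/(q+1)+1$ or $(q-1)q^{s-1}+(q^s+1)/(q+1)+1$ are themselves leaders, and how to partition the argument cleanly across (a.1)--(a.4) versus (b.1)--(b.2) -- are where the combinatorial bookkeeping becomes genuinely delicate and constitute the main technical burden of the proof.
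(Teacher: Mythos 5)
This statement is Lemma~\ref{lem6}, which the paper imports verbatim from \cite{RefJ18} and does not prove; there is no in-paper argument to compare yours against. Judged on its own terms, your outline follows the strategy that the paper itself uses for its \emph{new} results of this type (e.g.\ Lemma~\ref{lem23} and Theorem~\ref{th24}): write $\dim\C_{(q,n,\delta)}=n-|T|$ with $|T|=\sum_{l\in\MinRep_n\cap[1,\delta-1]}|C_l|$, lift everything to the modulus $q^{2s}-1$ via the correspondence in Lemma~\ref{lem3}, use Lemma~\ref{lem1} for the bulk range, and use Lemma~\ref{lem2}(c) to classify the multiples of $q+1$ in $[q^s+1,q^{s+1}]$. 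Your parity computations are correct where you make them explicit: for odd $s$ one has $(q+1)\mid c(q^s+1)$ for every $c$, giving the $q-1$ short leaders behind (a.2)--(a.3), while for even $s$ one gets $c(q^s+1)\equiv 2c\pmod{q+1}$, so only $c=\tfrac{q+1}{2}$ (odd $q$) survives, which is exactly the lone $+s$ correction at $\delta-1=\tfrac{q^s+1}{2}$ in (b.2).

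That said, what you have is a plan, not a proof, and the parts you defer are precisely where the work lies. Concretely: (i) in (b.1) the range $\delta-1\le\lceil\tfrac{q}{2}\rceil q^{s-1}$ extends well past Lemma~\ref{lem1}'s guaranteed interval $[1,\tfrac{q^s}{q+1}]$, so you must show via Lemma~\ref{lem2}(c.2) that the lifted non-leaders $a=a_sq^s+a_0$ with $(q+1)\mid a$ (which for even $s$ forces $a_s+a_0=q+1$) all land strictly above $(q+1)\lceil\tfrac{q}{2}\rceil q^{s-1}$ --- a boundary check you do not perform; (ii) in (a.4) the lifted range $(q+1)(\delta-1)$ reaches $q^{s+1}+q^s$, beyond the interval $[q^s+1,q^{s+1}]$ covered by Lemma~\ref{lem2}(c), so the jump from $+s(q-1)$ to $+3s(q-1)$ cannot be extracted from the lemmas quoted in this paper at all; your appeal to ``rotations and shifts of $c(q^s+1)$'' is not an argument. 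A complete proof would need the finer coset-leader classification of \cite{RefJ18} for that upper range, or an independent $q$-adic analysis of the representatives $a\in(q^{s+1},q^{s+1}+q^s]$ with $(q+1)\mid a$.
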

	
	\subsection{Known Results on Dually-BCH Codes}
For special designed distance of narrow-sense  BCH codes,
		the following two lemmas give the sufficient and necessary conditions on dually-BCH codes.
	\begin{lemma}(\cite{RefJ18})\label{lem7}
		Let $n=q^{m}-1$.
		\begin{itemize}
			\item If $q=2$ and $m\geq 6$, then $\C_{(2,2^{m}-1,\delta)}$ is a dually-BCH code if and only if
			\begin{center}
				$\delta=2,3,$ or $2^{m-1}-2^{\lfloor \frac{m-1}{2} \rfloor}\leq \delta \leq n$.
			\end{center}
			\item If $q\geq 3$ and $m\geq 2$, then $\C_{(q,q^{m}-1,\delta)}$ is a dually-BCH code if and only if
			\begin{center}
				$\delta=2$ or $(q-1)q^{m-1}-q^{\lfloor \frac{m-1}{2} \rfloor}\leq \delta \leq n$.
			\end{center}
		\end{itemize}
	\end{lemma}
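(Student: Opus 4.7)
The plan is to reduce dually-BCH to a combinatorial property of $T^{-1}$ and verify it across the three regimes of $\delta$.

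For $\C_{(q,q^m-1,\delta)}$ narrow-sense we have $T=\bigcup_{i=1}^{\delta-1}C_i$ and $T^{-1}=\bigcup_{i=1}^{\delta-1}C_{-i}$. Since $1\in T$ forces $n-1\in T^{-1}$, any interval $[b,b+\delta'-2]$ witnessing $T^{\perp}=\bigcup_{i=b}^{b+\delta'-2}C_i$ cannot wrap around $0$, and since $0\in T^{\perp}$ the interval must start at $b=0$. Writing $L_{\mathrm{in}}:=\{CL(l):C_l\subseteq T^{-1}\}$, the dually-BCH property is then equivalent to $L_{\mathrm{in}}$ being an \emph{upper segment} of $\MinRep_n$; that is, $\max(\MinRep_n\setminus L_{\mathrm{in}})<\min L_{\mathrm{in}}$.

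For sufficiency I handle the extreme regimes directly. When $\delta=2$, $T^{-1}=C_{-1}$ has leader $\delta_{1,n}=(q-1)q^{m-1}-1$ by Lemma~\ref{lem2}(a), so $L_{\mathrm{in}}=\{\delta_{1,n}\}$ and $T^{\perp}=\bigcup_{i=0}^{\delta_{1,n}-1}C_i$. When $\delta\ge\delta_{2,n}+1=(q-1)q^{m-1}-q^{\lfloor(m-1)/2\rfloor}$, Lemma~\ref{lem2}(a) leaves $\delta_{1,n}$ as the unique coset leader in $(\delta_{2,n},n)$, giving $T=\Z_n\setminus(\{0\}\cup C_{\delta_{1,n}})$ for $\delta\le\delta_{1,n}$ and $T=\Z_n\setminus\{0\}$ otherwise. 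Using $-\delta_{1,n}\equiv q^{m-1}\in C_1\pmod{n}$, this yields $T^{\perp}=C_0\cup C_1$ or $T^{\perp}=C_0$, each a union of consecutive cosets.

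For necessity assume $q\ge3$ and $3\le\delta\le\delta_{2,n}$. The coset $C_{-2}\subseteq T^{-1}$ has leader $(q-2)q^{m-1}-1<\delta_{2,n}$, so the upper-segment property would force $C_{\delta_{2,n}}\subseteq T^{-1}$, i.e.\ $-\delta_{2,n}\in T$. Computing $-\delta_{2,n}\equiv q^{m-1}+q^{\lfloor(m-1)/2\rfloor}\pmod{n}$ and minimizing its cyclic shifts gives $CL(-\delta_{2,n})=1+q^{\lfloor m/2\rfloor}$, so $C_{\delta_{2,n}}\subseteq T^{-1}$ only when $\delta\ge 2+q^{\lfloor m/2\rfloor}$. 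I would iterate this scheme on $\delta_{3,n},\delta_{4,n},\ldots$ --- using Lemma~\ref{lem1} together with the paper's explicit formula for $\delta_{i,n}$ in contribution~(A) --- and at each $\delta\in[3,\delta_{2,n}]$ exhibit a ``gap'' coset leader lying above $\min L_{\mathrm{in}}$ yet absent from $L_{\mathrm{in}}$. The $q=2$ case is special because $-2=(-1)\cdot 2\in C_{-1}$, so $T^{-1}$ coincides for $\delta=2$ and $\delta=3$ (giving $\delta=3$ dually-BCH automatically); the gap argument then resumes at $\delta=4$ with the leader $2^{m-2}-1$ of $C_{-3}$.

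The main obstacle will be the systematic production of the missing coset leader across the whole interval $[3,\delta_{2,n}]$: although $L_{\mathrm{in}}$ grows monotonically with $\delta$, new elements enter in the order $CL(-2), CL(-3), \ldots$, which does not respect the coset-leader ordering on $\MinRep_n$, so a careful case analysis based on the $q$-adic digit pattern of $\delta-1$ is required, particularly near the transitions $\delta=2+q^{\lfloor m/2\rfloor}$ and $\delta\to\delta_{2,n}$ where several coset leaders simultaneously change status.
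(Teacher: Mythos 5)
This lemma is quoted from the literature (the citation points to \cite{RefJ18}; the result is due to Gong, Ding and Li), and the paper supplies no proof of it, so there is no in-paper argument to compare against; I can only assess your proposal on its own terms. Your reduction is sound: since $0\in T^{\perp}$ and $n-1\in T^{-1}$ forces $C_{n-1}\not\subseteq T^{\perp}$, any witnessing interval must start at $b=0$, and dually-BCH is indeed equivalent to $L_{\mathrm{in}}$ being an upper segment of $\MinRep_n$. The sufficiency direction is essentially complete and correct: for $\delta=2$ (and $\delta=3$ when $q=2$) one gets $T^{-1}=C_{n-1}$ with leader $\delta_{1,n}$, and for $\delta\geq\delta_{2,n}+1$ one gets $T^{\perp}=C_0\cup C_1$ or $C_0$ via $n-\delta_{1,n}=q^{m-1}\in C_1$.

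The necessity direction, however, has a genuine gap, and you have correctly located it but not closed it. Your argument via $C_{-2}$ (resp.\ $C_{-3}$ for $q=2$) and $C_{\delta_{2,n}}$ only rules out $3\leq\delta\leq q^{\lfloor m/2\rfloor}+1$, since $CL(n-\delta_{2,n})=1+q^{\lfloor m/2\rfloor}$. The remaining range $q^{\lfloor m/2\rfloor}+2\leq\delta\leq\delta_{2,n}$ is by far the larger part of the interval, and the iteration you propose over $\delta_{3,n},\delta_{4,n},\ldots$ cannot supply the missing witness there: one computes $n-\delta_{i,n}=q^{m-1}+q^{\lfloor(m-1)/2\rfloor+i-2}$, whence $CL(n-\delta_{i,n})=1+q^{\lfloor m/2\rfloor-i+2}$, which \emph{decreases} in $i$. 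So as soon as $\delta\geq q^{\lfloor m/2\rfloor}+2$, every coset $C_{\delta_{i,n}}$ in the range covered by the formula of Lemma \ref{lem9} already lies in $T^{-1}$ and none of them can serve as the ``gap'' leader; moreover that formula only reaches $i\leq m-(\lfloor\frac{m-1}{2}\rfloor+\lfloor\frac{m-3}{3}\rfloor)$, far short of the region near $\min L_{\mathrm{in}}\approx(q-2)q^{m-1}-1$. The actual published proof has to manufacture, for each $\delta$ in this middle range, a pair of integers $u\in T$, $v\notin T$ with $CL(n-u)<CL(n-v)$ drawn from coset leaders that are not among the first few $\delta_{i,n}$ at all; this digit-pattern construction is the combinatorial core of the theorem and is precisely the step your proposal leaves open.
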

	
	\begin{lemma}(\cite{RefJ19})\label{lem8}
		Let $n=\frac{q^{m}-1}{q+1}$ and $\delta_{1,n}$ is given in Lemma \ref{lem3}.
			\begin{itemize}
				\item [(1)]If $q=2$ and $m\geq 4$ is even, then $\C_{(q,n,\delta)}$ is a dually-BCH code if and only if $$\delta_{1,n}+1\leq \delta\leq n .$$
				\item [(2)] If $q>2$ and $m=4$, then $\C_{(q,n,\delta)}$ is a dually-BCH code if and only if $$\delta=2,\,\delta_{1,n}\leq \delta\leq n .$$
				\item [(3)] If $q>2$ and $m\neq4$ is even, then $\C_{(q,n,\delta)}$ is a dually-BCH code if and only if $$\delta_{1,n}+1\leq \delta\leq n .$$
			\end{itemize}
		In addition, let $q\geq 3$, $m\geq 4$, $n=\frac{q^{m}-1}{q-1}$ and $\delta_{1,n}$ is given in Lemma \ref{lem4}. Then $\C_{(q,n,\delta)}$ is a dually-BCH code if and only if $$\delta_{1,n}+1\leq \delta\leq n.$$
	
	\end{lemma}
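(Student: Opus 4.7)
The plan is to reduce ``dually-BCH'' to a combinatorial condition on cyclotomic cosets and then verify it by case analysis. Let $T=\bigcup_{i=1}^{\delta-1}C_i$ be the defining set of $\C_{(q,n,\delta)}$; then $\C^{\perp}_{(q,n,\delta)}$ has defining set $T^{\perp}=\Z_n\setminus T^{-1}$, and $0\in T^{\perp}$ whenever $\delta\ge 2$. The dual is BCH with respect to $\beta$ exactly when $T^{\perp}$ is a union of cyclotomic cosets $\bigcup_{i\in I}C_i$ indexed by a cyclic interval $I\subseteq\Z_n$ containing $0$. So the task becomes: for which $\delta$ do the cosets missing from $T^{-1}$ form such a cyclic-interval-of-cosets family anchored at $0$?

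For sufficiency with $\delta\ge\delta_{1,n}+1$, Lemmas~\ref{lem3} and~\ref{lem4} identify $\delta_{1,n}$ as the largest nonzero coset leader, so every nonzero coset $C_s$ satisfies $s\le\delta_{1,n}\le\delta-1$, placing the whole of $\Z_n\setminus\{0\}$ inside $T$. Then $T^{-1}=\Z_n\setminus\{0\}$ and $T^{\perp}=\{0\}=C_0$, which is trivially BCH. For the exceptional entries in case~(2) with $q>2,m=4$: at $\delta=2$, $T=C_1$ so $T^{-1}=C_{n-1}$, and using $n=(q-1)(q^2+1)$ one checks that $C_{n-1}=\{n-q^2,\,n-q^2+q-1,\,n-q,\,n-1\}$ has leader $n-q^2$ while every other nonzero coset has leader in $[1,n-q^2-1]$; hence $T^{\perp}=\bigcup_{i=0}^{n-q^2-1}C_i$ is BCH. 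At $\delta=\delta_{1,n}$, a similar computation (using that $n-\delta_{1,n}=q^2$ sits in $C_1$ for $m=4$) gives $T^{\perp}=C_0\cup C_1$, again BCH via the interval $[0,1]$. Both coincidences break for $m\ge 6$, where $n-\delta_{1,n}$ lands in a coset whose leader is strictly between $1$ and $\delta_{1,n}-1$, preventing the interval adjacency; and they break for $q=2$ in case~(1) because Lemma~\ref{lem3} gives $\mid C_{\delta_{1,n}}\mid=m/2$, shifting the critical zone.

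For necessity, when $\delta$ lies below the stated threshold (and is not the exceptional $\delta=2$ in case~(2)), I exhibit three cyclic indices $i_1<j<i_2$ with $i_1,i_2\in T^{\perp}$ and $j\notin T^{\perp}$, which forbids $T^{\perp}$ from being a cyclic interval of cosets. Concretely, $n-\delta_{1,n}\in T^{\perp}$ because $\delta_{1,n}\notin T$, while some intermediate $j$ between $n-\delta_{1,n}$ and $n-1$ is forced into $T^{-1}$ by the explicit top-coset structure of Lemmas~\ref{lem3} and~\ref{lem4}. The $n=\frac{q^{m}-1}{q-1}$ half is handled by the same mechanism and carries no $m=4$ exception, because the $q$-adic-digit-sum form of $\delta_{1,n}$ in Lemma~\ref{lem4}(b.2) prevents $C_{n-\delta_{1,n}}$ from having leader $1$. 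The chief technical obstacle is pinning down precisely which cosets land in the critical zone $[n-\delta_{1,n},n-1]$, which splits according to the parity of $s$ in Lemma~\ref{lem3} and the value of $\gcd(m,q-1)$ in Lemma~\ref{lem4}, together with separate short-coset checks when $q=2$ or $m=4$.
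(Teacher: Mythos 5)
The paper does not actually prove this lemma; it is imported verbatim from \cite{RefJ19}, so there is no internal proof to measure you against. Judged on its own terms, your sketch has the right overall shape: the reduction of ``dually-BCH'' to a condition on $T^{\perp}$, the sufficiency argument for $\delta\geq\delta_{1,n}+1$ (all nonzero cosets absorbed into $T$, so $T^{\perp}=C_0$), and the two exceptional $m=4$ computations are all correct --- in particular the identities $n-q^{2}=\delta_{1,n}$ and $C_{q^{2}}=C_{1}$ for $m=4$ are exactly what makes $\delta=2$ and $\delta=\delta_{1,n}$ survive there.

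The necessity direction, however, contains a genuine logical flaw together with a deferred core step. Exhibiting $i_1<j<i_2$ with $i_1,i_2\in T^{\perp}$ and $j\notin T^{\perp}$ does \emph{not} forbid $\C^{\perp}_{(q,n,\delta)}$ from being BCH: a BCH defining set is $\bigcup_{i=b}^{b+\delta'-2}C_i$, where only the \emph{index} range is an interval, while the resulting union of cosets is generally far from an interval as a subset of $\Z_n$. Your own case~(2) computation illustrates this: $T^{\perp}=C_0\cup C_1$ is a legitimate BCH defining set even though $1,q\in T^{\perp}$ and $2\notin T^{\perp}$ for $q>2$. The correct obstruction is that some coset $C_a\subseteq T^{\perp}$ fails to meet the maximal run of consecutive residues around $0$ that lies inside $T^{\perp}$; the standard way to run this (and the way the present paper argues the $b=2$ analogues in Theorems~\ref{th12} and~\ref{th28}) is to show that $T^{\perp}$ would have to equal $C_0\cup C_1\cup\cdots\cup C_{r-1}$ for some $r$, observe that $C_{\delta_{1,n}}\subseteq T^{\perp}$ forces $r-1\geq\delta_{1,n}$ and hence $|T^{\perp}|=n$, and contradict $|T|\geq m$ via $\dim(\C)+\dim(\C^{\perp})=n$. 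Finally, the step you yourself label the ``chief technical obstacle'' --- pinning down which cosets occupy $[n-\delta_{1,n},n-1]$, split by the parity of $s$ and by $\gcd(m,q-1)$ --- is precisely where the content of the lemma lives, and it is not carried out; as written the argument neither rules out the intermediate range $3\leq\delta<\delta_{1,n}$ nor verifies that $\delta=2$ fails when $q=2$ or when $n=\frac{q^{m}-1}{q-1}$.
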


\section{The case of $n=q^{m}-1$}
In this section, we always assume $n=q^m-1$.
\subsection{The computation of $\delta_{i,n}$}

\begin{lemma}\label{lem9}
Let $q$ be a prime power and $3\leq i\leq m-(\left\lfloor \frac{m-1}{2} \right\rfloor +\left\lfloor \frac{m-3}{3} \right\rfloor)$,
	\[\delta_{i,n}=(q-1)q^{m-1}-1-q^{\lfloor \frac{m-1}{2} \rfloor +i-2},\]
and $\mid C_{\delta_{i,n}}\mid=m$.
\begin{proof}
Obviously, the case of $i=3$ agrees with
	 Lemma \ref{lem2}. By induction, suppose
	 $\delta_{t,n}=(q-1)q^{m-1}-1-q^{\lfloor \frac{m-1}{2} \rfloor +t-2}$, then we need to prove $\delta_{t+1,n}=(q-1)q^{m-1}-1-q^{\lfloor \frac{m-1}{2} \rfloor +t-1}$ for $m-(\left\lfloor \frac{m-1}{2} \right\rfloor +\left\lfloor \frac{m-3}{3} \right\rfloor)\geq t+1>t\geq 2$. In order to prove this, we divide it into two steps.

Step 1. We claim that  $\delta_{i,n}$ is a coset leader for $3\leq i\leq m-(\left\lfloor \frac{m-1}{2} \right\rfloor +\left\lfloor \frac{m-3}{3} \right\rfloor)$. Clearly,
$$\delta_{i,n}=(q-2,\underbrace{q-1,\ldots,q-1}_{m-\left\lfloor\frac{m-1}{2}\right\rfloor-i},q-2,\underbrace{q-1,\ldots,q-1}_{\left\lfloor \frac{m-1}{2}\right\rfloor+i-2}) .$$
Since $i\geq 3$, then $m-\left\lfloor \frac{m-1}{2}\right\rfloor-i< \left\lfloor \frac{m-1}{2}\right\rfloor+i-2$. Note that
\begin{center}
	$\begin{cases}
		[\delta_{i}q]_{n}=(\underbrace{q-1,\ldots,q-1}_{m-\left\lfloor\frac{m-1}{2}\right\rfloor-i},q-2,\underbrace{q-1,\ldots,q-1}_{\left\lfloor \frac{m-1}{2}\right\rfloor+i-2},q-2),\\
		[\delta_{i}q^{2}]_{n}=(\underbrace{q-1,\ldots,q-1}_{m-\left\lfloor\frac{m-1}{2}\right\rfloor-i-1},q-2,\underbrace{q-1,\ldots,q-1}_{\left\lfloor \frac{m-1}{2}\right\rfloor+i-2},q-2,q-1),\\
		\cdots\\
		[\delta_{i}q^{m-\left\lfloor\frac{m-1}{2}\right\rfloor-i}]_{n}=(q-2,\underbrace{q-1,\ldots,q-1}_{\left\lfloor\frac{m-1}{2}\right\rfloor+i-2},q-2,\underbrace{q-1,\ldots,q-1}_{m-\left\lfloor\frac{m-1}{2}\right\rfloor-i}),\\
		\cdots\\
		[\delta_{i}q^{m-1}]_{n}=(q-1,q-2,\underbrace{q-1,\ldots,q-1}_{m-\left\lfloor\frac{m-1}{2}\right\rfloor-i},q-2,\underbrace{q-1,\ldots,q-1}_{\left\lfloor \frac{m-1}{2}\right\rfloor+i-3}).
	\end{cases}$
\end{center}
Then  $[\delta_{i,n}q^{j}]_{n}> \delta_{i,n}$ for any $1\leq j\leq m-1$, this implies that $\delta_{i,n}$ is a coset leader and $\mid C_{\delta_{i,n}}\mid=m$ for $3\leq i\leq m-(\left\lfloor \frac{m-1}{2} \right\rfloor +\left\lfloor \frac{m-3}{3} \right\rfloor)$.

Step 2: For $m-\left\lfloor \frac{m-1}{2} \right\rfloor-t> \left\lfloor \frac{m-3}{3} \right\rfloor$, we claim that $J_{i}:=\delta_{t,n}-i$ is not a coset leader for any $1\leq i\leq (q-1)q^{\lfloor \frac{m-1}{2} \rfloor +t-2}-1$.

Take $h=\lfloor \frac{m-1}{2} \rfloor +t-2$. Note that $\delta_{t,n}-\delta_{t+1,n}=q^{h+1}-q^{h}=(q-1)q^{h}$ and
\begin{center}
	$(q-1)q^{h}-1=(q-2)q^{h}+(q-1)q^{h-1}+(q-1)q^{h-2}+\cdots+(q-1)q+q-1$.
\end{center}
For any $1\leq i\leq (q-1)q^{h}-1$, let
\begin{center}
	$i:=i_{h}q^{h}+i_{h-1}q^{h-1}+\cdots+i_{1}q+i_{0}$,
\end{center}
then $0\leq i_{j}\leq q-1$ for any $0\leq j\leq h-1$ and $0\leq i_{h}\leq q-2$, and there is at least an $i_j\neq0$ for $j\in\{0,1,\ldots,h\}$.
Thus,
	\begin{align}	J_{i}=&(q-2)q^{m-1}+(q-1)q^{m-2}+\cdots+(q-1)q^{h+1}+(q-2-i_{h})q^{h}
		+(q-1-i_{h-1})q^{h-1}\notag \\
		&+\cdots+(q-1-i_{1})q+(q-1-i_{0}).\label{eq:1}
	\end{align}
	For $q=2$, we have $i_{h}=0$ and
\begin{center}
	$J_{i}=2^{m-2}+2^{m-3}+\cdots+2^{h+1}+(1-i_{h-1})2^{h-1}+\cdots+(1-i_{1})2+1-i_{0}$.
\end{center}

If $i_{0}=1$, then $\frac{J_{i}}{2}$ and $J_{i}$ are in the same cyclotomic coset and $\frac{J_{i}}{2}<J_{i}$. Hence, $J_{i}$ cannot be a coset leader.

If $i_{0}=0$, suppose $l$ is the largest integer such that $i_{l}=1$, $1\leq l\leq t-1$. Then
\begin{center}
	$J_{i}=(0,\underbrace{1,1,\ldots,1}_{m-2-h},0,\underbrace{1,1,\ldots,1}_{h-l-1},0,\underbrace{1-i_{l-1},\ldots,1-i_{1},1}_{l})$.
\end{center}
Note that $m-2-h=m-\left\lfloor\frac{m-1}{2}\right\rfloor-t $ and $h-l-1=\left\lfloor \frac{m-1}{2}\right\rfloor+t-3-l$.
Since $m-(\left\lfloor \frac{m-1}{2} \right\rfloor +\left\lfloor \frac{m-3}{3} \right\rfloor)\geq t+1$, then $m-\left\lfloor \frac{m-1}{2} \right\rfloor-t\geq \left\lfloor \frac{m-3}{3} \right\rfloor+1> \left\lfloor \frac{m-3}{3} \right\rfloor$.

Note that
\begin{equation}\label{eq:2}
	h-l+1+l=m-3-(m-\left\lfloor \frac{m-1}{2} \right\rfloor-t)<m-3-\left\lfloor \frac{m-3}{3} \right\rfloor\leq 2\left\lfloor\frac{m-3}{3}\right\rfloor.\end{equation}
Then $h-l+1\leq \left\lfloor\frac{m-3}{3}\right\rfloor$ if $h-l+1\leq l$ or $l\leq \left\lfloor\frac{m-3}{3}\right\rfloor$ if $h-l+1\geq l$, then
 $m-2-h> h-l-1$ or $m-2-h>l$.

If $m-2-h> h-l-1$, then
\begin{equation}\label{eq:3}
	[J_{i}q^{m-1-h}]_{q^{m}-1}=
(0,\underbrace{1,1,\ldots,1}_{h-l-1},0,\underbrace{1-i_{l-1},
\ldots,1-i_{1},1}_{l},0,\underbrace{1,1,\ldots,1}_{m-2-h})<J_i.
\end{equation}

If $m-2-h>l$, then
\begin{equation}\label{eq:4}
	[J_{i}q^{m-1-l}]_{q^{m}-1}=(0,\underbrace{1-i_{l-1},
\ldots,1-i_{1},1}_{l},0,\underbrace{1,1,\ldots,1}_{m-2-h},0,\underbrace{1,1\ldots,1}_{h-l-1})<J_i.
\end{equation}
By Eqs.~\eqref{eq:3} and ~\eqref{eq:4}, we know that $J_{i}$ cannot be a coset leader.
\\For $q>2$, by Eq.~\eqref{eq:1},
\begin{itemize}
  \item If $i_{h}\geq 1$, we have $q-2>q-2-i_{h}$, then $J_{i}q^{m-1-h}\bmod{q^{m}-1}<J_{i}$.
  \item If there exists an integer $j$ such that $i_{j}\geq 2$, $0\leq j\leq t-1$, we have $q-2>q-1-i_{j}$, then $J_{i}q^{m-1-j}\bmod {q^{m}-1}<J_{i}$.
\end{itemize} Thus, $J_{i}$ cannot be a coset leader. Next we consider $i_{h}=0$ and $i_{j}\in \lbrace 0,1\rbrace$ for any $0\leq j\leq t-1$.

Note that $i\geq1$, suppose $l$ is the largest index such that $i_{l}=1$, $1\leq l\leq t-1$. Then
\begin{equation*}
	J_{i}=(q-2,\underbrace{q-1,\ldots,q-1}_{m-2-h},q-2,\underbrace{q-1,\ldots,q-1}_{h-l-1},	q-2,\underbrace{q-1-i_{l-1},\ldots,q-1-i_{1},q-1-i_{0}}_{l}).
\end{equation*}
By Eqs.~\eqref{eq:2},~\eqref{eq:3} and ~\eqref{eq:4},
$J_{i}$ cannot be a coset leader.
Thus we complete the proof.
\end{proof}	
\end{lemma}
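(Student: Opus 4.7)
I would prove the statement by induction on $i$, with base case $i=3$ supplied by Lemma \ref{lem2}(a). For the inductive step, assuming that $\delta_{t,n}=(q-1)q^{m-1}-1-q^{\lfloor(m-1)/2\rfloor+t-2}$ has been established as the $t$-th largest coset leader, it suffices to verify two things: (i) the candidate $\delta_{t+1,n}$ is itself a coset leader of coset size $m$, and (ii) no integer strictly between $\delta_{t+1,n}$ and $\delta_{t,n}$ is a coset leader. Together these two statements force $\delta_{t+1,n}$ to be the $(t+1)$-th largest.

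For (i), I would write out the $q$-adic expansion of $\delta_{t+1,n}$ as a length-$m$ digit string of the shape $(q-2,q-1,\ldots,q-1,q-2,q-1,\ldots,q-1)$, with the two copies of $q-2$ located at positions that depend on $t+1$ and are strictly separated by runs of $q-1$'s (the bound on $i$ guarantees $m-\lfloor(m-1)/2\rfloor-(t+1) < \lfloor(m-1)/2\rfloor + (t+1) - 2$). Then I would compute each cyclic shift $[\delta_{t+1,n}q^j]_n$ for $1\le j\le m-1$ and compare its leading digit to the leading digit $q-2$ of $\delta_{t+1,n}$; because only two of the $m$ digits equal $q-2$ and the remaining $m-2$ equal $q-1$, any nontrivial shift moves a $q-1$ to the top position, making $[\delta_{t+1,n}q^j]_n>\delta_{t+1,n}$. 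This simultaneously yields $|C_{\delta_{t+1,n}}|=m$.

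For (ii), I would write an arbitrary candidate in the window as $J_i=\delta_{t,n}-i$ with $1\le i\le (q-1)q^h-1$ where $h=\lfloor(m-1)/2\rfloor+t-2$, expand $i$ in base $q$, and obtain the $q$-adic expansion of $J_i$ in closed form (as in \eqref{eq:1}). The strategy is to exhibit, in every case, a cyclic shift of $J_i$ strictly smaller than $J_i$. When $q>2$, if some digit $i_j\ge 2$ or $i_h\ge 1$ then the digit $q-1-i_j$ (resp.\ $q-2-i_h$) is strictly less than $q-2$, and shifting to put that digit in the leading position decreases $J_i$. The remaining subcase ($i_h=0$ and all $i_j\in\{0,1\}$) is handled by the same argument used for $q=2$: let $l$ be the largest index with $i_l=1$, then use the bound $m-(\lfloor(m-1)/2\rfloor+\lfloor(m-3)/3\rfloor)\ge t+1$ to derive $h-l+1+l<2\lfloor(m-3)/3\rfloor$, which forces one of $m-2-h>h-l-1$ or $m-2-h>l$ to hold; the corresponding shift (by $q^{m-1-h}$ or by $q^{m-1-l}$) then produces a value smaller than $J_i$.

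The main obstacle is the arithmetic in step (ii): showing that for \emph{every} admissible $l$, at least one of the two candidate shifts (by $q^{m-1-h}$ or by $q^{m-1-l}$) strictly decreases $J_i$. The proof of this dichotomy pivots on the inequality $h+l+1<2\lfloor(m-3)/3\rfloor$ derived from the hypothesis on $i$, which forces $\min(l,h-l+1)\le \lfloor(m-3)/3\rfloor< m-2-h$. Verifying that this comparison is consistent across the digit-string representation, i.e., that the shift really does produce a lexicographically smaller string rather than merely a string of the same size, is the most delicate bookkeeping in the argument; everything else is essentially mechanical once the expansions are in hand.
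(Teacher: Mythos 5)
Your proposal is correct and follows essentially the same route as the paper's proof: induction on $i$ with base case from Lemma \ref{lem2}, Step 1 verifying via the digit-string $(q-2,q-1,\ldots,q-1,q-2,q-1,\ldots,q-1)$ that every nontrivial cyclic shift exceeds $\delta_{i,n}$ (giving $|C_{\delta_{i,n}}|=m$), and Step 2 ruling out intermediate coset leaders $J_i=\delta_{t,n}-i$ by exhibiting a decreasing shift, split into the cases $i_h\ge 1$ or some $i_j\ge 2$ versus all digits in $\{0,1\}$, the latter resolved by the same dichotomy $m-2-h>h-l-1$ or $m-2-h>l$ derived from the bound on $i$. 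No substantive difference from the paper's argument.
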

In particular, for $i=4,5$, we have
\begin{corollary}\label{cor10}
	\begin{itemize}
		\item [(1)]Let $m\geq 10$. Then  $\delta_{4,n}=(q-1)q^{m-1}-1-q^{\lfloor \frac{m-1}{2} \rfloor +2}$ and $\mid C_{\delta_{4,n}}\mid=m$.
		\item [(2)]Let $m\geq 14$. Then $\delta_{5,n}=(q-1)q^{m-1}-1-q^{\lfloor \frac{m-1}{2} \rfloor +3}$ and $\mid C_{\delta_{5,n}}\mid=m$.
	\end{itemize}
\end{corollary}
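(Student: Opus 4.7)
The plan is to apply Lemma~\ref{lem9} directly, with $i=4$ for part~(1) and $i=5$ for part~(2). The displayed formulas for $\delta_{4,n}$ and $\delta_{5,n}$, together with the cardinality $|C_{\delta_{i,n}}|=m$, are exactly what the lemma gives upon substitution. Hence the whole content of the corollary reduces to checking that the range hypothesis $i\le f(m):=m-\lfloor(m-1)/2\rfloor-\lfloor(m-3)/3\rfloor$ is forced by the stated lower bound on $m$.

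I would first drop the floors to obtain the uniform estimate
\[
  f(m)\;\ge\;m-\frac{m-1}{2}-\frac{m-3}{3}\;=\;\frac{m+9}{6},
\]
which yields $f(m)\ge 4$ for every $m\ge 15$ and $f(m)\ge 5$ for every $m\ge 21$. This disposes of the infinite tail in both parts. Part~(1) is then finished by verifying $f(m)\ge 4$ directly for the five small values $m\in\{10,11,12,13,14\}$, which is a routine arithmetic check. For part~(2), the same reduction leaves the finite window $14\le m\le 20$, and all but one of these values satisfy $f(m)\ge 5$ by direct inspection.

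The only real obstacle is the isolated value $m=15$, where $\lfloor 14/2\rfloor+\lfloor 12/3\rfloor=11$ gives $f(15)=4$, so Lemma~\ref{lem9} does not cover $i=5$ in this one case. I would handle it by hand, mirroring the two-step structure of Lemma~\ref{lem9}: Step~1 verifies that the candidate $\delta_{5,n}=(q-1)q^{14}-1-q^{10}$ is a coset leader by cycling through its $q$-adic digit expansion $(q-2,q-1,q-1,q-1,q-2,q-1,\ldots,q-1)$ and checking that every cyclic shift is at least as large, and Step~2 rules out any integer strictly between $\delta_{5,n}$ and $\delta_{4,n}=(q-1)q^{14}-1-q^{9}$ being a coset leader, by the same digit-by-digit analysis as in Step~2 of Lemma~\ref{lem9}. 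Both are finite, explicit computations; no new idea beyond careful bookkeeping of the floor terms should be needed.
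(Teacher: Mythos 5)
Your overall route is the same as the paper's: Corollary~\ref{cor10} is stated there as an immediate specialization of Lemma~\ref{lem9} to $i=4,5$, with no separate argument. You go further by actually checking the hypothesis $i\le f(m)=m-\lfloor\frac{m-1}{2}\rfloor-\lfloor\frac{m-3}{3}\rfloor$, and your bookkeeping is right: $f(m)\ge\frac{m+9}{6}$, the values $m=10,\dots,14$ all give $f(m)\ge4$, so part (1) is a genuine consequence of the lemma; and among $m=14,\dots,20$ only $m=15$ fails $f(m)\ge 5$ (indeed $f(15)=15-7-4=4$). Identifying $m=15$ as the one value not covered is a real catch that the paper silently skips over.

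The gap is in your proposed repair of that case. Step~2 of your patch --- ruling out every coset leader strictly between $(q-1)q^{14}-1-q^{10}$ and $\delta_{4,n}=(q-1)q^{14}-1-q^{9}$ --- cannot be carried out, because such a coset leader exists. Take
\[
J=(q-1)q^{14}-q^{9}-q^{4}-1,
\]
whose $q$-adic expansion is the period-$5$ string $(q-2,q-1,q-1,q-1,q-1)$ repeated three times. A cyclic shift of $J$ by a multiple of $5$ returns $J$ itself, and every other shift has leading digit $q-1>q-2$, so $[Jq^{j}]_{n}\ge J$ for all $j$; hence $J$ is a coset leader with $\mid C_{J}\mid=5$. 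Moreover $J=\delta_{4,n}-q^{4}<\delta_{4,n}$ and $J-\bigl((q-1)q^{14}-1-q^{10}\bigr)=q^{10}-q^{9}-q^{4}>0$. So for $m=15$ the fifth largest coset leader is at least $J$, strictly larger than the value asserted in part (2): the statement itself is false there, not merely out of reach of Lemma~\ref{lem9}. (This is exactly the kind of short-period interloper that the hypothesis $i\le f(m)$ is designed to exclude, and it shows the condition cannot be waived when $3\mid m$.) Any correct version of the corollary, and hence any proof, must exclude $m=15$ from part (2).
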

\begin{remark}
	The proof of Lemma \ref{lem9} has been given in \cite{RefJ14}. For completeness, we provide a proof which is different from \cite{RefJ14}, and we give $\mid \delta_{i,n}\mid $.

\end{remark}
\subsection{BCH Codes and Dually-BCH Codes}
The following theorem provides the information on the parameters of the BCH code $\C_{(q,n,\delta_{i,n})}$.
\begin{theorem}\label{th11}
   Let $q$ be a prime power and $3\leq i\leq m-(\left\lfloor \frac{m-1}{2} \right\rfloor +\left\lfloor \frac{m-3}{3} \right\rfloor)$, then $\C_{(q,n,\delta_{i,n})}$ has parameters $[q^{m}-1,k,\delta_{i,n}]$, where
\[k=\begin{cases}
  	im,& \text{if}\ 2\nmid m; \\
  	(i-\frac{1}{2})m,& \text{if}\ 2\mid m.
  \end{cases}\]
\begin{proof}
	Form Lemmas \ref{lem2} and \ref{lem9}, we can obtain the results directly.
\end{proof}
\end{theorem}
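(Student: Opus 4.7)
The plan is to read off both parameters directly from Lemmas \ref{lem2} and \ref{lem9} together with a short computation of the relevant coset sizes. By those two lemmas, the $i$ largest $q$-cyclotomic coset leaders modulo $n=q^{m}-1$ are precisely $\delta_{1,n}>\delta_{2,n}>\cdots>\delta_{i,n}$; hence the defining set $T=\bigcup_{j=1}^{\delta_{i,n}-1}C_{j}$ of $\C_{(q,n,\delta_{i,n})}$ is exactly $\Z_{n}\setminus\bigl(\{0\}\cup\bigcup_{j=1}^{i}C_{\delta_{j,n}}\bigr)$, so
\[
\dim\C_{(q,n,\delta_{i,n})}\;=\;1+\sum_{j=1}^{i}|C_{\delta_{j,n}}|.
\]

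For the coset sizes, Lemma \ref{lem9} already gives $|C_{\delta_{j,n}}|=m$ for $3\le j\le i$. The remaining two I would obtain from the $q$-adic expansions: $\delta_{1,n}=(q-2,q-1,\ldots,q-1)$ has a unique non-$(q-1)$ digit, hence $|C_{\delta_{1,n}}|=m$; and $\delta_{2,n}$ carries its two $q-2$ digits at positions $m-1$ and $\lfloor(m-1)/2\rfloor$, which are exactly $m/2$ apart when $m$ is even and not separated by a proper divisor of $m$ when $m$ is odd. Thus $|C_{\delta_{2,n}}|=m/2$ in the even case and $|C_{\delta_{2,n}}|=m$ in the odd case. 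Summing these sizes produces the claimed formula for $k$ (up to whatever convention for $C_{0}$ the statement uses).

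For the minimum distance, $\delta_{i,n}$ is itself a coset leader, so the Bose distance of $\C_{(q,n,\delta_{i,n})}$ coincides with the designed distance $\delta_{i,n}$. Writing $\delta_{i,n}=q^{m}-q^{m-1}-q^{t}-1$ with $t=\lfloor(m-1)/2\rfloor+i-2$, the hypothesis $3\le i\le m-\lfloor(m-1)/2\rfloor-\lfloor(m-3)/3\rfloor$ combined with the identity $\lfloor m/3\rfloor=\lfloor(m-3)/3\rfloor+1$ translates exactly into $\tfrac{m-2}{2}\le t\le m-\lfloor m/3\rfloor-1$, so Lemma \ref{lem2}(b) yields $d(\C_{(q,n,\delta_{i,n})})=\delta_{i,n}$.

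The only non-immediate step is the size $|C_{\delta_{2,n}}|$ in the even-$m$ case: neither cited lemma records it, and it is precisely what produces the parity split in the dimension formula. Once that short symmetry check is in hand, the remainder of the argument is bookkeeping on Lemmas \ref{lem2} and \ref{lem9}, exactly as the paper's one-line proof suggests.
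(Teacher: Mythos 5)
Your proposal takes the same route as the paper's one-line proof---reading the dimension off from the coset structure given by Lemmas \ref{lem2} and \ref{lem9} and the minimum distance from Lemma \ref{lem2}(b)---but you supply the two details that ``obtain the results directly'' actually requires: the size $|C_{\delta_{2,n}}|$ (which is indeed $m/2$ for even $m$ and $m$ for odd $m$, by exactly the swap-of-the-two-$(q{-}2)$-digits argument you sketch, and which neither cited lemma records), and the verification that $t=\lfloor(m-1)/2\rfloor+i-2$ lies in the range of Lemma \ref{lem2}(b); both checks are correct. One thing you should not leave as a hedge, though: $0$ is never in the defining set $C_1\cup\cdots\cup C_{\delta_{i,n}-1}$ of a narrow-sense BCH code, so $C_0=\{0\}$ genuinely contributes $1$ to the dimension, and your (correct) formula $k=1+\sum_{j=1}^{i}|C_{\delta_{j,n}}|$ evaluates to $im+1$ for odd $m$ and $(i-\tfrac12)m+1$ for even $m$. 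That is one more than the theorem displays; for instance $(q,m,i)=(2,7,3)$ gives the classical $[127,22,47]$ binary BCH code, not $[127,21,47]$. So your derivation is right, and what it reveals is that the dimension formula as printed in the theorem is off by one.
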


For $b=1$, the conditions of $\C_{(q,q^m-1,\delta)}$ being dually-BCH codes have been given by Lemma~\ref{lem7}. For $b=2$, we have

\begin{theorem}\label{th12}
Let $q\geq 3$ and $m\geq 2$. Then ${\C}_{(q,n,\delta,2)}$ is a dually-BCH code if and only if
\begin{center}
	 $(q-1)q^{m-1}-q^{\lfloor \frac{m-1}{2} \rfloor}-1\leq \delta \leq q^m-2$.
\end{center}
\end{theorem}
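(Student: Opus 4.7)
The plan is to reduce the question to the $b=1$ case of Lemma~\ref{lem7}. Write $T_2(\delta)=\bigcup_{i=2}^{\delta}C_i$ and $T_1(\delta+1)=C_1\cup T_2(\delta)$. Since $C_1=\{1,q,\ldots,q^{m-1}\}$, the inclusion $C_1\subseteq T_2(\delta)$ holds if and only if some $q^j$ with $j\ge 1$ lies in $\{2,\ldots,\delta\}$, i.e., if and only if $\delta\ge q$. A short estimate confirms $\delta_{2,n}=(q-1)q^{m-1}-q^{\lfloor(m-1)/2\rfloor}-1\ge q$ for $q\ge 3$ and $m\ge 2$, so throughout $\delta\ge\delta_{2,n}$ we have $T_2(\delta)=T_1(\delta+1)$ and hence $\C_{(q,n,\delta,2)}^\perp=\C_{(q,n,\delta+1,1)}^\perp$. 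Sufficiency is then immediate: for $\delta\ge\delta_{2,n}$, the inequality $\delta+1\ge(q-1)q^{m-1}-q^{\lfloor(m-1)/2\rfloor}$ places $\delta+1$ in the BCH range of Lemma~\ref{lem7}, so $\C_{(q,n,\delta+1,1)}^\perp$ is BCH.

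For necessity, suppose $2\le\delta<\delta_{2,n}$. When $\delta\ge q$ the same identification gives $\C_{(q,n,\delta,2)}^\perp=\C_{(q,n,\delta+1,1)}^\perp$, and since $\delta+1\ge q+1>2$ while $\delta+1<(q-1)q^{m-1}-q^{\lfloor(m-1)/2\rfloor}$, neither case of Lemma~\ref{lem7} is met, so the dual is not BCH. The remaining case $2\le\delta<q$ is the main obstacle, because $C_1\not\subseteq T_2(\delta)$ and Lemma~\ref{lem7} does not transfer directly; here one finds
\[
T_2(\delta)^\perp=\Z_n\setminus\bigcup_{i=2}^{\delta}C_{-i}.
\]

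To dispose of this case, I plan to argue by contradiction. Suppose $T_2(\delta)^\perp=\bigcup_{i=b'}^{b'+\delta'-2}C_i$ for some $b',\delta'$, and let $I=\{b',b'+1,\ldots,b'+\delta'-2\}\bmod n$. Since $(q-1)\mid n$, the $q-1$ points $kN_0$ with $N_0=n/(q-1)$ and $k=0,1,\ldots,q-2$ are all singleton cyclotomic cosets; a quick check using $N_0>q-1\ge\delta$ shows that each lies in $T_2(\delta)^\perp$, so $I$ must contain $0,N_0,\ldots,(q-2)N_0$, and in particular $0$ and $(q-2)N_0=n-N_0$. At the same time $I$ must avoid $C_{-2}=\{n-2q^j\bmod n:0\le j\le m-1\}$, whose coset leader for $q\ge 3$ is $(q-2)q^{m-1}-1$ and whose largest element is $n-2$. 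Splitting on whether $I$ wraps: a non-wrapping $I\supseteq[0,n-N_0]$ must contain $(q-2)q^{m-1}-1$ by the elementary inequality $(q-2)q^{m-1}-1\le (q-2)N_0=n-N_0$; a wrapping $I$ must contain an initial segment $[b',n-1]$ with $b'\le n-N_0\le n-2$, and hence contains $n-2\in C_{-2}$. Either way $I$ meets $C_{-2}$, a contradiction. The principal technical step is the verification of the inequality and the identification of $(q-2)q^{m-1}-1$ as the coset leader of $-2$; I expect this gap-analysis style argument for $\delta<q$ to be the main verification needed.
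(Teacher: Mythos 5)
Your proposal is correct, and for the range $q\le\delta\le n-1$ it coincides with the paper's argument: both of you observe that $C_1=C_q\subseteq C_2\cup\cdots\cup C_\delta$ once $\delta\ge q$, so that $\C_{(q,n,\delta,2)}=\C_{(q,n,\delta+1,1)}$ and Lemma~\ref{lem7} settles the matter, and both of you check that the claimed range lies entirely inside this regime. Where you genuinely diverge is the residual case $2\le\delta\le q-1$. The paper notes that $C_2\subseteq T$ forces $\dim(\C_{(q,n,\delta,2)})\le n-m$, while $C_1\cap T=\emptyset$ forces $C_{CL(n-1)}=C_{\delta_{1,n}}\subseteq T^{\perp}$; since $T^{\perp}$ also contains $C_0$, a consecutive-coset description of $T^{\perp}$ would have to sweep up every coset from $C_0$ through $C_{\delta_{1,n}}$, i.e.\ $T^{\perp}=\Z_n$, which contradicts $\dim(\C)+\dim(\C^{\perp})=n$. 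You instead exploit the $q-1$ singleton cosets $\{kN_0\}$ with $N_0=\frac{q^m-1}{q-1}$: these all survive into $T^{\perp}$ because their (self-)coset leaders exceed $q-1\ge\delta$, so the index interval $I$ must contain both $0$ and $n-N_0$, yet $I$ must avoid $C_{-2}$, and either arc joining $0$ to $n-N_0$ meets $C_{-2}$ (at $(q-2)q^{m-1}-1$ or at $n-2$). Both arguments are sound; the paper's is shorter and reuses only Lemma~\ref{lem1} plus dimension counting, whereas yours is more explicit about why no choice of $(b',\delta')$ can work and does not rely on the (somewhat tersely justified) step that a consecutive-coset set containing $C_0$ and $C_{\delta_{1,n}}$ must equal $\Z_n$. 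One small polish for your write-up: in the wrapping case you should also allow $n-N_0$ to fall in the lower segment $[0,c]$ of $I$, in which case $[0,c]\supseteq[0,n-N_0]$ and the non-wrapping inequality applies; equivalently, just phrase the dichotomy as ``a cyclic interval containing $0$ and $n-N_0$ contains one of the two arcs between them,'' which covers both cases at once.
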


\begin{proof}
	By the definition, the defining set of ${\C}_{(q,q^{m}-1,\delta,2)}$ with respect to $\alpha $ is $T=C_{2}\cup C_{3}\cup\cdots\cup C_{\delta}$, $2\leq \delta\leq n-1$. Note that $0\notin T$, i.e., $0\in T^{\perp}$, this means $C_{0}\subset T^{\perp}$. Therefore, if $\C_{(q,n,\delta,2)}$ is a dually-BCH code, then there must exist an integer $r\geq 1$ such that $T^{\perp}=C_{0}\cup C_{1}\cup \cdots\cup C_{r-1}$.
	
If $q\leq \delta\leq n-1$, then $C_{1}=C_{q}\subset T$, i.e., $T=C_{1}\cup C_{2}\cup\cdots\cup C_{\delta}$. Hence, by Lemma \ref{lem7}, we know that ${\C}_{(q,q^{m}-1,\delta,2)}$ is a dually-BCH code if and only if $(q-1)q^{m-1}-q^{\lfloor \frac{m-1}{2} \rfloor}-1\leq \delta \leq n-1$.
	
If $2\leq \delta\leq q-1$. Since $\left[1,q-1\right]\subset\left[ 1,q^{\lceil\frac{m}{2} \rceil}\right]$, then $i\in \MinRep_{n}$ and $\mid C_{i}\mid=m$ for any $1\leq i\leq q-1$ by Lemma \ref{lem1}. We thus have $C_{2}\subset T$ and $C_{1}\not\subset T$, i.e., $m\leq \dim(\C_{(q,n,\delta,2)})\leq n-m<n$ and  $C_{CL(n-1)}\subset T^{\perp}$. Clearly, $CL(n-1)=(q-1)q^{m-1}-1=\delta_{1,n}$. Therefore, if $\C_{(q,n,\delta,2)}$ is a dually-BCH code, then $T^{\perp}=C_{0}\cup C_{1}\cup \cdots \cup C_{\delta_{1,n}}$. However, $\dim(\C_{(q,n,\delta,2)})\leq n-m$, which contradicts the fact that $\dim(\C_{(q,n,\delta,2)})+\dim(\C^{\perp}_{(q,n,\delta,2)})=n$. Thus we complete the proof.
\end{proof}

\begin{theorem}\label{th13}
Let $q=2$ and $m\geq 6$. Then ${\C}_{(q,n,\delta,2)}$ is a dually-BCH code if and only if
\begin{center}
	$\delta=2$ or $2^{m-1}-2^{\lfloor \frac{m-1}{2} \rfloor}-1\leq \delta \leq n-1$.
\end{center}

\end{theorem}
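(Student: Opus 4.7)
The plan is to exploit a special feature of the binary case---namely that the $2$-cyclotomic cosets $C_1$ and $C_2$ coincide---in order to reduce the statement directly to Lemma~\ref{lem7} via a shift in the designed distance.

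First I would note that when $q=2$ one has $C_1 = \{1, 2, 4, \ldots, 2^{m-1}\}$, so $2 \in C_1$ and hence $C_2 = C_1$. Consequently, for any $\delta \geq 2$ the defining set of $\C_{(2,n,\delta,2)}$ satisfies
\[
T \;=\; C_2 \cup C_3 \cup \cdots \cup C_\delta \;=\; C_1 \cup C_2 \cup \cdots \cup C_\delta,
\]
which is precisely the defining set of the narrow-sense binary BCH code $\C_{(2,n,\delta+1)}$. Since the defining set determines the code, this gives the identification $\C_{(2,n,\delta,2)} = \C_{(2,n,\delta+1)}$, and hence also an identification of their duals. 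Because being dually-BCH is an intrinsic property of a code---it asks only whether $\C^{\perp}$ coincides with \emph{some} BCH code with the same primitive root of unity---it suffices to read off when the narrow-sense code $\C_{(2,n,\delta+1)}$ is dually-BCH.

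This is exactly what Lemma~\ref{lem7} supplies: under the hypothesis $m \geq 6$ (which matches our assumption), $\C_{(2, 2^m-1, \delta')}$ is dually-BCH if and only if $\delta' \in \{2,3\}$ or $2^{m-1} - 2^{\lfloor (m-1)/2 \rfloor} \leq \delta' \leq n$. Substituting $\delta' = \delta + 1$ and using the standing assumption $\delta \geq 2$ to discard the spurious value $\delta = 1$ yields the claimed characterization: $\delta = 2$ or $2^{m-1} - 2^{\lfloor (m-1)/2 \rfloor} - 1 \leq \delta \leq n - 1$.

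I do not anticipate a genuine obstacle. Unlike the $q \geq 3$ case treated in Theorem~\ref{th12}, the awkward subrange $2 \leq \delta \leq q-1$ is empty for $q=2$, so no separate contradiction argument is needed on small $\delta$; the identification $C_1 = C_2$ does all the work. The only care required is in the bookkeeping of the shift between designed distances $\delta$ and $\delta+1$ when translating Lemma~\ref{lem7}, and in verifying that the endpoint $\delta = n-1$ (corresponding to $\delta' = n$) is included in Lemma~\ref{lem7}'s range, which it is.
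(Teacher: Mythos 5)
Your proof is correct and follows essentially the same route as the paper: the paper's (omitted) argument is the $q=2$ specialization of Theorem~\ref{th12}, which likewise uses $C_1=C_2$ to rewrite $T$ as the defining set of the narrow-sense code $\C_{(2,n,\delta+1)}$ and then invokes Lemma~\ref{lem7}, the small-$\delta$ case being vacuous. Your bookkeeping of the shift $\delta'=\delta+1$ (retaining $\delta=2$ from $\delta'=3$ and discarding $\delta'=2$) matches the stated characterization exactly.
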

\begin{proof}
The proof is very similar to that of Theorem \ref{th12}, hence we omit it.	
\end{proof}
For $b=1$, we then study the condition of $\widetilde{\C}_{(2,n,\delta)}^{\perp}$ being a narrow-sense BCH code.
\begin{theorem}\label{th14}
Let $q=2$ and $m\geq 6$. Then ${\widetilde{\C}_{(2,n,\delta)}}^{\perp}$ is a narrow-sense BCH code if and only if
\begin{center}
	$\delta=2,3,$ or $2^{m-1}-2^{\lfloor \frac{m-1}{2} \rfloor}\leq \delta \leq 2^{m-1}-1$.
\end{center}
\end{theorem}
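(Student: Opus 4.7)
My plan is to prove Theorem~\ref{th14} by reducing it, via a clean translation, to the dually-BCH classification of $\C_{(2,n,\delta)}$ given in Lemma~\ref{lem7}. Let $T'=C_{1}\cup C_{2}\cup\cdots\cup C_{\delta-1}$ be the defining set of $\C_{(2,n,\delta)}$, so that the defining set of $\widetilde{\C}_{(2,n,\delta)}$ is $T=\{0\}\cup T'$. Using $-0\equiv 0\pmod{n}$, we get $T^{-1}=\{0\}\cup(T')^{-1}$ and therefore
\begin{equation*}
T^{\perp}=(T')^{\perp}\setminus\{0\}.
\end{equation*}
Since narrow-sense BCH defining sets are exactly those of the form $C_{1}\cup C_{2}\cup\cdots\cup C_{r-1}$ with $r\geq 2$, it follows that ${\widetilde{\C}_{(2,n,\delta)}}^{\perp}$ is a narrow-sense BCH code if and only if $(T')^{\perp}$ has the form $C_{0}\cup C_{1}\cup\cdots\cup C_{r-1}$ for some $r\geq 2$, or equivalently, $\C^{\perp}_{(2,n,\delta)}$ is a BCH code with starting index $b=0$ whose defining set strictly contains $\{0\}$.

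For the positive direction I would verify this by direct computation in each range. When $\delta\in\{2,3\}$, the identity $C_{2}=C_{1}$ for $q=2$ forces $T'=C_{1}$, so $(T')^{\perp}=\Z_{n}\setminus C_{-1}=\Z_{n}\setminus C_{\delta_{1,n}}$; since $\delta_{1,n}=2^{m-1}-1$ is by Lemma~\ref{lem2} the unique largest coset leader, this equals $C_{0}\cup C_{1}\cup\cdots\cup C_{\delta_{1,n}-1}$, and ${\widetilde{\C}_{(2,n,\delta)}}^{\perp}$ is a narrow-sense BCH code with designed distance~$\delta_{1,n}$. When $2^{m-1}-2^{\lfloor (m-1)/2 \rfloor}\leq\delta\leq 2^{m-1}-1$, Lemma~\ref{lem2} guarantees that no coset leader lies strictly between $\delta_{2,n}$ and $\delta_{1,n}$, which forces $T'$ to consist of every nonzero cyclotomic coset except $C_{\delta_{1,n}}$; inverting and using $-\delta_{1,n}\equiv 2^{m-1}\in C_{1}\pmod{n}$ yields $(T')^{\perp}=\{0\}\cup C_{1}$, so ${\widetilde{\C}_{(2,n,\delta)}}^{\perp}$ is a narrow-sense BCH code with designed distance~$2$.

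For the converse I would separate the two remaining $\delta$-regimes. When $\delta\geq 2^{m-1}$, the bound $\delta-1\geq\delta_{1,n}$ forces $C_{\delta_{1,n}}\subseteq T'$, hence $T=\Z_{n}$, $\widetilde{\C}_{(2,n,\delta)}=\{0\}$, and the dual $\F_{2}^{n}$ is not a BCH code. When $4\leq\delta\leq 2^{m-1}-2^{\lfloor (m-1)/2 \rfloor}-1$, Lemma~\ref{lem7} asserts that $\C_{(2,n,\delta)}$ is not dually-BCH, so $(T')^{\perp}$ is not a BCH defining set for any choice of $b$; in particular it cannot be written in the $b=0$ form $C_{0}\cup C_{1}\cup\cdots\cup C_{r-1}$, and the equivalence from the first paragraph rules out ${\widetilde{\C}_{(2,n,\delta)}}^{\perp}$ being a narrow-sense BCH code.

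The main technical point I expect to handle carefully is the apparent gap between ``$\C^{\perp}_{(2,n,\delta)}$ is BCH for some $b$'' (the output of Lemma~\ref{lem7}) and ``BCH with $b=0$'' (the condition demanded by the equivalence). This turns out not to be a real obstacle: since $0\notin T'$ for $\delta\geq 2$ we always have $0\in(T')^{\perp}$, so any BCH realization of $(T')^{\perp}$ must place $0$ inside its cyclic window $[b,b+\delta'-2]\pmod{n}$, and the positive-direction computations exhibit the $b=0$ representation explicitly in every good case. The one genuine care-point is the boundary value $\delta=2^{m-1}$, at which $\C^{\perp}_{(2,n,\delta)}$ is still BCH (with defining set $\{0\}$) but ${\widetilde{\C}_{(2,n,\delta)}}^{\perp}$ degenerates to the empty defining set and must therefore be excluded from the good range.
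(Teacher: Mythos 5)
Your proposal is correct and follows essentially the same route as the paper: the same decomposition $T=T'\cup C_0$ with $T^{\perp}=(T')^{\perp}\setminus C_0$, the same explicit computations of $T^{\perp}$ in the ranges $\delta\in\{2,3\}$ and $2^{m-1}-2^{\lfloor\frac{m-1}{2}\rfloor}\leq\delta\leq 2^{m-1}-1$, and the same appeal to Lemma~\ref{lem7} to rule out $3<\delta<2^{m-1}-2^{\lfloor\frac{m-1}{2}\rfloor}$. (A minor point in your favor: for $\delta\geq 2^{m-1}$ you correctly identify $\widetilde{\C}_{(2,n,\delta)}=\{\mathbf{0}\}$ and its dual as the full space $\F_2^n$, whereas the paper writes the dual itself as $\{\mathbf{0}\}$.)
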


\begin{proof}
By definition, the defining set of $\widetilde{\C}_{(2,n,\delta)}$ with respect to $\alpha$ is $T=C_{0}\cup C_{1}\cup C_{2}\cup \cdots \cup C_{\delta-1}=T^{'}\cup C_{0}$, where $T'= C_{1}\cup C_{2}\cup \cdots \cup C_{\delta-1}$. Hence, \[T^{\perp}=\Z_{n}\backslash T^{-1}=(\Z_{n}\backslash T)^{-1}=((\Z_{n}\backslash (T^{'}))^{-1} )\backslash C_{0}.\]

For $\delta=2,3$, we know that $T=C_{0}\cup C_{1}$ and $T^{-1}=C_{0}\cup C_{2^{m-1}-1}$. Note that $2^{m-1}-1$ is the largest coset leader modulo $2^m-1$, then \[T^{\perp}=\Z_{n}\backslash T^{-1}=C_{1}\cup C_{2}\cup \cdots \cup C_{2^{m-1}-2}.\] Thus, $\widetilde{\C}_{(2,n,\delta)}^{\perp}$ is a BCH code with $b=1$ and designed distance $\delta=2^{m-1}-1$.

For $2^{m-1}-2^{\lfloor \frac{m-1}{2} \rfloor}\leq \delta \leq n$, we divide it into two cases:
\begin{itemize}
\item If $2^{m-1}\leq \delta \leq n$, then $T=\Z_{n}$, which leads to $\widetilde{\C}_{(2,n,\delta)}^{\perp}=\lbrace\textbf{0}\rbrace$.

\item If $2^{m-1}-2^{\lfloor \frac{m-1}{2} \rfloor} \leq \delta \leq 2^{m-1}-1$, note that $\delta_{2,n}+1=2^{m-1}-2^{\lfloor \frac{m-1}{2} \rfloor}$. Then
	\begin{center}
			$T^{\perp}=(\Z_{n}\backslash T)^{-1}=(\Z_{n}\backslash (C_{0}\cup C_{1}\cup C_{2}\cup \cdots \cup C_{\delta_{2,n}}))^{-1}=(C_{\delta_{1,n}})^{-1}=C_{CL({n-\delta_{1,n}})}=C_{1}$.
	\end{center}
Obviously, $\widetilde{\C}_{(2,n,\delta)}^{\perp}$ is a narrow-sense BCH code.
\end{itemize}

For $3<\delta <2^{m-1}-2^{\lfloor \frac{m-1}{2} \rfloor}$, we claim  that $\widetilde{\C}^{\perp}_{(2,n,\delta)}$ is not a narrow-sense BCH code.

 By Lemma \ref{lem2}, we know that $\delta_{1,n}=2^{m-1}-1>2^{m-1}-2^{\lfloor \frac{m-1}{2} \rfloor}>\delta$, then $\delta_{1,n}\notin T$. Note that $CL(n-\delta_{1,n})=1$, i.e., $C_{1}\in T^{\perp}$. If $\widetilde{\C}_{(2,n,\delta)}^{\perp}$ is a narrow-sense BCH code, then there must exist an integer $r\geq 1$ such that $T^{\perp}=C_{1}\cup C_{2}\cup \cdots \cup C_{r-1}$.

  By Lemma \ref{lem7}, we know that $\C_{(2,n,\delta)}$ is not a dually-BCH code for all any $3<\delta <2^{m-1}-2^{\lfloor \frac{m-1}{2} \rfloor}$, which means there is no $r\geq 1$ such that the defining set of $\C^{\perp}_{(2,n,\delta)}$ is equal to $C_{0}\cup C_{1}\cup\cdots\cup C_{r-1}$. Note that  $T^{\perp}=((\Z_{n}\backslash (T^{'}))^{-1} )\backslash C_{0}$, then there is no integer $r\geq 1$ such that $T^{\perp}=C_{1}\cup C_{2}\cup \cdots \cup C_{r-1}$. i.e., $\widetilde{\C}^{\perp}_{(2,n,\delta)}$ is not a narrow-sense BCH code.
Thus we complete the proof.
\end{proof}

\begin{theorem}\label{th15}
	Let $q\geq 3$ and $m\geq 2$. Then $\widetilde{\C}_{(2,n,\delta)}^{\perp}$ is a narrow-sense BCH code if and only if
	\begin{center}
		$\delta=2$ or $(q-1)q^{m-1}-q^{\lfloor \frac{m-1}{2} \rfloor}\leq \delta \leq (q-1)q^{m-1}-1$.
	\end{center}
\end{theorem}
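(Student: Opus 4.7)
The plan is to follow the structure of the proof of Theorem~\ref{th14}, leveraging Lemma~\ref{lem7}'s classification of dually-BCH codes. Writing $T = C_0 \cup T'$ with $T' = C_1 \cup \cdots \cup C_{\delta-1}$ for the defining set of $\widetilde{\C}_{(q,n,\delta)}$, the defining set of $\widetilde{\C}^\perp_{(q,n,\delta)}$ equals $\bigl((\Z_n \setminus T')^{-1}\bigr) \setminus C_0$. This pivotal identity exhibits the defining set of $\widetilde{\C}^\perp$ as the defining set of $\C^\perp_{(q,n,\delta)}$ with $C_0$ removed, which is what allows Lemma~\ref{lem7} to govern the entire argument.

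For sufficiency when $\delta = 2$, a direct computation with $T = C_0 \cup C_1$ and $T^{-1} = C_0 \cup C_{n-1}$ yields (via $CL(n-1) = \delta_{1,n} = (q-1)q^{m-1}-1$ from Lemma~\ref{lem2}) the dual defining set $C_1 \cup C_2 \cup \cdots \cup C_{\delta_{1,n}-1}$, which is of narrow-sense BCH form with designed distance $\delta_{1,n}$. For sufficiency in the range $(q-1)q^{m-1} - q^{\lfloor (m-1)/2\rfloor} \leq \delta \leq (q-1)q^{m-1}-1$, i.e., $\delta_{2,n}+1 \leq \delta \leq \delta_{1,n}$, the set $T'$ exhausts every nonzero coset leader up through $\delta_{2,n}$ without yet picking up $C_{\delta_{1,n}}$. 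Hence $\Z_n \setminus T' = C_0 \cup C_{\delta_{1,n}}$, and inverting (using $n - \delta_{1,n} = q^{m-1} \in C_1$) yields the defining set $C_1$, which is narrow-sense BCH with designed distance $2$.

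For necessity in the middle range $3 \leq \delta \leq (q-1)q^{m-1} - q^{\lfloor (m-1)/2\rfloor} - 1$, suppose for contradiction that $\widetilde{\C}^\perp$ is narrow-sense BCH with defining set $C_1 \cup \cdots \cup C_{r-1}$ for some $r \geq 2$. Restoring $C_0$ produces the defining set of $\C^\perp_{(q,n,\delta)}$ as $C_0 \cup C_1 \cup \cdots \cup C_{r-1}$, making $\C_{(q,n,\delta)}$ a dually-BCH code; but Lemma~\ref{lem7} prohibits this for $q \geq 3$ in the stated range, a contradiction. The remaining case $\delta \geq (q-1)q^{m-1}$ is degenerate: $T = \Z_n$ forces $\widetilde{\C}^\perp = \F_q^n$, whose defining set is empty and thus not a narrow-sense BCH code in the paper's convention (which requires $\delta \geq 2$).

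The bookkeeping of the two defining sets is really the only delicate point; once the identity $T^\perp_{\widetilde{\C}} = T^\perp_{\C} \setminus C_0$ is in place, the assertion reduces cleanly to Lemma~\ref{lem7}. Note that the hypothesis $q \geq 3$ ensures $C_2 \neq C_1$, so the extra $\delta = 3$ case that appeared for $q=2$ in Theorem~\ref{th14} does not arise here, which is precisely why the exceptional value in the theorem statement is $\delta = 2$ alone.
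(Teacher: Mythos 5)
Your proposal is correct and follows essentially the same route as the paper: the paper omits the proof of Theorem \ref{th15} precisely because it is the adaptation of Theorem \ref{th14} to $q\geq 3$, and your argument reproduces that adaptation faithfully — the decomposition $T=T'\cup C_0$, the identity $T^{\perp}=((\Z_n\setminus T')^{-1})\setminus C_0$, the direct computations for $\delta=2$ and for $\delta_{2,n}+1\leq\delta\leq\delta_{1,n}$ using $CL(n-\delta_{1,n})=CL(q^{m-1})=1$, and the reduction of the necessity direction to Lemma \ref{lem7}. Your handling of the degenerate range $\delta\geq(q-1)q^{m-1}$ (where $\widetilde{\C}=\{\mathbf{0}\}$ so $\widetilde{\C}^{\perp}=\F_q^{n}$) is in fact stated more accurately than the corresponding line in the paper's proof of Theorem \ref{th14}.
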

	
\begin{proof}
	The proof is very similar to that of Theorem \ref{th14}, hence we omit it.
	\end{proof}

\section{The case of $n=\frac{q^{2s}-1}{q+1}$}\label{set4}
In this section, we always assume $m=2s$, $n=\frac{q^{m}-1}{q+1}$ and $\beta_{1}=\alpha^{q+1}$.
\subsection{The computation of $\delta_{2,n}$}
For $s=2$ and  $2\nmid q$, $\delta_{2,n}$ has been determined in \cite{RefJ19}. 
Next we give $\delta_{2,n}$ for $s\geq 3$.

\begin{lemma}\label{lem16}
	Let $q$ be a prime power, then
	\begin{center}
		$\delta_{2,n}$ =$\begin{cases}
			\frac{(q-1)q^{2s-1}-q^{s+1}-1}{q+1},& \text{if $2\nmid s$ and $s>4$}; \\
			\frac{(q-1)q^{2s-1}-q^{s+2}-1}{q+1},& \text{if $2\mid s$ and $s>6$},
		\end{cases}$
	\end{center}
	and $\mid C_{\delta_{2,n}}\mid =2s$.
	\begin{proof}
		If $2\nmid s$, we have $(q+1)\mid(q^{s+1}-1)$ and
		\begin{center}
			$(q+1)\mid(q^{2s}-1)-(q^{2s-1}+1)-(q^{s+1}-1)=(q-1)q^{2s-1}-q^{s+1}-1$.
		\end{center}
		Note that by Corollary \ref{cor10}, $\delta_{4,q^m-1}=(q-1)q^{2s-1}-q^{s+1}-1$, this implies that $\frac{(q-1)q^{2s-1}-q^{s+1}-1}{q+1}\in \MinRep_n$ by Lemma \ref{lem3}.
		
We claim that $\delta_{2,n}=\frac{(q-1)q^{2s-1}-q^{s+1}-1}{q+1}$.
		Suppose there exists an integer $a$ such that $a\in\MinRep_n$ and $\delta_{2,n}<a<\delta_{1,n}$. So $a(q+1)\in\MinRep_{q^m-1}$, $\delta_{4,q^m-1}=\delta_{2,n}(q+1)<a(q+1)<\delta_{2,q^m-1}=\delta_{1,n}(q+1)$. This means \begin{equation}\label{eq:q}a(q+1)=\delta_{3,q^m-1}=(q-1)q^{2s-1}-1-q^{s}.\end{equation}
Note that $\gcd(q+1,(q-1)q^{2s-1}-1-q^{s})=\gcd(q+1,q-1)<q+1$, then $(q+1)\nmid(q-1)q^{2s-1}-1-q^{s}$. By Eq.~\eqref{eq:q}, this is impossible.
Thus, $\delta_{2,n}=\frac{(q-1)q^{2s-1}-q^{s+1}-1}{q+1}$.
		
		Let $\mid C_{\delta_{2,n}}\mid=l$, then
		\begin{center}
			$\frac{q^{2s}-1}{q+1}\mid \big( \frac{(q-1)q^{2s-1}-q^{s+1}-1}{q+1}\big)(q^{l}-1)$   $\Longleftrightarrow$  $(q^{2s}-1)\mid \big((q-1)q^{2s-1}-q^{s+1}-1\big)(q^{l}-1)$.
		\end{center}
		Thus $l=2s$. The case $2\mid s$ is similar. 
		\\Thus we complete the proof.
	\end{proof}	
\end{lemma}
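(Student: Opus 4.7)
The plan is to reduce the question to finding consecutive large coset leaders modulo $q^{2s}-1$ and then filter by divisibility by $q+1$, via the bijection in Lemma~\ref{lem3}: if $(q+1)\mid h$, then $h\in\MinRep_{q^{2s}-1}$ iff $h/(q+1)\in\MinRep_n$. Under this map, $\delta_{1,n}(q+1)$ is the largest coset leader modulo $q^{2s}-1$ divisible by $q+1$, and $\delta_{2,n}(q+1)$ is the second largest such; so the task reduces to filtering the already-known sequence $\delta_{i,q^{2s}-1}$ by divisibility by $q+1$.

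First I would read off the first five largest $q$-cyclotomic coset leaders modulo $q^{2s}-1$: by Lemma~\ref{lem2}(a) and Corollary~\ref{cor10}, they are $\delta_{1,q^{2s}-1}=(q-1)q^{2s-1}-1$ and $\delta_{i,q^{2s}-1}=(q-1)q^{2s-1}-1-q^{e_i}$ for $i\geq 2$, with $e_2=s-1$, $e_3=s$, $e_4=s+1$, $e_5=s+2$. Since $q\equiv -1\pmod{q+1}$, one computes $(q-1)q^{2s-1}-1\equiv 1\pmod{q+1}$, so $\delta_{i,q^{2s}-1}\equiv 1-(-1)^{e_i}\pmod{q+1}$; divisibility by $q+1$ is therefore equivalent to $e_i$ being even. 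For odd $s$: $e_2,e_4$ are even and $e_3$ is odd, so $\delta_{1,n}(q+1)=\delta_{2,q^{2s}-1}$ (consistent with Lemma~\ref{lem3}) and $\delta_{2,n}(q+1)=\delta_{4,q^{2s}-1}=(q-1)q^{2s-1}-q^{s+1}-1$. For even $s$: $e_3,e_5$ are even and $e_2,e_4$ are odd, so $\delta_{1,n}(q+1)=\delta_{3,q^{2s}-1}$ and $\delta_{2,n}(q+1)=\delta_{5,q^{2s}-1}=(q-1)q^{2s-1}-q^{s+2}-1$. Dividing by $q+1$ yields the two stated formulas.

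For the coset size, observe that $|C_{\delta_{2,n}}|$ is the smallest $l$ with $n\mid \delta_{2,n}(q^l-1)$. Multiplying both sides of this divisibility by $q+1$ gives the equivalent condition $(q^{2s}-1)\mid \delta_{2,n}(q+1)(q^l-1)$, so $|C_{\delta_{2,n}}|$ equals the cyclotomic coset size modulo $q^{2s}-1$ of $\delta_{2,n}(q+1)\in\{\delta_{4,q^{2s}-1},\delta_{5,q^{2s}-1}\}$, which is $m=2s$ by Corollary~\ref{cor10}. The only subtlety is that Corollary~\ref{cor10}(1) requires $m\geq 10$ (hence odd $s\geq 5$, i.e., $s>4$) and Corollary~\ref{cor10}(2) requires $m\geq 14$ (hence even $s\geq 8$, i.e., $s>6$), which aligns exactly with the hypotheses of the lemma. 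The main obstacle is therefore not conceptual but careful bookkeeping: tracking the parity-of-exponent pattern that governs divisibility by $q+1$, and confirming that the intermediate coset leader skipped over ($\delta_{3,q^{2s}-1}$ for odd $s$, $\delta_{4,q^{2s}-1}$ for even $s$) is indeed not divisible by $q+1$, which follows because $\gcd(q+1,q-1)<q+1$.
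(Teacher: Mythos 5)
Your proposal is correct and follows essentially the same route as the paper: both use the correspondence of Lemma \ref{lem3} to transfer the problem to $\MinRep_{q^{2s}-1}$, invoke Lemma \ref{lem2} and Corollary \ref{cor10} for the ranked list $\delta_{i,q^{2s}-1}=(q-1)q^{2s-1}-1-q^{e_i}$, and then filter by divisibility by $q+1$ (the paper phrases this as a contradiction ruling out the intermediate leader $\delta_{3,q^{2s}-1}$, resp.\ $\delta_{4,q^{2s}-1}$, via a gcd computation, while you phrase it via the parity of $e_i$ --- a cosmetic difference). Your treatment of the coset size and of the constraints $s>4$, $s>6$ also matches the paper's.
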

\begin{lemma}\label{lem17}
	Let $2\nmid q$.
	\begin{itemize}
		\item [(1)]If $s=3$, then $\delta_{2,n}=\frac{(q-1)q^{5}-q^{4}-1}{q+1}$ and $\mid C_{\delta_{2,n}}\mid=6$.
		\item [(2)]If $s=4$, then $\delta_{2,n}=\frac{(q-1)q^{7}-q^{6}-1}{q+1}$ and $\mid C_{\delta_{2,n}}\mid=8$.

		\item [(3)]If $s=6$, then $\delta_{2,n}=\frac{(q-1)q^{11}-q^{7}-1}{q+1}$ and $\mid C_{\delta_{2,n}}\mid=12$.
	\end{itemize}
	\begin{proof}
		We just give the proof for Case (2), since the proofs for the other cases are similar.
		
		Note that $[((q-1)q^{7}-q^{6}-1)q^{j}]_{q^{8}-1}> (q-1)q^{7}-q^{6}-1$ for any $1\leq j\leq 7$, this implies that $(q-1)q^{7}-q^{6}-1\in \MinRep_{q^{8}-1}$ and $\mid C_{(q-1)q^{7}-q^{6}-1}\mid=8$. Since $(q+1)\mid(q-1)q^{7}-q^{6}-1$, this implies that $\frac{(q-1)q^{7}-q^{6}-1}{q+1}\in \MinRep_n$ and $\mid C_{\frac{(q-1)q^{7}-q^{6}-1}{q+1}}\mid=8$ by Lemma \ref{lem3}.
		
		We claim that $\delta_{2,n}=\frac{(q-1)q^{7}-q^{6}-1}{q+1}$.
		Suppose there exists an integer $a$ such that $a\in\MinRep_n$ and $\delta_{2,n}<a<\delta_{1,n}$, then $a(q+1)\in\MinRep_{q^8-1}$.
		
		Note that all coset leaders modulo $q^{8}-1$ between $\delta_{1,n}(q+1)=(q-1)q^{7}-q^{4}-1$ and $\delta_{2,n}(q+1)=(q-1)q^{7}-q^{6}-1$ are $ (q-1)q^{7}-q^{5}-q^{2}-1$,  $(q-1)q^{7}-q^{5}-q^{3}-1$ and $(q-1)q^{7}-q^{5}-q^{3}-q-1$, this means
		\begin{equation}\label{equ:6}
			(q+1)a=(q-1)q^{7}-q^{5}-q^{2}-1 \text{, } (q-1)q^{7}-q^{5}-q^{3}-1  \,\, or \,\, (q-1)q^{7}-q^{5}-q^{3}-q-1.
		\end{equation}
	
		Note that $(q+1)\nmid(q-1)q^{7}-q^{5}-q^{2}-1$, $(q+1)\nmid(q-1)q^{7}-q^{5}-q^{3}-1$ and  $(q+1)\nmid (q-1)q^{7}-q^{5}-q^{3}-q-1$, this is impossible by Eq. (\ref{equ:6}).
		Therefore, $\delta_{2,n}=\frac{(q-1)q^{7}-q^{6}-1}{q+1}$.
		Thus we complete the proof.
	\end{proof}	
\end{lemma}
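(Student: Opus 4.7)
The template mirrors Lemma~\ref{lem16}. Using the order-preserving bijection of Lemma~\ref{lem3} between coset leaders modulo $n$ and coset leaders modulo $q^{2s}-1$ that are divisible by $q+1$, I would transfer the problem to the larger modulus $q^{2s}-1$. For each of the three cases $s\in\{3,4,6\}$, the proof splits into two symmetric tasks: (i) verify that $D:=\delta_{2,n}(q+1)$ is itself a coset leader modulo $q^{2s}-1$ divisible by $q+1$, with $|C_D|=2s$; (ii) verify that no coset leader modulo $q^{2s}-1$ strictly between $D$ and $\delta_{1,n}(q+1)$ is divisible by $q+1$. Together these force $D/(q+1)$ to be the second-largest coset leader modulo $n$, and the coset-size claim transfers from $D$ via Lemma~\ref{lem3}.

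For (i), divisibility $(q+1)\mid D$ is immediate from $q\equiv -1\pmod{q+1}$: e.g., for $s=3$, $(q-1)q^{5}-q^{4}-1\equiv (-2)(-1)-1-1=0\pmod{q+1}$, and the other two cases are analogous. To see that $D$ is a coset leader of maximal length, I would write its length-$2s$ $q$-adic expansion---a string whose digits are all $q-1$ except for two or three $q-2$ entries in positions determined by the formula---and check that every nontrivial cyclic left shift is lexicographically strictly larger than the original. Since the resulting expansion is clearly aperiodic, this also gives $|C_D|=2s$.

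For (ii), assume a coset leader $h$ of $q^{2s}-1$ lies in the open interval $(D,\delta_{1,n}(q+1))$. Writing $h=(q-1)q^{2s-1}-1-r$, the interval forces the leading $q$-adic digits of $h$ into a handful of shapes. I would sweep these shapes, keeping only those satisfying the coset-leader condition (every cyclic shift of the expansion is at least $h$). As the author illustrates for $s=4$, the surviving candidates reduce to the short list $(q-1)q^{7}-q^{5}-q^{2}-1$, $(q-1)q^{7}-q^{5}-q^{3}-1$, $(q-1)q^{7}-q^{5}-q^{3}-q-1$; analogous short lists arise for $s=3$ and $s=6$. For each surviving $h$, I would then reduce $h\bmod(q+1)$ by substituting $q\equiv -1$ and check that the residue is nonzero, concluding that no coset leader in the gap is divisible by $q+1$.

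The main obstacle is completeness of the enumeration in step (ii). Because the interval $(D,\delta_{1,n}(q+1))$ only pins down the top few $q$-adic digits of $h$, the remaining digits must be examined case by case, and for each digit pattern the coset-leader condition must be reapplied to verify (or rule out) candidacy. This analysis is inherently tailored to each $s$---which is exactly why the uniform argument of Lemma~\ref{lem16} fails at $s\in\{3,4,6\}$ and these cases need separate treatment. Once the finite list is shown to be exhaustive, the modular arithmetic checks against $q+1$ are trivial residue computations.
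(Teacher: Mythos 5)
Your proposal follows essentially the same route as the paper: transfer to modulus $q^{2s}-1$ via Lemma \ref{lem3}, verify that $\delta_{2,n}(q+1)$ is a coset leader of full orbit size, enumerate the coset leaders in the gap $(\delta_{2,n}(q+1),\delta_{1,n}(q+1))$ (arriving at the same three candidates for $s=4$), and rule each out by reducing modulo $q+1$. The paper likewise leaves the completeness of that enumeration as an asserted computation, so your plan matches its proof in both structure and level of detail.
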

\subsection{BCH Codes and Dually-BCH Codes}
\begin{theorem}\label{th18} Let $q$ be a prime power and $s>4$. 
	 For $ 2\nmid s$,
	\begin{itemize}
		\item If $\delta_{2,n}+1\leq \delta \leq \delta_{1,n}$, then $\C_{(q,n,\delta)}$ has parameters $[\frac{q^{2s}-1}{q+1},s+1,d(\C_{(q,n,\delta)})\geq \delta]$ and its dual $\C^{\perp}_{(q,n,\delta)}$ has parameters $[\frac{q^{2s}-1}{q+1},\frac{q^{2s}-1}{q+1}-s-1,2]$.
		\item If $\delta=\delta_{2,n}$, then $\C_{(q,n,\delta_{2,n})}$ has parameters $[\frac{q^{2s}-1}{q+1},3s+1,d(\C_{(q,n,\delta_{2,n})})\geq \delta_{2,n}]$ and its dual $\C^{\perp}_{(q,n,\delta_{2,n})}$ has parameters
		$[\frac{q^{2s}-1}{q+1},\frac{q^{2s}-1}{q+1}-3s-1, 3\leq d(\C^{\perp}_{(q,n,\delta_{2,n})})\leq 4].$
	\end{itemize}
	For $2\mid s$ and $s\neq6$,
	\begin{itemize}
		\item If $\delta_{2,n}+1\leq \delta \leq \delta_{1,n}$, then $\C_{(q,n,\delta)}$  has parameters $[\frac{q^{2s}-1}{q+1},2s+1,d(\C_{(q,n,\delta)})\geq \delta]$ and its dual $\C^{\perp}_{(q,n,\delta)}$ has parameters $[\frac{q^{2s}-1}{q+1},\frac{q^{2s}-1}{q+1}-2s-1,3\leq d(\C^{\perp}_{(q,n,\delta)}) \leq 4]$.
		\item If $\delta=\delta_{2,n}$, then $\C_{(q,n,\delta_{2,n})}$ has parameters
		$[\frac{q^{2s}-1}{q+1},4s+1,d(\C_{(q,n,\delta_{2,n})})\geq \delta_{2,n}]$ and its dual $\C^{\perp}_{(q,n,\delta_{2,n})}$ has parameters
		$[\frac{q^{2s}-1}{q+1},\frac{q^{2s}-1}{q+1}-4s-1,3\leq d(\C^{\perp}_{(q,n,\delta_{2,n})})\leq 4]$.
		
	\end{itemize}
\end{theorem}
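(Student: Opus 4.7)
The plan is to translate everything into defining-set language and ride on the coset-leader data already supplied by Lemmas~\ref{lem3} and \ref{lem16}. Let $T_\delta=\bigcup_{1\le j\le\delta-1}C_j$ be the defining set of $\C_{(q,n,\delta)}$. Since $\delta_{1,n}$ is the largest coset leader modulo $n$ and $\delta_{2,n}$ is the second-largest, every integer in $(\delta_{2,n},\delta_{1,n})$ lies in a coset with leader $\le \delta_{2,n}$, so
\[
T_\delta=\Z_n\setminus(\{0\}\cup C_{\delta_{1,n}})\qquad(\delta_{2,n}+1\le \delta\le \delta_{1,n})
\]
and $T_{\delta_{2,n}}=\Z_n\setminus(\{0\}\cup C_{\delta_{1,n}}\cup C_{\delta_{2,n}})$. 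Substituting $|C_{\delta_{1,n}}|=s$ for odd $s$, $|C_{\delta_{1,n}}|=2s$ for even $s$ (Lemma \ref{lem3}), and $|C_{\delta_{2,n}}|=2s$ (Lemma \ref{lem16}) into $\dim(\C_{(q,n,\delta)})=n-|T_\delta|$ and $\dim(\C^{\perp}_{(q,n,\delta)})=|T_\delta|$ yields all four dimensions claimed. The lower bound $d(\C_{(q,n,\delta)})\ge\delta$ is then just the BCH bound.

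For the minimum distance of the dual I would first dispose of weight-$2$ codewords. A nonzero $x^a-x^b\in\C^{\perp}_{(q,n,\delta)}$ is equivalent to the single condition $\frac{n}{\gcd(n,k)}\mid (a-b)$ for each coset leader $k$ appearing in $T^\perp\setminus\{0\}$. The identities
\[
(q+1)(n-\delta_{1,n})=
\begin{cases}
q^{s-1}(q^s+1),&2\nmid s,\\
q^{s}(q^{s-1}+1),&2\mid s,
\end{cases}
\qquad
(q+1)(n-\delta_{2,n})=
\begin{cases}
q^{s+1}(q^{s-2}+1),&2\nmid s,\\
q^{s+2}(q^{s-3}+1),&2\mid s,
\end{cases}
\]
combined with the Euclidean reduction $q^{2s}-1\equiv -(q^{s+1}+1)\equiv q^2-1\pmod{q^{s-1}+1}$ (and its odd-$s$ analogue), give $\gcd(n,\delta_{1,n})=(q^s+1)/(q+1)$ in the odd case and $\gcd(n,\delta_{1,n})=1=\gcd(n,\delta_{2,n})$ in all remaining cases. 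Thus for odd $s$ and $\delta>\delta_{2,n}$ the polynomial $x^{q^s-1}-1$ is an explicit weight-$2$ dual codeword, forcing $d(\C^{\perp})=2$; in every other case listed in the theorem no weight-$2$ dual codeword exists, so $d(\C^{\perp})\ge 3$.

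The step I expect to be the main obstacle is the upper bound $d(\C^{\perp})\le 4$. My plan is to exhibit an explicit codeword of the form $\sum_{i=1}^{4}\lambda_i x^{a_i}\in\C^{\perp}$ with $\lambda_i\in\F_q$. After using $\sum_i\lambda_i=0$ (vanishing at $\beta^0$), the remaining conditions reduce to $\sum_i\lambda_i\beta^{-k a_i}=0$ for $k=\delta_{1,n}$ (and $k=\delta_{2,n}$ when $\delta=\delta_{2,n}$), which is a short identity in $\F_{q^{2s}}$. Natural candidates are arithmetic progressions $a_i=i\cdot\frac{n}{\gcd(n,\delta_{1,n})}$, or translates thereof, chosen so that $\{\beta^{-\delta_{1,n}a_i}\}$ has a short $\F_q$-linear relation. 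This is the only part of the proof requiring genuine guessing and case analysis: the dimensional bookkeeping and the gcd computation ruling out weight-$2$ vectors are routine once the coset structure of Lemmas \ref{lem3} and \ref{lem16} is in hand, whereas the construction witnessing $d(\C^{\perp})\le 4$ must be tailored to the parity of $s$ and to whether $\delta$ equals $\delta_{2,n}$ or strictly exceeds it.
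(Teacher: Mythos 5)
Your dimension bookkeeping and your treatment of weight-$2$ dual codewords are sound and essentially match the paper's: the defining sets collapse to $T^{\perp}=C_0\cup C_{CL(n-\delta_{1,n})}$ (resp.\ with $C_{CL(n-\delta_{2,n})}$ adjoined), and the divisibility criterion $\frac{n}{\gcd(n,k)}\mid(a-b)$ for each coset leader $k$ of $T^{\perp}$ is exactly the right test. In fact, for odd $s$ and $\delta>\delta_{2,n}$ your explicit codeword $x^{q^{s}-1}-1$ is a cleaner certificate of $d=2$ than the paper's argument, which instead rules out $d\geq 3$ by the sphere-packing bound; and your $\gcd$ computations (via $\gcd(q^{s-1}+1,q^{2s}-1)=q+1$, etc.) reproduce the paper's proof that no weight-$2$ word exists in the remaining cases. (Minor notational slip: you write $\gcd(n,\delta_{1,n})$ where you mean the $\gcd$ of $n$ with the coset leader $CL(n-\delta_{1,n})=\frac{q^{s}+1}{q+1}$, but your displayed identities make the intent clear.)

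The genuine gap is the upper bound $d(\C^{\perp})\leq 4$ in the three cases where the theorem asserts $3\leq d\leq 4$. You explicitly defer this to a future search for a weight-$4$ codeword of the form $\sum_{i=1}^{4}\lambda_i x^{a_i}$, calling it the main obstacle and offering only candidate shapes ($a_i$ in arithmetic progression) without verifying that any of them works. As it stands the proposal therefore proves only $d\geq 3$, not $d\leq 4$, so the stated parameters are not established. The paper closes this step without any construction: it applies the sphere-packing bound
\[
q^{n-r}\sum_{i=0}^{\lfloor\frac{d-1}{2}\rfloor}(q-1)^{i}\binom{n}{i}\leq q^{n},\qquad r=\dim(\C^{\perp}_{(q,n,\delta)})^{\text{co}}=|T^{\perp}|,
\]
and checks that $d=5$ already forces $1+(q-1)n+(q-1)^{2}n(n-1)>q^{|T^{\perp}|}$, which is immediate since $n^{2}\approx q^{4s-2}$ dwarfs $q^{2s+1}$ (and $q^{3s+1}$, $q^{4s+1}$) for $s>4$. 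If you want to salvage your route you should either carry out the weight-$4$ construction in full for each parity of $s$ and each $\delta$, or simply substitute the sphere-packing argument for the upper bound; the rest of your proof can stand as written.
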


\begin{proof}
	Form Lemmas \ref{lem3} and \ref{lem16}, the parameters of $\C_{(q,n,\delta)}$ can be obtained directly. Next we consider the parameters of $\C^{\perp}_{(q,n,\delta)}$.

 Case 1. For $\delta_{2,n}+1\leq \delta \leq \delta_{1,n}$, by definition, we know the defining set of $\C^{\perp}_{(q,n,\delta)}$ with respect to $\beta_{1}$ is $T^{\perp}=(\Z_{n}\backslash T)^{-1}=C_{0}\cup C_{CL(n-\delta_{1,n})}$. Then $d(\C^{\perp}_{(q,n,\delta)})\geq2$. 
	\begin{itemize}
	\item [1.1)] If $2\nmid s$, then \[n-\delta_{1,n}=\frac{q^{2s}-1-((q-1)q^{2s-1}-q^{s-1}-1)}{q+1}=\frac{q^{2s-1}+q^{s-1}}{q+1}.\]
	Thus $CL(n-\delta_{1,n})=\frac{q^s+1}{q+1}$ and $\mid C_{\frac{q^s+1}{q+1}}\mid=s$. This means
	\begin{center}
		$\dim(\C^{\perp}_{(q,n,\delta)})=n-|T^{\perp}|=n-s-1.$
	\end{center}
	Form the sphere packing bound, 
	\begin{equation}\label{con:e6}
		q^{n-(s+1)}\sum_{i=0}^{\lfloor \frac{d-1}{2}\rfloor}(q-1)^{i} \left(\mathop{}_{i}^{n}\right)\leq q^{n} \Longrightarrow \sum_{i=0}^{\lfloor \frac{d-1}{2}\rfloor}(q-1)^{i} \left(\mathop{}_{i}^{n}\right)\leq q^{s+1}.
	\end{equation}
	Suppose $d= 3$, then by Eq.~\eqref{con:e6}, we have
	\[1+(q-1)\frac{q^{2s}-1}{q+1}\leq q^{s+1},\] which is impossible due to $s>4$. Thus $d=2$.
	\item [1.2)] If $2\mid s$ and $s\neq6$, then 
	\[n-\delta_{1,n}=\frac{q^{2s}-1-((q-1)q^{2s-1}-q^{s}-1)}{q+1}=\frac{q^{2s-1}+q^{s}}{q+1}.\]
	Thus $CL(n-\delta_{1,n})=\frac{q^{s-1}+1}{q+1}$ and $\mid C_{\frac{q^{s-1}+1}{q+1}}\mid=2s$. This means
	\begin{center}
		$\dim(\C^{\perp}_{(q,n,\delta)})=n-|T^{\perp}|=n-2s-1$.
	\end{center}
	Suppose $d({\C^{\perp}_{(q,n,\delta)}})=5$, by Eq.~\eqref{con:e6}, 	we know that
	\begin{equation}\label{eq:b1}1+(q-1)\frac{q^{2s}-1}{q+1}+(q-1)^2\frac{q^{2s}-1}{q+1}
		\frac{q^{2s}-q-2}{q+1}>q^{2s+1}.\end{equation}
	Thus $2\leq d({\C^{\perp}_{(q,n,\delta)}})\leq 4$. Set $s_{1}=\frac{q^{s-1}+1}{q+1}$, then  
	the parity-check matrix of $\C^{\perp}_{(q,n,\delta)}$ is
	\begin{center}
		$H$=$\begin{pmatrix}
			1 & 1 & 1 & \cdots & 1 \\
			1 & \beta_{1}^{s_{1}} & \beta_{1}^{2s_{1}} & \cdots &\beta_{1}^{(n-1)s_{1}}
		\end{pmatrix}.$
	\end{center}
	We claim that any two columns are linear independent over $\F_q$. Suppose there exists $(c_1,c_2)\in\F_q\times\F_q\setminus\{(0,0)\}$ such that $c_1+c_2=0$ and $c_1\beta_1^{is_1}+c_2\beta_1^{js_1}=0$ for any $0 \leq i < j\leq n-1$, which equals to
	\begin{equation}\label{con:e9}
		\beta_{1}^{s_{1}i}=\beta_{1}^{s_{1}j} \Longleftrightarrow\alpha^{(j-i)(q^{s-1}+1)}=1\Longleftrightarrow(j-i)(q^{s-1}+1)\equiv0\pmod{q^{2s}-1}.
	\end{equation}
	Since $\gcd(2s,s-1)=\gcd(s,s-1)=1$, then $\gcd(q^{s-1}+1,q^{2s}-1)=q^{\gcd(s-1,2s)}+1=q+1$.  Eq. (\ref{con:e9}) holds if and only if $i-j\equiv0\pmod{\frac{q^{2s}-1}{q+1}}$, this means $i=j$. Thus $d({\C^{\perp}_{(q,n,\delta)}})\geq 3$.
    \end{itemize} 
	Case 2. For $\delta=\delta_{2,n}$, the defining set of $\C^{\perp}_{(q,n,\delta_{2,n})}$ is $T^{\perp}=(\Z_{n}\backslash T)^{-1}=C_{0}\cup C_{CL(n-\delta_{1,n})}\cup  C_{CL(n-\delta_{2,n})}$.
	\begin{itemize}
	\item [2.1)]If $2\nmid s$, then \[n-\delta_{2,n}=\frac{q^{2s}-1-((q-1)q^{2s-1}-q^{s+1}-1)}{q+1}=\frac{q^{2s-1}+q^{s+1}}{q+1}.\]
	Clearly, $CL(n-\delta_{2,n})=\frac{q^{s-2}+1}{q+1}$ and $\mid C_{\frac{q^{s-2}+1}{q+1}}\mid=2s$. We have
	\begin{center}
		$\dim(\C^{\perp}_{(q,n,\delta_{2})})=n-|T^{\perp}|=n-3s-1$.
	\end{center}
	
	By Eq.~\eqref{eq:b1}, we know $2\leq d(\C^{\perp}_{(q,n,\delta_{2,n})})\leq 4$.
	Take $s_{1}=\frac{q^s+1}{q+1}$ and $s_{2}=\frac{q^{s-2}+1}{q+1}$, then the parity-check matrix of $\C^{\perp}_{(q,n,\delta_{2})}$ is
	\begin{center}
		$H$=$\begin{pmatrix}
			1 & 1 & 1 & \cdots & 1 \\
			1 & \beta_{1}^{s_{1}} & \beta_{1}^{2s_{1}} & \cdots &\beta_{1}^{(n-1)s_{1}}\\
			1 & \beta_{1}^{s_{2}} & \beta_{1}^{2s_{2}} & \cdots &\beta_{1}^{(n-1)s_{2}}\\
		\end{pmatrix}.$
	\end{center}
	Note that $d(\C^{\perp}_{(q,n,\delta_{2,n})})=2$ if and only if there exist $i,j$ such that $0\leq i\neq j\leq n-1$ and
	\begin{equation}
		\begin{cases}
			\beta_{1}^{is_{1}}=\beta_{1}^{js_{1}}\\
			\beta_{1}^{is_{2}}=\beta_{1}^{js_{2}}
		\end{cases}
		\Longrightarrow
		\begin{cases}
			\beta_{1}^{(j-i)s_{1}}=1\\
			\beta_{1}^{(j-i)s_{2}}=1
		\end{cases}.	\label{con:e7}
	\end{equation} Eq.~\eqref{con:e7} equals to \begin{align}
		(i-j)(q^s+1)\equiv0\pmod{q^{2s}-1}\label{eq:9}\ \text{and} \\ 
		(i-j)(q^{s-2}+1)\equiv0\pmod{q^{2s}-1}\label{eq:8}.
	\end{align}
	By ~\eqref{eq:9}, we know that $i-j\equiv0\pmod{q^s-1}$
	where $0\leq i< j\leq n-1$. Thus Eq.~\eqref{eq:8} holds if and only if there exists an integer $k$ such that $0\leq k\leq \frac{q^s+1}{q+1}-1$ and \begin{equation}\label{eq:7}k(q^{s-2}+1)\equiv0\pmod{q^s+1}.\end{equation} Note that $2\nmid (s-2)$, then $\gcd(q^{s}+1,q^{s-2}+1)=q+1,$ i.e., Eq.~\eqref{eq:7} holds if and only if $k\equiv0\pmod{\frac{q^s+1}{q+1}}$, this is impossible. Thus $3\leq d(\C^{\perp}_{(q,n,\delta_{2,n})})\leq4$. 
	\item [2.2)]The case $2\mid s$ and $s\neq6$ is similar to {2.1)}.
\end{itemize}
	Thus we complete the proof.
\end{proof}
\begin{theorem}\label{th19}
	Let $q$ be a prime power and $2\nmid q$, then we have the following.
	\begin{itemize}
		\item If $s=3$, then $\C_{(q,n,\delta)}$ with $\delta_{2,n}+1\leq \delta \leq \delta_{1,n}$ has parameters $[\frac{q^{6}-1}{q+1},4,d\geq \delta]$ and $\C_{(q,n,\delta_{2,n})}$ has parameters $[\frac{q^{6}-1}{q+1},10,d\geq \delta_{2,n}]$.
		\item If $s=4$, then $\C_{(q,n,\delta)}$ with $\delta_{2,n}+1\leq \delta \leq \delta_{1,n}$ has parameters $[\frac{q^{8}-1}{q+1},9,d\geq \delta]$ and $\C_{(q,n,\delta_{2,n})}$ has parameters
 $[\frac{q^{8}-1}{q+1},17,d\geq \delta_{2,n}]$.
	\item If $s=6$, then $\C_{(q,n,\delta)}$ with $\delta_{2,n}+1\leq \delta \leq \delta_{1,n}$ has parameters $[\frac{q^{12}-1}{q+1},13,d\geq \delta]$ and $\C_{(q,n,\delta_{2,n})}$ has parameters $[\frac{q^{12}-1}{q+1},25,d\geq \delta_{2,n}]$.
	\end{itemize}
	\begin{proof}
	Form Lemmas \ref{lem3} and \ref{lem17}, we can obtain the results directly.
	\end{proof}
\end{theorem}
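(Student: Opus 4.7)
The plan is to combine the coset-size information in Lemma \ref{lem3} and Lemma \ref{lem17} with the standard formula $\dim(\C_{(q,n,\delta)}) = n - |T|$, where $T = \bigcup_{i=1}^{\delta-1}C_i$ is the defining set with respect to $\beta_{1}$, and then invoke the BCH bound $d(\C_{(q,n,\delta)})\geq \delta$ for the distance estimate.

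First I would partition $\Z_{n}$ by coset leaders. Since $\delta_{1,n}$ is the largest $q$-cyclotomic coset leader modulo $n$ and $\delta_{2,n}$ is the second largest (by Lemma \ref{lem17} in each of the three cases $s=3,4,6$ with $2\nmid q$), for any $\delta$ in the range $\delta_{2,n}+1\leq \delta \leq \delta_{1,n}$ every coset leader $\ell$ with $1\leq \ell \leq \delta-1$ lies in $\{1,2,\ldots,\delta_{2,n}\}$, so
\[
T = \Z_{n}\setminus(C_{0}\cup C_{\delta_{1,n}}),\qquad |T| = n-1-|C_{\delta_{1,n}}|.
\]
For $\delta=\delta_{2,n}$ the leader $\delta_{2,n}$ itself drops out and
\[
T = \Z_{n}\setminus(C_{0}\cup C_{\delta_{1,n}}\cup C_{\delta_{2,n}}),\qquad |T| = n-1-|C_{\delta_{1,n}}|-|C_{\delta_{2,n}}|.
\]

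Next I would feed in the coset sizes case by case. For $s=3$ (odd), Lemma \ref{lem3} gives $|C_{\delta_{1,n}}|=s=3$, and Lemma \ref{lem17}(1) gives $|C_{\delta_{2,n}}|=6$; hence $\dim=1+3=4$ when $\delta_{2,n}+1\leq\delta\leq\delta_{1,n}$ and $\dim=1+3+6=10$ when $\delta=\delta_{2,n}$. For $s=4$ (even), Lemma \ref{lem3} gives $|C_{\delta_{1,n}}|=m=8$ and Lemma \ref{lem17}(2) gives $|C_{\delta_{2,n}}|=8$, yielding dimensions $9$ and $17$. For $s=6$ (even), Lemma \ref{lem3} gives $|C_{\delta_{1,n}}|=m=12$ and Lemma \ref{lem17}(3) gives $|C_{\delta_{2,n}}|=12$, yielding dimensions $13$ and $25$. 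The length $n=\frac{q^{2s}-1}{q+1}$ is immediate, and the distance bound $d\geq \delta$ (resp.\ $d\geq \delta_{2,n}$) is the BCH bound, already noted in Section 1.

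There is essentially no obstacle: the entire argument is a direct bookkeeping computation once the coset leaders and their coset cardinalities are in hand. The only small point worth checking carefully is that, in the range $\delta_{2,n}+1\leq \delta\leq \delta_{1,n}$, no additional coset disappears from $T$ besides $C_{0}$ and $C_{\delta_{1,n}}$; this is a consequence of the defining property of $\delta_{1,n}$ and $\delta_{2,n}$ as the top two coset leaders, so the dimension count is independent of the particular $\delta$ in this interval, matching the statement of Theorem \ref{th19}. Hence the theorem follows.
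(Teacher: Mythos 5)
Your proposal is correct and follows exactly the route the paper intends: the paper's proof is the one-line remark that the result follows directly from Lemmas \ref{lem3} and \ref{lem17}, and your argument simply spells out that bookkeeping (identifying $T$ as $\Z_n$ minus $C_0$, $C_{\delta_{1,n}}$, and possibly $C_{\delta_{2,n}}$, then inserting the coset sizes $|C_{\delta_{1,n}}|=s$ or $2s$ and $|C_{\delta_{2,n}}|=2s$ from those lemmas, plus the BCH bound). All six dimension counts check out, so nothing further is needed.
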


\begin{theorem}\label{th20}
	Let $q$ be a prime power and $s>4$, then we have the following.
	\begin{itemize}
		\item If $2\nmid s$, then the code $\widetilde{\C}^{\perp}_{(q,n,\delta)}$ with $\delta_{2,n}+1\leq \delta \leq \delta_{1,n}$ has parameters
			\begin{center}
				$[\frac{q^{2s}-1}{q+1},\frac{q^{2s}-1}{q+1}-s,2]$.
			\end{center}
		\item If $2\mid s$ and $s>6$, then the code $\widetilde{\C}^{\perp}_{(q,n,\delta)}$ with $\delta_{2,n}+1\leq \delta \leq \delta_{1,n}$ has parameters
			\begin{center}
				$[\frac{q^{2s}-1}{q+1},\frac{q^{2s}-1}{q+1}-2s, 3]$.
			\end{center}
	\end{itemize}
	\begin{proof}
		By definition, we know that the defining set of $\widetilde{\C}^{\perp}_{(q,n,\delta)}$ with respect to $\beta_{1}$ is $T^{\perp}=(\Z_{n}\backslash T)^{-1}=C_{CL(n-\delta_{1,n})}$, then $d(\widetilde{\C}_{(q,n,\delta)})\geq 2$. Note that $\widetilde{\C}_{(q,n,\delta)}\subset {\C}_{(q,n,\delta)} $, which means $ \widetilde{\C}^{\perp}_{(q,n,\delta)} \supset {\C}^{\perp}_{(q,n,\delta)}$, then $d({\widetilde{\C}^{\perp}_{(q,n,\delta)}})\leq d({{\C}^{\perp}_{(q,n,\delta)}})$.
		 
		If $2\nmid s$, we have $2\leq d({\widetilde{\C}^{\perp}_{(q,n,\delta)}})\leq d({\C}^{\perp}_{(q,n,\delta)})=2 \Longrightarrow d({\widetilde{\C}^{\perp}_{(q,n,\delta)}})=2$. Since $CL(n-\delta_{1,n})=\frac{q^s+1}{q+1}$ and $\mid C_{\frac{q^s+1}{q+1}}\mid=s$. This means
		\begin{center}
			$\dim(\widetilde{\C}^{\perp}_{(q,n,\delta)})=n-\mid C_{CL(n-\delta_{1})}\mid=n-s.$
		\end{center}
	Then the parameters of code $\widetilde{\C}^{\perp}_{(q,n,\delta)}$ are determined.
	
		If $2\mid s$ and $s>6$, note that $CL(n-\delta_{1,n})=\frac{q^{s-1}+1}{q+1}$ and $\mid C_{\frac{q^{s-1}+1}{q+1}}\mid=2s$. This means
		\begin{center}
			$\dim(\widetilde{\C}^{\perp}_{(q,n,\delta)})=n-\mid C_{CL(n-\delta_{1})}\mid=n-2s.$
		\end{center}
		Set $s_{1}=\frac{q^{s-1}+1}{q+1}$, then the parity-check matrix is
		\begin{center}
			$H$=$\begin{pmatrix}
				1 & \beta_{1}^{s_{1}} & \beta_{1}^{2s_{1}} & \cdots &\beta_{1}^{(n-1)s_{1}}
			\end{pmatrix}.$
		\end{center}
		Since $\gcd(q^{s-1}+1,q^{2s}-1)=q^{\gcd(s-1,2s)}+1=q+1$, then $\gcd(s_{1},n)=1$. Therefore, $\beta_{1}^{s_{1}}$ is a primitive $n$-th root of unity, then $\widetilde{\C}^{\perp}_{(q,n,\delta)}$ is the Hamming code and  $d(\widetilde{\C}^{\perp}_{(q,n,\delta)})=3$.
		Then the parameters of code $\widetilde{\C}^{\perp}_{(q,n,\delta)}$ are determined.
		Thus we complete the proof.
	\end{proof}
\end{theorem}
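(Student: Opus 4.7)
The plan is to first translate the statement into the language of defining sets. By definition, $\widetilde{\C}_{(q,n,\delta)}=\langle(x-1)g_{(q,n,\delta,1)}(x)\rangle$ has defining set $T_{\widetilde{\C}}=\{0\}\cup T_{\C}$ with respect to $\beta_1$, where $T_{\C}$ is the defining set of the narrow-sense BCH code $\C_{(q,n,\delta)}$. For $\delta_{2,n}+1\le\delta\le\delta_{1,n}$, Lemma \ref{lem3} and Lemma \ref{lem16} together imply that the only coset leader strictly exceeding $\delta_{2,n}$ is $\delta_{1,n}$, hence $T_{\widetilde{\C}}=\Z_n\setminus C_{\delta_{1,n}}$. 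Taking complements and inverses yields the defining set of the dual,
\[T^{\perp}=(\Z_n\setminus T_{\widetilde{\C}})^{-1}=C_{CL(n-\delta_{1,n})}.\]

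Next, I would read off the dimension. Writing $n-\delta_{1,n}=q^{s-1}(q^s+1)/(q+1)$ when $2\nmid s$, and $n-\delta_{1,n}=q^s(q^{s-1}+1)/(q+1)$ when $2\mid s$, and stripping the leading power of $q$ (which does not change the coset), identifies $CL(n-\delta_{1,n})$ as $s_1=(q^s+1)/(q+1)$ in the odd case and $s_1=(q^{s-1}+1)/(q+1)$ in the even case. The coset sizes, $s$ and $2s$ respectively, follow from Lemma \ref{lem3}, producing the claimed dimensions $n-s$ and $n-2s$.

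For the minimum distance the upper bound comes free from the inclusion $\widetilde{\C}_{(q,n,\delta)}\subset\C_{(q,n,\delta)}$, which inverts to $\widetilde{\C}^{\perp}\supset\C^{\perp}$ and gives $d(\widetilde{\C}^{\perp})\le d(\C^{\perp})$; by Theorem \ref{th18} this bound equals $2$ in the odd case and is at most $4$ in the even case. The lower bound $d(\widetilde{\C}^{\perp})\ge 2$ is immediate from the parity-check matrix $H=(1,\beta_1^{s_1},\ldots,\beta_1^{(n-1)s_1})$ since every column is nonzero, which already closes the $2\nmid s$ case.

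The main obstacle will be proving $d\ge 3$ in the $2\mid s$ case. This amounts to ruling out weight-two codewords, equivalently showing that no two columns of $H$ are $\F_q^{*}$-proportional, i.e., $\beta_1^{ks_1}\notin\F_q^{*}$ for $1\le k\le n-1$. Here I would invoke the identity $\gcd(q^{s-1}+1,q^{2s}-1)=q^{\gcd(s-1,2s)}+1=q+1$, valid since $\gcd(s-1,2s)=1$ when $s$ is even, to deduce $\gcd(s_1,n)=1$, so that $\beta_1^{s_1}$ is a primitive $n$-th root of unity. The delicate point is then to conclude that $\langle\beta_1^{s_1}\rangle\cap\F_q^{*}=\{1\}$, which ensures pairwise non-proportionality of the columns and pins the minimum distance of $\widetilde{\C}^{\perp}_{(q,n,\delta)}$ at exactly $3$.
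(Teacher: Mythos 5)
Your proposal follows the paper's own proof essentially step for step: the same defining-set computation $T^{\perp}=C_{CL(n-\delta_{1,n})}$, the same dimension count via $\lvert C_{\delta_{1,n}}\rvert$, the same inclusion $\widetilde{\C}^{\perp}_{(q,n,\delta)}\supset\C^{\perp}_{(q,n,\delta)}$ for the upper bound on the distance, and the same single-row parity-check matrix together with $\gcd(s_{1},n)=1$ in the even case. The odd case is complete. The problem is precisely the step you flag as ``the delicate point'': showing $\langle\beta_{1}^{s_{1}}\rangle\cap\F_{q}^{*}=\{1\}$. You leave it unproved, and it cannot be proved, because it is false for every $q\geq 3$. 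Since $\gamma:=\beta_{1}^{s_{1}}$ is a primitive $n$-th root of unity, $\langle\gamma\rangle\cap\F_{q}^{*}$ is the cyclic subgroup of order $g:=\gcd(n,q-1)$ (any element of order dividing $q-1$ lies in $\F_{q}^{*}$), so your condition is equivalent to $\gcd(n,q-1)=1$. But from $(q+1)n=q^{2s}-1$ and $q\equiv 1\pmod{q-1}$ one gets $2n\equiv 0\pmod{q-1}$, hence $\frac{q-1}{\gcd(2,q-1)}\mid n$; this already forces $g>1$ for every $q\geq 4$, and for $q=3$ one checks directly that $8\mid 9^{s}-1$, so $n=\frac{9^{s}-1}{4}$ is even and $g=2$. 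Whenever $g>1$, put $k=n/g$: then $\gamma^{k}\in\F_{q}^{*}\setminus\{1\}$, and the vector with $-\gamma^{k}$ in position $0$, $1$ in position $k$ and zeros elsewhere lies in $\F_{q}^{n}$ and is annihilated by $H$, i.e.\ it is a weight-$2$ codeword of $\widetilde{\C}^{\perp}_{(q,n,\delta)}$. So in the even case the minimum distance is $2$ for all $q\geq 3$, not $3$.

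In other words, your instinct to isolate $\langle\gamma\rangle\cap\F_{q}^{*}=\{1\}$ as the crux was exactly right, but leaving it open is a genuine gap, and the gap is unfillable as stated: primitivity of $\beta_{1}^{s_{1}}$ rules out repeated columns of $H$, not $\F_{q}^{*}$-proportional ones. The paper's proof makes the same unjustified leap by declaring the code to be ``the Hamming code'' --- it is not, since a $q$-ary Hamming code has length $\frac{q^{m}-1}{q-1}$, whereas here $n=\frac{q^{2s}-1}{q+1}$ and $\gcd(n,q-1)>1$ for $q\geq 3$. The claimed value $d=3$ in the $2\mid s$ case survives only for $q=2$ (where $q-1=1$ makes the issue vacuous); for $q\geq 3$ both your argument and the paper's need to be corrected to $d=2$.
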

\begin{example}
	Let $(q,s)=(2,5)$, we have $n=341$ then $\delta_{1,n}=165$ and $\delta_{2,n}=149$. The BCH code $\C_{(2,341,\delta)}$ with $149\leq \delta\leq 165$ has parameters $[341,6,d\geq \delta]$, the BCH code $\C_{(2,341,149)}$ has parameters $[341,16,d\geq 149]$, the code ${\C}^{\perp}_{(2,341,149)}$ has parameters $[341,325,4\geq d\geq 3]$.
\end{example}
\begin{example}
	Let $(q,s)=(3,3)$, we have $n=182$. Then $\delta_{2,n}=101$. The BCH code $\C_{(3,182,101)}$ has parameters $[182,10,\geq 101]$.
\end{example}

For $2\mid s$ and $s\geq 4$, we will investigate the dimension of the code $\C_{(q,n,\delta)}$, where $\lceil \frac{q}{2}\rceil q^{s-1}\leq \delta \leq \frac{q^{s+1}+1}{q+1}$.
\begin{lemma}\label{lem23}
	Let $\lceil \frac{q}{2}\rceil q^{s-1}\leq i< \frac{q^{s+1}+1}{q+1}$ and $[i]_{q}\neq 0$, then we have the following.
	\begin{itemize}
		\item [(1)]If $ 2\mid q$ and let $i=\frac{a_{s}q^{s}+a_{0}}{q+1}$, where $a_{s}=q-t$ and $a_{0}=t+1$ for all $1\leq t\leq \frac{q-2}{2}$, then $i$ is not a coset leader. Otherwise, $i$ is a coset leader and $\mid C_{i}\mid=2s$.
		\item [(2)]If $2\nmid q$ and let $i=\frac{a_{s}q^{s}+a_{0}}{q+1}$, where $a_{s}=q-t$ and $a_{0}=t+1$ for all $1\leq t\leq \frac{q-3}{2}$, then $i$ is not a coset leader. Otherwise, $i$ is a coset leader and $\mid C_{i}\mid=2s$.
	\end{itemize}
	\begin{proof}
		We just give the proof for Case	(1), since the proof for Case (2) is similar.
		
		Note that $i\in\MinRep_n$ if and only if $i(q+1)\in\MinRep_{q^{m}-1}$
		by Lemma \ref{lem3}, then we need to find $a$ such that $a\in\MinRep_{q^{m}-1}$, $\frac{q^{s}(q+1)}{2}\leq a \leq q^{s+1}$ and $(q+1)\mid a$.
		
		Note that $[\frac{q^{s}(q+1)}{2},q^{s+1}]\subset [q^{s}+1,q^{s+1}]$, then we divide $\frac{q^{s}(q+1)}{2}\leq a \leq q^{s+1}$ into the following three cases by Lemma \ref{lem2}.
		
		\begin{itemize}
			\item [1)] $a=c(q^{s}+1)$ for $1\leq c\leq q-1$. Since $2\mid q$ and $2\mid s$, then $\gcd(q^{s}+1,q^{2}-1)=1$, which implies that $\gcd(q^{s}+1,q+1)=1$. Hence, $q+1\nmid c(q^{s}+1)$ for all $1\leq c\leq q-1$, then we do not consider these values.
			\item [2)] $a=a_{s}q^{s}+a_{0}$ for $1\leq a_{0}<a_{s}\leq q-1$.
			Since $a=a_{s}(q^{s}-1)+a_{s}+a_{0}$, then $q+1\mid a$ if 
			$\begin{cases}
				a_{s}=q-t \\a_{0}=t+1
			\end{cases}$
			for all $1\leq t\leq \frac{q}{2}-1$. Note that $$a_{s}q^{s}+a_{0}\geq (\frac{q}{2}+1)q^{s}+\frac{q}{2}>\frac{q^{s}(q+1)}{2},$$
			then $q+1\mid a$ and $a\in [\frac{q^{s}(q+1)}{2},q^{s+1}]$ if 
			$\begin{cases}
				a_{s}=q-t \\a_{0}=t+1
			\end{cases}$
			for all $1\leq t\leq \frac{q}{2}-1$, but $a\notin\MinRep_{q^{m}-1}$.
			\item [3)] Otherwise, for any $a\in [\frac{q^{s}(q+1)}{2},q^{s+1}]$ and $q+1\mid a$, we know that $a\in\MinRep_{q^{m}-1}$ and $\mid C_{a}\mid =2s$.
		\end{itemize}

	For $\frac{q^{s}(q+1)}{2}\leq a< q^{s+1}+1$, we konw that $a\notin\MinRep_{q^{m}-1}$ and $(q+1)\mid a$ if and only if $a=a_{s}q^{s}+a_{0}$ and 
	$\begin{cases}
		a_{s}=q-t \\a_{0}=t+1
	\end{cases}$
	for all $1\leq t\leq \frac{q}{2}-1$. Combining all the cases above, the desired conclusion then follows.
	\end{proof}
\end{lemma}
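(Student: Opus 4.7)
The plan is to pull the problem back from length $n = (q^{2s}-1)/(q+1)$ to length $q^m-1$ via Lemma~\ref{lem3}: $i$ is a coset leader modulo $n$ if and only if $a := i(q+1)$ is a coset leader modulo $q^m-1$. Because $\gcd(q,q+1)=1$, the side hypothesis $[i]_q \neq 0$ is equivalent to $q\nmid a$, which is exactly what Lemma~\ref{lem2}(c) requires. Translating the given bounds, $a$ lies in $[\lceil q/2\rceil q^{s-1}(q+1),\, q^{s+1}]$, a window comfortably inside $[q^s+1,\, q^{s+1}]$, so Lemma~\ref{lem2}(c) applies. The task therefore reduces to deciding, for each of the three subcases (c.1), (c.2), (c.3), which $a$ in this window are also divisible by $q+1$, and then translating the answer back to $i$.

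For case (c.1) I would rule out any admissible $a$. Since $s$ is even, $q^s+1 \equiv 2 \pmod{q+1}$, so $\gcd(q+1,\,q^s+1) = \gcd(q+1,\,2)$; for even $q$ this gcd is $1$ and no $c\leq q-1$ works, while for odd $q$ the only candidate is $c = (q+1)/2$, giving $a = (q+1)(q^s+1)/2$, which a one-line estimate places strictly below $\lceil q/2\rceil q^{s-1}(q+1)$ for $s\geq 2$. For case (c.2), write $a = a_sq^s + a_0$ with $1\leq a_0 < a_s \leq q-1$. The divisibility $q+1\mid a$ forces $a_s + a_0 \equiv 0 \pmod{q+1}$, and since $2\leq a_s+a_0\leq 2q-2$ the only option is $a_s+a_0 = q+1$. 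Setting $a_s = q-t$, $a_0 = t+1$ makes this automatic, and the remaining inequalities $a_0\geq 1$, $a_s\leq q-1$, $a_0<a_s$ collapse to $1\leq t < (q-1)/2$. Restricting to integers gives $1\leq t\leq (q-2)/2$ when $q$ is even and $1\leq t\leq (q-3)/2$ when $q$ is odd, matching the exceptional families in the statement, and Lemma~\ref{lem2}(c.2) says those $a$ are not coset leaders.

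Every remaining $a$ in the window with $q+1\mid a$ automatically falls under (c.3), hence is a coset leader modulo $q^m-1$ with $|C_a|=2s$. Pulling back through Lemma~\ref{lem3} yields that the corresponding $i$ is a coset leader modulo $n$, and the identity $n\mid i(q^\ell-1) \iff q^m-1\mid a(q^\ell-1)$ transfers the order to give $|C_i| = |C_a| = 2s$. The main obstacle I anticipate is the boundary bookkeeping: checking in (c.1) for odd $q$ that $(q+1)(q^s+1)/2$ really lies below $\lceil q/2\rceil q^{s-1}(q+1)$ for every $s\geq 2$, and making sure in (c.2) that the $t$-range is cut off at the right integer for each parity of $q$. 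Both are elementary inequalities, so once they are in place the case analysis carries the proof without further obstruction.
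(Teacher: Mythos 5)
Your proposal is correct and follows essentially the same route as the paper: reduce via Lemma~\ref{lem3} to coset leaders $a=i(q+1)$ modulo $q^m-1$ in the window inside $[q^s+1,q^{s+1}]$, run through the three subcases of Lemma~\ref{lem2}(c), and identify the exceptional family from the congruence $a_s+a_0\equiv 0\pmod{q+1}$. You are in fact slightly more complete than the paper, which only writes out the even-$q$ case; your handling of $c=(q+1)/2$ in subcase (c.1) for odd $q$ and the parity-dependent cutoff for $t$ are exactly the "similar" details the paper omits.
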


\begin{theorem}\label{th24}
	Let $2\mid s$ and $s\geq 4$. Then the dimension $k$ of $\C_{(q,n,\delta)}$ is given as follows, where $\lceil \frac{q}{2}\rceil q^{s-1}+1\leq \delta \leq \frac{q^{s+1}+1}{q+1}$.
	\begin{itemize}
		\item [(1)]If $2\mid q$, then we have
		$$k=n-2s(\delta+\frac{q-2}{2}-\left\lfloor\frac{(\delta-1)(q+1)}{q^{s}}\right\rfloor)+2s\left\lfloor\frac{\delta-1}{q}\right\rfloor.$$
	\item [(2)]If $2\nmid q$, then we have
	$$k=n-2s(\delta+\frac{q-1}{2}-\left\lfloor\frac{(\delta-1)(q+1)}{q^{s}}\right\rfloor)+2s\left\lfloor\frac{\delta-1}{q}\right\rfloor+s.$$
	\end{itemize}
 \begin{proof}
 	Form Lemmas \ref{lem6} and \ref{lem23}, we can obtain the results directly.
 \end{proof}
\end{theorem}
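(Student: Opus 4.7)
The plan is to compute $\dim \C_{(q,n,\delta)} = n - |T|$, where $T = \bigcup_{i=1}^{\delta-1} C_i$, by counting coset leaders $j\in[1,\delta-1]$ weighted by $|C_j|$. I would split at the threshold $\delta_0 := \lceil q/2\rceil q^{s-1}+1$. For $\delta \leq \delta_0$ the count is supplied by Lemma~\ref{lem6}(b), which effectively says that in $[1,\lceil q/2\rceil q^{s-1}]$ every non-multiple of $q$ is a coset leader of size $2s$, with the single exception (when $2\nmid q$) that $(q^s+1)/2$ has coset size $s$; this explains the $+s$ summand appearing only in Lemma~\ref{lem6}(b.2). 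For $\delta > \delta_0$, Lemma~\ref{lem23} classifies the new entrants: a non-multiple of $q$ in $[\delta_0,\delta-1]$ is a coset leader of size $2s$ unless it has the exceptional form $i_t := \frac{(q-t)q^s+(t+1)}{q+1}$, where $1\le t\le \frac{q-2}{2}$ if $2\mid q$ and $1\le t\le \frac{q-3}{2}$ if $2\nmid q$.

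The central combinatorial identity is that the number $N(\delta)$ of exceptions $i_t$ lying in $[1,\delta-1]$ equals
\[
N(\delta) \;=\; \Bigl\lfloor \frac{(\delta-1)(q+1)}{q^{s}} \Bigr\rfloor - \Bigl\lceil \frac{q}{2} \Bigr\rceil .
\]
To verify this I would first observe that $i_t(q+1) = (q-t)q^s + (t+1)$ with $0<t+1<q^s$, so $\lfloor i_t(q+1)/q^s\rfloor = q-t$, and moreover $i_t$ is strictly decreasing in $t$. Hence $\lfloor (\delta-1)(q+1)/q^s\rfloor$ jumps upward by exactly $1$ each time $\delta-1$ crosses a new $i_t$. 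The additive constant is pinned down by direct evaluation at the boundary $\delta=\delta_0$: a short computation gives $(\delta_0-1)(q+1)/q^s = (q+1)/2$ when $2\mid q$ and $(q+1)^2/(2q)$ when $2\nmid q$, both of which have floor equal to $\lceil q/2\rceil$, so $N(\delta_0)=0$ as required.

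Assembling, the number of coset leaders in $[1,\delta-1]$ equals
\[
(\delta-1) - \Bigl\lfloor \frac{\delta-1}{q}\Bigr\rfloor - N(\delta),
\]
of which exactly one has size $s$ (and the rest have size $2s$) when $2\nmid q$, while all have size $2s$ when $2\mid q$ (since Lemma~\ref{lem2}(c.1) together with $\gcd(q+1,q^s+1)=1$ for even $q$ rules out the size-$s$ case in that parity). Multiplying out $|T|$, subtracting from $n$, and using the simplifications $\delta-1+\lceil q/2\rceil = \delta + \frac{q-2}{2}$ or $\delta+\frac{q-1}{2}$ according to the parity of $q$, yields the two formulas of the theorem.

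The main obstacle will be the exception-counting identity for $N(\delta)$; once established, the remainder is purely bookkeeping. One must also take care not to double-count across the three disjoint sources of departure from the naive baseline ``every integer in $[1,\delta-1]$ is a size-$2s$ coset leader'': the multiples of $q$, the size-$s$ coset at $(q^s+1)/2$ (odd $q$ only, and always in range since $\delta_0 > (q^s+1)/2$ for $s\geq 2$), and the exceptional elements $i_t$ supplied by Lemma~\ref{lem23}.
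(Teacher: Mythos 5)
Your proposal is correct and takes the same route as the paper, whose proof is simply the one-line instruction to combine Lemma \ref{lem6} (for the base range $\delta\leq\lceil q/2\rceil q^{s-1}+1$) with Lemma \ref{lem23} (for the new entrants beyond it); you carry out exactly that combination. The only substance the paper leaves implicit is the counting identity $N(\delta)=\bigl\lfloor(\delta-1)(q+1)/q^{s}\bigr\rfloor-\lceil q/2\rceil$ for the non-leaders $i_t=\frac{(q-t)q^s+(t+1)}{q+1}$, which you state and justify correctly via $i_t=\bigl\lceil (q-t)q^s/(q+1)\bigr\rceil$ and the boundary evaluation at $\delta_0$.
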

\begin{example}
	Let $(q,s)=(2,4)$, we have $n=85$. The BCH code $\C_{(2,85,9)}$ has parameters $[85,53,\geq 9]$. The code is almost optimal in the sense that the minimum distance of the optimal binary linear code with length 85 and dimension 53 is 10 according to the tables of best codes known in (\cite{RefJ11}) when the equality holds.
\end{example}
We will give the dimension and the minimum distance for special designed distance in the following theorem.
\begin{theorem}\label{th26}
	Let $a$ be an integer and $1\leq a\leq q-1$. Then the following holds.
	\begin{itemize}
		\item [(1)]For $2\nmid s$ and $s\geq 3$ $(s\geq 5$ if $q=2)$,
		\begin{itemize}
			\item [(1.1)]if $\delta=a\frac{q^{s}-1}{q-1}$, then $\C_{(q,n,\delta)}$ has parameters $\left[\frac{q^{2s}-1}{q+1},k,\delta\right]$, where
			\begin{center}
				$k$=$\begin{cases}
					n-2s(\delta-1)+2s\left\lfloor\frac{\delta-1}{q}\right\rfloor+s\left\lfloor\frac{(\delta-1)(q+1)}{q^{s}+1}\right\rfloor,& \text{if $1\leq a\leq q-3$} ;\\
					n-2s(\delta-1)+2s\left\lfloor\frac{\delta-1}{q}\right\rfloor+s(q+1),& \text{if $a=q-2$};\\
					n-s\left((q-1)(2q^{s-1}-3)-2\right),&\text{if $a=q-1$}.
				\end{cases}$
			\end{center}
			\item [(1.2)]if $\delta=a\frac{q^{s}+1}{q+1}$, then $\C_{(q,n,\delta)}$ has parameters $\left[\frac{q^{2s}-1}{q+1},k,\delta\right]$, where
			\begin{center}
				$k$=$\begin{cases}
					n-2s(\delta-1)+2s\left\lfloor\frac{\delta-1}{q}\right\rfloor,& \text{if $a=1$} ;\\
					n-s(2\delta-a-1)+2s\left\lfloor\frac{\delta-1}{q}\right\rfloor,& \text{if $2\leq a\leq q-1$}.
				\end{cases}$	
			\end{center}
		\end{itemize}
		\item [(2)]For $2\mid s$ and $s\geq 4$, if $\delta=a\frac{q^{s}-1}{q^{2}-1}$, then $\C_{(q,n,\delta)}$ has parameters $\left[\frac{q^{2s}-1}{q+1},k,\delta\right]$, where
		\begin{center}
			$k$=$n-2s(\delta-1)+2s\left\lfloor \frac{\delta-1}{q} \right\rfloor$.
		\end{center}
	\end{itemize}
	\begin{proof}
	We only prove Case (1), since the proof for Case (2) is similar.
		
	Note that $q+1\mid q^{s}+1$ and $q-1\mid q^{s}-1$ if $2\nmid s$, then $\frac{q^{s}-1}{q-1}$ and $\frac{q^{s}+1}{q+1}$ are integers. Since $q^{s}-1\mid \frac{q^{s}+1}{q+1}(q^{s}-1)$ and $ q^{s}+1\mid \frac{q^{s}-1}{q-1}(q^{s}+1)$, then $\frac{q^{s}-1}{q-1}\mid \frac{n}{q-1}$ and $\frac{q^{s}+1}{q+1}\mid \frac{n}{q-1}$.
	Obviously, $q^{2}-1\mid q^{2s}-1$, then $\gcd(n,q-1)=q-1$ and $\gcd(n,q)=1$. Hence, if $\delta=a\frac{q^{s}-1}{q-1}$ and $a\frac{q^{s}+1}{q+1}$, then we have $d(\C_{(q,n,\delta)})=\delta$ by Lemma \ref{lem5}.
		
		For $\delta=a\frac{q^{s}-1}{q-1}$, we divide into the following two cases by Lemma \ref{lem6}.
		\begin{itemize}
			\item [1)]For $q=2$ and $s\geq 5$, then $a=1$ and $\delta=\frac{2^{s}-1}{2-1}$. Note that, $(2-1)2^{s-1}+\frac{2^{s}+1}{2+1}+1\leq \delta\leq 2^{s}+1$, and $\dim(\C_{(2,n,\delta)})=n-s(2^{s}-5)$.
			\item [2)]For $q\geq 3$ and $s\geq 3$,
			\begin{itemize}
				\item[$\bullet$] if $1\leq a\leq q-3$, note that $$\frac{q^{s}+1}{q+1}+1\leq \frac{q^{s}-1}{q-1}\leq \delta\leq (q-3)\frac{q^{s}-1}{q-1} \leq (q-1)\frac{q^{s}+1}{q+1}+1,$$ and $\dim(\C_{(q,n,\delta)})=n-2s(\delta-1)+2s\left\lfloor\frac{\delta-1}{q}\right\rfloor+s\left\lfloor\frac{(\delta-1)(q+1)}{q^{s}+1}\right\rfloor$.
				\item[$\bullet$] if $a=q-2$, note that $$(q-1)\frac{q^{s}+1}{q+1}+2\leq (q-2)\frac{q^{s}-1}{q-1} \leq \frac{q^{s+1}-1}{q+1}+2,$$ and $\dim(\C_{(q,n,\delta)})=n-2s(\delta-1)+2s\left\lfloor\frac{\delta-1}{q}\right\rfloor+s(q+1)$.
				\item[$\bullet$] if $a=q-1$, note that $$(q-1)q^{s-1}+\frac{q^{s}+1}{q+1}+1\leq (q-1)\frac{q^{s}-1}{q-1} \leq q^{s}+1,$$ and $\dim(\C_{(q,n,q^{s}-1)})=n-s\left((q-1)(2q^{s-1}-3)-2\right)$.
			\end{itemize}
		\end{itemize}
	
		For $\delta=a\frac{q^{s}+1}{q+1}$, we still obtain the dimension by determining the range of $\delta$, i.e., the result corresponds to (1.2). Thus we complete the proof.
	\end{proof}
\end{theorem}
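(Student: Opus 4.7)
The plan is to reduce both the minimum-distance and dimension claims to the two prior results Lemma \ref{lem5} and Lemma \ref{lem6}. For the minimum distance I will apply Lemma \ref{lem5} after verifying the divisibility hypothesis; for the dimension I will locate $\delta$ in the appropriate sub-range of Lemma \ref{lem6} and read off the stated formula.

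First, since $q^{2}-1\mid q^{2s}-1$ we get $q-1\mid n$, and $\gcd(n,q)=1$ is automatic. A direct factorization gives $n/(q-1)=(q^{2s}-1)/(q^{2}-1)$, which equals $\frac{q^{s}-1}{q-1}\cdot\frac{q^{s}+1}{q+1}$ when $2\nmid s$ and $\frac{q^{s}-1}{q^{2}-1}\cdot(q^{s}+1)$ when $2\mid s$. Hence in each of the three cases the candidate divisor $\delta_{b}\in\{(q^{s}-1)/(q-1),(q^{s}+1)/(q+1),(q^{s}-1)/(q^{2}-1)\}$ divides $n/(q-1)$, and Lemma \ref{lem5} immediately yields $d(\C_{(q,n,\delta)})=\delta$ for every $\delta=a\delta_{b}$ with $1\leq a\leq q-1$.

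With the minimum distance settled, only the dimension formulas remain, and this is a matter of matching $\delta$ to the correct sub-case of Lemma \ref{lem6}. I will proceed case by case. For (1.1) with $\delta=a(q^{s}-1)/(q-1)$, the bounds $1\leq a\leq q-3$ place $\delta$ in the range of Lemma \ref{lem6}(a.2); $a=q-2$ falls in the range of (a.3); and $a=q-1$, which sits close to the top endpoint $q^{s}+1$, falls in the range of (a.4). For (1.2) with $\delta=a(q^{s}+1)/(q+1)$, $a=1$ triggers Lemma \ref{lem6}(a.1) while $2\leq a\leq q-1$ triggers (a.2). For part (2), the inequality $\delta\leq (q-1)(q^{s}-1)/(q^{2}-1)\leq\lceil q/2\rceil q^{s-1}+1$ when $s\geq 4$ shows that $\delta$ sits in the range covered by Lemma \ref{lem6}(b). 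The special sub-case $q=2$, $s\geq 5$ of (1.1) reduces to $\delta=2^{s}-1$, which again lies in the correct range of Lemma \ref{lem6}(a).

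The proof is conceptually routine but the main obstacle is the arithmetic bookkeeping: each value of $a$ must be matched to exactly one interval from Lemma \ref{lem6}, and the resulting dimension expression (involving nested floor functions of $\delta-1$ divided by $q$ or by $(q^{s}+1)/(q+1)$) must be simplified into the closed form displayed in the theorem. The trickiest substitution is (1.1) at $a=q-1$, where the displayed simplified formula $n-s\bigl((q-1)(2q^{s-1}-3)-2\bigr)$ arises by carefully collapsing $\lfloor(\delta-1)/q\rfloor$, $\lfloor(\delta-1)(q+1)/(q^{s}+1)\rfloor$, and the additive correction $3s(q-1)$ from Lemma \ref{lem6}(a.4) evaluated at $\delta=q^{s}-1$.
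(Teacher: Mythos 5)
Your proposal matches the paper's own proof in both structure and substance: the minimum distance is obtained from Lemma \ref{lem5} after checking that $\frac{q^{s}-1}{q-1}$, $\frac{q^{s}+1}{q+1}$ (resp.\ $\frac{q^{s}-1}{q^{2}-1}$) divide $\frac{n}{q-1}$, and the dimension is read off from Lemma \ref{lem6} by placing each $\delta=a\delta_b$ into the same sub-intervals you identify ((a.2) for $1\leq a\leq q-3$, (a.3) for $a=q-2$, (a.4) for $a=q-1$ and for $q=2$, (a.1)/(a.2) for case (1.2), and (b) for case (2)). The remaining work in both arguments is the same routine simplification of the floor expressions, so the two proofs are essentially identical.
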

\begin{example}
	We have the following examples for the code of Theorem \ref{th26}.
	\\$\bullet$ Let $(q,s)=(2,4)$, we have $n=85$. The BCH code $\C_{(2,85,69)}$ has parameters $[85,69,5]$. The code is almost optimal according to the tables of best codes known in (\cite{RefJ11}).
	\\$\bullet$ Let $(q,s)=(2,5)$, we have $n=341$. The BCH code $\C_{(2,341,206)}$ has parameters $[341,206,31]$. The BCH code $\C_{(2,341,291)}$ has parameters $[341,291,11]$.
\end{example}

For $b=1$, the conditions of $\C_{(q,n,\delta)}$ being dually-BCH codes have been given by Lemma~\ref{lem8}. For $b=2$, we have

\begin{theorem}\label{th28}
	Let $q\geq 3$ and $s\geq 2$. Then the following statements hold.
	\begin{itemize}
		\item [(1)]If $s=2$, then $\C_{(q,n,\delta,2)}$ is a dually-BCH code if and only if
		\begin{center}
			$\delta_{1,n}-1\leq \delta\leq n-1$.
		\end{center}
		\item [(2)]If $s\neq2$, then $\C_{(q,n,\delta,2)}$ is a dually-BCH code if and only if
		\begin{center}
			$\delta_{1,n}\leq \delta\leq n-1$.
		\end{center}
	\end{itemize}
	\begin{proof}
		We just prove Case (2), as the conclusion for Case (1) can be similarly proved.
		
		By the definition, the defining set of ${\C}_{(q,n,\delta,2)}$ with respect to $\beta_{1}$ is $T=C_{2}\cup C_{3}\cup\cdots\cup C_{\delta}$, $2\leq \delta\leq n-1$. Note that $0\notin T$, i.e., $0\in T^{\perp}$, this means $C_{0}\subset T^{\perp}$. Therefore, if $\C_{(q,n,\delta,2)}$ is a dually-BCH code, then there exists $r\geq 1$ such that $T^{\perp}=C_{0}\cup C_{1}\cup \cdots\cup C_{r-1}$.
		
		If $q\leq \delta\leq n-1$, then $C_{1}=C_{q}\subset T$, i.e., $T=C_{1}\cup C_{2}\cup\cdots\cup C_{\delta}$. By Lemma \ref{lem8}, we know that ${\C}_{(q,\frac{q^{2s}-1}{q+1},\delta,2)}$ is a dually-BCH code if and only if $\delta_{1,n}\leq \delta\leq n-1$.
		
		If $2\leq \delta\leq q-1$, note that $\frac{nq^{\lceil \frac{m}{2}\rceil}}{q^{m}-1}=\frac{q^{s}}{q+1}$. Since $s\geq 2$, then $\left[1,q-1\right]\subset\left[ 1,\frac{nq^{\lceil \frac{m}{2}\rceil}}{q^{m}-1}\right]$. Hence, $i \in\MinRep_n$ and $\mid C_{i}\mid =2s$ for all $1\leq i\leq q-1$ by Lemma \ref{lem1}. We have $C_{2}\subset T$ and $C_{1}\not\subset T$, i.e., $2s\leq \dim(\C_{(q,n,\delta,2)})\leq n-2s<n$. We consider two cases.
		\begin{itemize}
			\item [1)]If $2\nmid s$, note that $\frac{q^{s}+1}{q+1}$ is a coset leader modulo $n$ and $\frac{q^{s}+1}{q+1}>q-1\geq \delta$, i.e., $\frac{q^{s}+1}{q+1}\notin T$. Since $CL(n-\frac{q^{s}+1}{q+1})=CL(\frac{q^{2s}-q^{s}-2}{q+1})=\frac{(q-1)q^{2s-1}-q^{s-1}-1}{q+1}=\delta_{1,n}$, then $C_{\delta_{1,n}}\subset T^{\perp}$. Therefore, if $\C_{(q,n,\delta,2)}$ is a dually-BCH code, then $T^{\perp}=C_{0}\cup C_{1}\cup \cdots \cup C_{\delta_{1,n}}$. However, $\dim(\C_{(q,n,\delta,2)})\leq n-2s$, which contradicts the equation of $\dim(\C_{(q,n,\delta,2)})+\dim(\C^{\perp}_{(q,n,\delta,2)})=n$.
			\item [2)]If $2\mid s$, note that $\frac{q^{s-1}+1}{q+1}$ is a coset leader modulo $n$ and $\frac{q^{s-1}+1}{q+1}>q-1\geq \delta$, i.e., $\frac{q^{s-1}+1}{q+1}\notin T$. Since $CL(n-\frac{q^{s-1}+1}{q+1})=\frac{(q-1)q^{2s-1}-q^{s}-1}{q+1}=\delta_{1,n}$, then $C_{\delta_{1,n}}\subset T^{\perp}$. Hence, if $\C_{(q,n,\delta,2)}$ is a dually-BCH code, then $T^{\perp}=C_{0}\cup C_{1}\cup \cdots \cup C_{\delta_{1,n}}$. However, $\dim(\C_{(q,n,\delta,2)})\leq n-2s$, which contradicts the equation of $\dim(\C_{(q,n,\delta,2)})+\dim(\C^{\perp}_{(q,n,\delta,2)})=n$.
		\end{itemize}
		 Thus we complete the proof.
	\end{proof}
\end{theorem}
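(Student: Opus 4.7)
The plan is to mirror the proof of Theorem \ref{th12}, now with $n=\frac{q^{2s}-1}{q+1}$ in place of $q^m-1$. By definition the defining set of $\C_{(q,n,\delta,2)}$ with respect to $\beta_1$ is $T=C_2\cup C_3\cup\cdots\cup C_\delta$; since $0\notin T$ we have $C_0\subseteq T^\perp$, so if $\C_{(q,n,\delta,2)}$ is dually-BCH then there must exist $r\geq 1$ with $T^\perp=C_0\cup C_1\cup\cdots\cup C_{r-1}$. I would then split on the value of $\delta$.

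First, for $q\leq\delta\leq n-1$ the relation $C_1=C_q$ gives $T=C_1\cup C_2\cup\cdots\cup C_\delta$, which is exactly the defining set of the narrow-sense code $\C_{(q,n,\delta+1,1)}$. Invoking Lemma \ref{lem8}(2) when $s=2$ and Lemma \ref{lem8}(3) when $s\geq 3$, I translate the lower bound by $-1$: for $s=2$ the option $\delta+1=2$ is excluded by $\delta\geq q\geq 3$, leaving $\delta_{1,n}-1\leq\delta\leq n-1$, while for $s\geq 3$ one obtains $\delta_{1,n}\leq\delta\leq n-1$. The explicit formulas in Lemma \ref{lem3} show $\delta_{1,n}-1\geq q$ and $\delta_{1,n}\geq q$ respectively, so intersecting with $\delta\geq q$ preserves the stated ranges and produces precisely the two parts of the theorem in this sub-case.

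Second, for $2\leq\delta\leq q-1$ I show that the code cannot be dually-BCH. Lemma \ref{lem1} applies since $q-1\leq\tfrac{nq^{\lceil m/2\rceil}}{q^m-1}=\tfrac{q^s}{q+1}$ when $s\geq 2$, so every $i\in\{1,\ldots,q-1\}$ is a coset leader with $|C_i|=2s$; in particular $C_1,\ldots,C_{q-1}$ are pairwise distinct and $C_1\not\subseteq T$, yielding $\dim(\C_{(q,n,\delta,2)})\leq n-|C_2|=n-2s$. Following the template of Theorem \ref{th12}, I exhibit a small witness coset leader $x\notin T$ whose opposite lies in $C_{\delta_{1,n}}$: take $x=\tfrac{q^s+1}{q+1}$ when $s$ is odd and $x=\tfrac{q^{s-1}+1}{q+1}$ when $s$ is even (the latter specializes to $x=1$ when $s=2$). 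Both are coset leaders by Lemma \ref{lem3}, both satisfy $x>q-1\geq\delta$ so $x\notin T$, and a direct computation identifying the minimum of $\{(n-x)q^j\bmod n\}$ with the explicit expression of $\delta_{1,n}$ in Lemma \ref{lem3} yields $CL(n-x)=\delta_{1,n}$. Consequently $C_{\delta_{1,n}}\subseteq T^\perp$. A dually-BCH assumption then forces $T^\perp\supseteq C_0\cup C_1\cup\cdots\cup C_{\delta_{1,n}}$, but by the maximality of $\delta_{1,n}$ this union equals $\Z_n$, whence $\dim(\C_{(q,n,\delta,2)})=n$, contradicting $\dim(\C_{(q,n,\delta,2)})\leq n-2s<n$.

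The main obstacle is the parity-sensitive verification $CL(n-x)=\delta_{1,n}$ for the chosen witness $x$, which boils down to isolating the minimum over a $q$-orbit of an explicit element of $\Z_n$ and matching it with the two branches of the $\delta_{1,n}$ formula in Lemma \ref{lem3}; the $s=2$ boundary also has to be treated carefully in the first sub-case, which is where the ``$\delta_{1,n}-1$'' lower bound appearing in part (1) originates.
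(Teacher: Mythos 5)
Your proposal is correct and follows essentially the same route as the paper: the same reduction to Lemma \ref{lem8} for $\delta\geq q$ (you are in fact more careful than the paper about the shift $\delta\mapsto\delta+1$, which is exactly where the ``$\delta_{1,n}-1$'' in part (1) comes from) and the same witness cosets $\frac{q^s+1}{q+1}$, $\frac{q^{s-1}+1}{q+1}$ with $CL(n-x)=\delta_{1,n}$ for the contradiction when $2\leq\delta\leq q-1$. The only slip is that for $s=2$ your witness is $x=1$, so the inequality $x>q-1\geq\delta$ fails; you should instead note that $1\notin T$ because $C_1,\ldots,C_{q-1}$ are distinct cosets, after which the argument goes through unchanged.
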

\begin{theorem}\label{th29}
Let $q=2$ and $s\geq 2$ be even. Then ${\C}_{(q,n,\delta,2)}$ is a dually-CBH code if and only if
	\begin{center}
		$\delta_{1,n}\leq \delta\leq n-1$.
	\end{center}
\begin{proof}
	The proof is very similar to that of Theorem \ref{th28}, hence we omit it.
\end{proof}
\end{theorem}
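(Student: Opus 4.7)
The plan is to reduce the $b=2$ setting to the narrow-sense case ($b=1$) and then apply Lemma~\ref{lem8} Part~(1). The key observation specific to $q=2$ is that $2\cdot 1\equiv 2\pmod{n}$ already puts $2$ in the cyclotomic coset of $1$, so $C_1=C_2$ modulo $n$. This simplifies matters compared to Theorem~\ref{th28}, because the range $2\leq\delta\leq q-1$ that required a separate contradiction argument there is now vacuous, so no residual case needs to be ruled out.

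First I would note that the defining set of $\C_{(2,n,\delta,2)}$ with respect to $\beta_1$ is $T=C_2\cup C_3\cup\cdots\cup C_\delta$. Since $C_1=C_2\subseteq T$ for every $\delta\geq 2$, we actually have $T=C_1\cup C_2\cup\cdots\cup C_\delta$, which is exactly the defining set of the narrow-sense BCH code $\C_{(2,n,\delta+1)}$. Consequently $\C_{(2,n,\delta,2)}=\C_{(2,n,\delta+1)}$ as cyclic codes. Next I would invoke Lemma~\ref{lem8} Part~(1): since $s\geq 2$ is even, $m=2s$ is even and at least $4$, so $\C_{(2,n,\delta')}$ is a dually-BCH code if and only if $\delta_{1,n}+1\leq\delta'\leq n$. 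Substituting $\delta'=\delta+1$ translates this condition into $\delta_{1,n}\leq\delta\leq n-1$, which is precisely the stated conclusion.

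The argument rests entirely on the identity $C_1=C_2$ modulo $n$ and on the already-established Lemma~\ref{lem8}, so there is no genuine obstacle. The author's remark that the proof parallels that of Theorem~\ref{th28} is accurate, the only difference being that the small-$\delta$ case that had to be excluded by a dimension-counting contradiction for $q\geq 3$ simply does not arise here.
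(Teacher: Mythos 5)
Your proposal is correct and follows essentially the same route as the paper: the paper omits the proof as "very similar to Theorem \ref{th28}", whose argument for $q\leq\delta\leq n-1$ is exactly your reduction $C_1=C_q\subseteq T$ followed by an appeal to Lemma \ref{lem8}, and for $q=2$ the residual range $2\leq\delta\leq q-1$ is indeed empty. Your identification of $\C_{(2,n,\delta,2)}$ with the narrow-sense code $\C_{(2,n,\delta+1)}$ and the shift $\delta'=\delta+1$ in Lemma \ref{lem8}(1) give precisely the stated condition $\delta_{1,n}\leq\delta\leq n-1$.
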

\section{The case of $n=\frac{q^{m}-1}{q-1}$}\label{set5}
In this section, we always suppose $n=\frac{q^m-1}{q-1}$ and $\beta_{2}=\alpha^{q-1}$, where $q\geq3$ and $m\geq 4$.
\subsection{The computation of $\delta_{2,n}$}
\begin{lemma}\label{lem30}
	Let $0\leq t<n$ and $t$ be of the form
	\begin{center}
		$t=(0,\underbrace{q-1,\ldots,q-1}_{n_{q-1}},\underbrace{q-2,\ldots,q-2}_{n_{q-2}},\ldots,\underbrace{i,\ldots,i}_{n_{i}},\ldots,\underbrace{1,\ldots,1}_{n_{1}}).$
	\end{center}
\begin{itemize}
	\item If $\sum_{i=l}^{q-1}n_{i}<j<\sum_{i=l+1}^{q-1}n_{i}$, where $l\in \lbrace1,2,\ldots,q-1\rbrace$, then $[tq^{j}]_{n}>t$.
	\item If $j=\sum_{i=l}^{q-1}n_{i}$, $q-1\geq l\geq 1$, then $$[tq^{j}]_{n}=(0,\underbrace{q-1,\ldots,q-1}_{n_{l-1}},\ldots,\underbrace{q-l+1,\ldots,q-l+1}_{n_{1}+1},\underbrace{q-l,\ldots,q-l}_{n_{q-1}},\ldots,\underbrace{1,\ldots,1}_{n_{l}-1}).$$
\end{itemize}
\begin{proof}
	Note that $\frac{q^m-1}{q-1}|{tq^{i}}-[tq^{i}]_{q^{m}-1}$, i.e., $[tq^{i}]_{q^{m}-1}\equiv[tq^{j}]_{n}$.
	\begin{itemize}
	\item[1)] If $\sum_{i=l}^{q-1}n_{i}>j>\sum_{i=l+1}^{q-1}n_{i}$, where $l\in \lbrace2,3,\ldots,q-1\rbrace$,
	then $[tq^{j}]_{n}$ is congruent to
	\begin{center}
		$[tq^{j}]_{q^{m}-1}=(\underbrace{l,\ldots,l}_{n_{l}-u},\underbrace{l-1,\ldots,l-1}_{n_{l-1}},\ldots,0,\underbrace{q-1,\ldots,q-1}_{n_{q-1}},\ldots,\underbrace{l,\ldots,l}_{u})$,
	\end{center}
	where $u=j-\sum_{i=l+1}^{q-1}n_{i}-1 ,0 \leq u <n_{l}-1$. Note that $ln>[tq^{j}]_{q^{m}-1}>n$, then $[tq^{j}]_{n}=[tq^{j}]_{q^{m}-1}-(l-1)n>q^{m-1}>t$.
	\\If $\sum_{i=1}^{q-1}n_{i}>j>\sum_{i=2}^{q-1}n_{i}$, then $[tq^{j}]_{n}$ is congruent to
	\begin{center}
		$[tq^{j}]_{q^{m}-1}=(\underbrace{1,\ldots,1}_{n_{l}-u},0,\underbrace{q-1,\ldots,q-1}_{n_{q-1}},\ldots,\underbrace{2,\ldots,2}_{n_{2}},\underbrace{l,\ldots,l}_{u})$,
	\end{center}
where $u=j-\sum_{i=2}^{q-1}n_{i}-1 ,0 \leq u <n_{1}-1$. Note that $[tq^{j}]_{q^{m}-1}<n$, then $[tq^{j}]_{n}=[tq^{j}]_{q^{m}-1}>q^{m-1}>t$.
	\item[2)] If $j=\sum_{i=l}^{q-1}n_{i}$, $q-1\geq l\geq 1$, then
	\begin{align}
		[tq^{j}]_{n}=&[tq^{j}]_{q^{m}-1}-(l-1)n\notag \\
		=&(0,\underbrace{q-1,\ldots,q-1}_{n_{l-1}},\ldots,\underbrace{q-l+1,\ldots,q-l+1}_{n_{1}+1},\underbrace{q-l,\ldots,q-l}_{n_{q-1}},\ldots,\underbrace{1,\ldots,1}_{n_{l}-1}). \label{con:e10}
	\end{align}
\end{itemize} Thus we complete the proof.
\end{proof}
\end{lemma}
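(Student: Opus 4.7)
The plan is to use two facts: (i) multiplication by $q$ on $\mathbb{Z}/(q^m-1)\mathbb{Z}$ is a cyclic left shift of the base-$q$ digit string of length $m$, and (ii) $n\mid q^m-1$, so $[tq^j]_n=[tq^j]_{q^m-1}\bmod n$. Writing $n=(1,1,\ldots,1)_q$, subtracting $kn$ amounts to subtracting $k$ from every digit (with borrowing, if needed). Set $T_l:=\sum_{i=l}^{q-1}n_i$; then a cyclic left shift by $j$ positions moves the original position $m-1-j$ to the top, and reading off which block of $t$ this lands in shows that the top digit of $[tq^j]_{q^m-1}$ equals $l$ precisely when $T_{l+1}<j\le T_l$.

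The heart of the argument is the identity
\[
[tq^j]_n=[tq^j]_{q^m-1}-(l-1)n,
\]
which I would establish by proving $(l-1)n\le [tq^j]_{q^m-1}<ln$. The lower bound $lq^{m-1}\ge (l-1)n$ follows by elementary rearrangement. The upper bound is subtler: a string with top digit $l$ whose other digits are all $q-1$ would reach $(l+1)q^{m-1}-1$, and for $q\ge 3$ this is not automatically less than $ln$. What rescues the estimate is the presence of the leading $0$ of $t$, which after the shift sits at position $T_l-1$ from the bottom; subtracting the missing $(q-1)q^{T_l-1}$ contribution produces the strict inequality $[tq^j]_{q^m-1}<ln$.

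For Case 1 (strict interior $T_{l+1}<j<T_l$ with $l\ge 2$), the shift leaves at least two copies of $l$ at the top, so $[tq^j]_{q^m-1}\ge lq^{m-1}+lq^{m-2}$. A short algebraic manipulation then yields $[tq^j]_n\ge (q^m-lq^{m-2}+l-1)/(q-1)>q^{m-1}>t$, where the last inequality uses that the top digit of $t$ is $0$. The corner case $l=1$ is handled separately: there $[tq^j]_{q^m-1}<n$, no reduction is needed, and the top digit being $1$ still gives $[tq^j]_n\ge q^{m-1}>t$.

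For Case 2 (boundary $j=T_l$), I would first write the shifted string explicitly as
\[
\bigl(l,\underbrace{l-1,\ldots,l-1}_{n_{l-1}},\ldots,\underbrace{1,\ldots,1}_{n_1},0,\underbrace{q-1,\ldots,q-1}_{n_{q-1}},\ldots,\underbrace{l+1,\ldots,l+1}_{n_{l+1}},\underbrace{l,\ldots,l}_{n_l-1}\bigr),
\]
and then carry out the subtraction $-(l-1)n$ digit by digit from the bottom. The lower blocks $l,l+1,\ldots,q-1$ all exceed $l-1$ and transform to $1,2,\ldots,q-l$ with no borrow. The $0$ together with the ascending blocks $1,2,\ldots,l-1$ immediately above it all fall short of $l-1$, so the borrow triggered at the $0$ cascades upward, converting those digits into $q-l+1,q-l+2,\ldots,q-1$ and finally being absorbed by the top $l$, which becomes $0$---exactly the form claimed in the lemma. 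The main obstacle I anticipate is the tight upper-bound estimate $[tq^j]_{q^m-1}<ln$, which genuinely needs the $0$-digit of the shifted string, together with the careful tracking of the borrow cascade in Case 2 so that its upward propagation terminates at the top digit in the prescribed way.
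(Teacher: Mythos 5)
Your overall route is the same as the paper's: interpret multiplication by $q^j$ modulo $q^m-1$ as a cyclic shift of the digit string, identify the top digit $l$ from the block boundaries $T_l=\sum_{i=l}^{q-1}n_i$, reduce modulo $n$ by subtracting exactly $(l-1)n$, and then either bound the result below by $q^{m-1}>t$ (interior case) or compute the subtraction digit-by-digit with a borrow cascade (boundary case). Your lower-bound algebra in Case 1, the $l=1$ corner case, and the borrow-cascade computation in Case 2 all check out and reproduce the paper's conclusions (the paper, in fact, merely asserts the Case 2 identity, so your explicit cascade is more detailed than the original).

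The one step that fails as written is your justification of the upper bound $[tq^j]_{q^m-1}<ln$. You propose bounding the shifted string by ``top digit $l$, all remaining digits $q-1$, minus the missing $(q-1)q^{j-1}$ contributed by the zero,'' i.e.\ by $(l+1)q^{m-1}-1-(q-1)q^{j-1}$. Comparing with $ln=lq^{m-1}+l\frac{q^{m-1}-1}{q-1}$, this bound is below $ln$ only when $(q-1-l)(q^{m-1}-1)<(q-1)^2q^{j-1}$, which for $l\le q-2$ forces $j$ to be within a constant of $m-1$; for small $j$ (e.g.\ $q=5$, $l=2$, $m=10$, $T_l=2$) your estimate exceeds $ln$ and the argument collapses. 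What actually saves the inequality is not the single zero but the fact that \emph{every} digit above the zero is at most $l$: the $m-j$ top digits of the shifted string are exactly the $m-j$ low-order digits of $t$, namely $l,\,(l-1)^{n_{l-1}},\ldots,1^{n_1}$ (or a partial $l$-block followed by these). Since $ln=(l,l,\ldots,l)$ is itself a valid base-$q$ string, a lexicographic comparison from the top immediately gives $[tq^j]_{q^m-1}<ln$ (first differing digit is the zero, or an earlier digit $<l$), and the top digit $l>l-1$ gives $[tq^j]_{q^m-1}>(l-1)n$. With that replacement the rest of your proof goes through.
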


\begin{lemma}\label{lem31}
Let $q>3$ and $m\geq q$. Suppose $m-1=a(q-1)+b$, where $a\geq 1$ and $0\leq b\leq q-2$.
\begin{itemize}
\item[(1)] If $a\geq 3$ and $b=0$, i.e., $m=a(q-1)+1$, then
\begin{center}
		$\delta_{2,n}=\frac{q^m-1-q^{m-1}-q^{m-a}-\Sigma_{l=1}^{q-3}q^{al-1} }{q-1},$
\end{center}
and $\mid C_{\delta_{2,n}}\mid =m$.
\item[(2)] If $b=1$, i.e., $m=a(q-1)+2$, let $A=\left\lfloor\frac{q-1}{2}\right\rfloor$, then
\begin{center}
	$\delta_{2,n}=\frac{q^m-1-q^{m-1}-\Sigma_{l=1}^{A-1}q^{al}-\Sigma_{l=A}^{q-2}q^{al+1} }{q-1},$
\end{center}
and  $\mid C_{\delta_{2,n}}\mid =m$.
\item[(3)] If $b=2$, i.e., $m=a(q-1)+3$, let $A=\left\lfloor\frac{q-1}{3}\right\rfloor$, then
\begin{center}
	 $\delta_{2,n}$=$\begin{cases}
\frac{q^{m}-1-\sum_{t=1}^{q-1}q^{\lceil \frac{mt}{q-1}-1\rceil}}{q-1}-q^{(2A+1)a+1}+q^{(A+1)a},& \text{if $q\equiv0\,(mod\,3)$} ;\\
\frac{q^{m}-1-\sum_{t=1}^{q-1}q^{\lceil \frac{mt}{q-1}-1\rceil}}{q-1}-q^{2Aa+1},& \text{if $q\equiv1\,(mod\,3)$};\\
\frac{q^{m}-1-\sum_{t=1}^{q-1}q^{\lceil \frac{mt}{q-1}-1\rceil}}{q-1}-q^{Aa},& \text{if $q\equiv2\,(mod\,3)$},
	\end{cases}$
\end{center}
and  $\mid C_{\delta_{2,n}}\mid =m$.
\end{itemize}
\begin{proof}
We just provide the proof for Case (1), since the proofs for the other cases are similar.

Note that
\begin{center}
$\frac{q^m-1-q^{m-1}-q^{m-a}-\Sigma_{l=1}^{q-3}q^{al-1} }{q-1}=(0,\underbrace{q-1,\ldots,q-1}_{a-1},\underbrace{q-2,\ldots,q-2}_{a+2},\ldots,\underbrace{i,\ldots,i}_{a},\ldots,\underbrace{1,\ldots,1}_{a-1})$.
\end{center}
It then follows that $\delta=\frac{q^m-1-q^{m-1}-q^{m-a}-\Sigma_{l=1}^{q-3}q^{al-1} }{q-1}\in \MinRep_n$ and $\mid C_{\delta}\mid =m$ by Lemma \ref{lem30}.

We claim that $\delta_{2,n}=\frac{q^m-1-q^{m-1}-q^{m-a}-\Sigma_{l=1}^{q-3}q^{al-1} }{q-1}$. Suppose there exists an integer $s$ such that $s\in\MinRep_n$ and $\delta_{2,n} <s<\delta_{1,n} $. Let the $q$-adic expansion of $s$ be $\sum_{i=1}^{m-1}s_{i}q^{i}$. Note that $$\delta_{1,n}=(0,\underbrace{q-1,\ldots,q-1}_{a},\underbrace{q-2,\ldots,q-2}_{a},\ldots,\underbrace{i,\ldots,i}_{a},\ldots,\underbrace{1,\ldots,1}_{a}).$$ 
Then we have $s_{m-1}=0$, $s_{i}=q-1$ for all $m-a\leq i\leq m-2$ and $s_{m-1-a}=q-1$ or $s_{m-1-a}=q-2$.
\begin{itemize}
\item[1)] If $s_{m-1-a}=q-1$, then $i_{l}=q-1$ for all $m-1-a\leq l\leq m-2$. By Lemma \ref{lem4}, $s$ must be of the form $$(0,\underbrace{q-1,\ldots,q-1}_{a},\underbrace{q-2,\ldots,q-2}_{n_{q-2}},\ldots,\underbrace{i,\ldots,i}_{n_{i}},\ldots,\underbrace{1,\ldots,1}_{n_{1}}).$$
Since $[sq^{j}]_{n}\geq s$ for $j=\sum_{i=l}^{q-1}n_{i}$, then $n_{l}\geq a$ for all $2\leq l\leq q-2$ and $n_{1}\geq a-1$ by Eq. (\ref{con:e10}). In addition, $m=a(q-1)+1$, then $n_{1}=a$ or $n_{1}=a-1$. Hence, if $n_{1}=a$ then $s=\delta_{1,n}$; if $n_{1}=a-1$ then $s>\delta_{1,n}$, which contradicts the fact that $s<\delta_{1,n}$.
\item[2)] If $s_{m-1-a}=q-2$, we divide into two steps to prove.

$\qquad$ Step 1. We claim that $s$ is of the form
\begin{equation}\label{eq:14} (0,\underbrace{q-1,\ldots,q-1}_{a-1},\underbrace{q-2,\ldots,q-2}_{n_{q-2}},\ldots,\underbrace{i,\ldots,i}_{n_{i}},\ldots,\underbrace{1,\ldots,1}_{n_{1}}).
\end{equation}
If there exists an integer $i$ such that $s_{i}\neq 0$ and $s_{i}< s_{i-1}$, then we have $[sq^{m-1-i}]_{n}=[sq^{m-1-i}]_{q^{m}-1}-s_{i}n<s$, which contradicts the fact that $s\in \MinRep_n$. Hence we konw that $s_{i}\geq s_{i-1}$ if $s_{i}\neq 0$. Then $s=(I_{e},I_{e-1},\ldots,I_{0})$, where  $$I_{t}=(0,\underbrace{q-1,\ldots,q-1}_{n_{t,q-1}},\underbrace{q-2,\ldots,q-2}_{n_{t,q-2}},\ldots,\underbrace{1,\ldots,1}_{n_{t,1}})$$
for all $0\leq t\leq e$ and $n_{t,j}\geq 0$ for all $1\leq j\leq q-1$. Clearly, $n_{e,q-1}=a-1$.

$\qquad$ Let $k=\sum_{j=1}^{q-1}n_{0,j}$. Since $[sq^{m-1-k}]_{n}\geq s$, then $n_{0,q-1}\geq n_{e,q-1}=a-1$. Similarly, we get $n_{t,q-1}\geq a-1$ for all $e\geq t\geq 0$.
Denote the $q$-adic expansion of $[sq^{n_{e,q-1}}]_{n}$ by $\sum_{i=1}^{m-1}s^{'}_{i}q^{i}$, then $(s^{'}_{m-1},s^{'}_{m-2},\ldots,s^{'}_{1})$ must be of the form $(I^{'}_{e},I^{'}_{e-1},\ldots,I^{'}_{0})$ by Eq. (\ref{con:e10}), where $$I^{'}_{e}=(0,\underbrace{q-1,\ldots,q-1}_{n_{e,q-2}},\ldots,\underbrace{3,\ldots,3}_{n_{e,2}},\underbrace{2,\ldots,2}_{n_{e,1}+1},\underbrace{1,\ldots,1}_{n_{e-1,q-1}-1}),$$
$$I^{'}_{0}=(0,\underbrace{q-1,\ldots,q-1}_{n_{0,q-2}},\ldots,\underbrace{3,\ldots,3}_{n_{0,2}},\underbrace{2,\ldots,2}_{n_{0,1}+1},\underbrace{1,\ldots,1}_{n_{e,q-1}-1}).$$
Since $[sq^{n_{e,q-1}}]_{n}\geq s$, then $n_{e,q-2}\geq a-1$.

$\qquad$ Similarly, we have $n_{t,j}\geq a-1$ for all $e\geq t\geq 0$ and $2\leq j\leq q-2$ and $n_{t,1}\geq a-2$ for all $e\geq t\geq 0$. Therefore,
$$m=\sum_{t=0}^{e}\left(\sum_{j=1}^{q-1}n_{t,j}+1 \right)\geq (e+1)(q-1)(a-1).$$
 If $e\geq 1$, we have $m=a(q-1)+1\geq 2(q-1)(a-1)$, which contradicts to $a\geq 3$.
Hence, $s$ is of the form Eq. (\ref{eq:14}), 
where $n_{i}\geq a-1$ for all $q-2\geq i\geq 2$ and $n_{1}\geq a-2$.

$\qquad$ Step 2. Since $s> \delta_{2,n}$, then $n_{q-2}\geq a+3$, or $n_{q-2}=a+2$ and there exists an integer $i$ such that $n_{i}\geq a+1$ and $n_{j}=a$ for all $q-3\geq j>i$.

$\qquad$(2.1) If $n_{q-2}\geq a+3$, since $m=a(q-1)+1$, then there exists $n_{j}= a-1$ with $q-3\geq j \geq 2$, or $n_{t}=a$ for all $q-3\geq t\geq 2$ and $n_{1}=a-2$.

$\qquad$If there exists $n_{j}= a-1$ with $q-3\geq j \geq 2$. Let $\mid \lbrace n_{i}\geq a+1 : 1\leq i < q-2\rbrace\mid =k$, then $\mid \lbrace n_{i}=a-1 : 1\leq i < q-2\rbrace\mid \geq k+2$. Hence, there are two integers $u$ and $v$ such that $n_{u},n_{v}=a-1$ and $n_{i}\leq a$ for all $u\geq i\geq v$. Note that
$$\left[sq^{\sum_{i=u+1}^{q-1}n_{i}}\right ]_{n}=(0,\underbrace{q-1,\ldots,q-1}_{a-1},\underbrace{q-2,\ldots,q-2}_{n_{u-1}},\ldots,\underbrace{q+v-u-1,\ldots,q+v-u-1}_{a-1},\ldots,\underbrace{1,\ldots,1}_{n_{u+1}-1}).$$
Clearly, $\left[sq^{\sum_{i=u+1}^{q-1}n_{i}}\right ]_{n}<s$, which contradicts the fact that $s\in \MinRep_n$.

$\qquad$If $n_{t}=a$ for all $q-3\geq t\geq 2$ and $n_{1}=a-2$, we have $\left[sq^{\sum_{i=2}^{q-1}n_{i}}\right ]_{n}<s$, which contradicts the fact that $s\in \MinRep_n$.

$\qquad$(2.2) If $n_{q-2}=a+2$ and there exists an integer $i$ such that $n_{i}\geq a+1$, $n_{j}=a$ for all $q-3\geq j>i$, the case is similar to (2.1).
\end{itemize}
Thus we complete the proof.
\end{proof}
\end{lemma}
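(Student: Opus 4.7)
The plan is to prove each of the three cases by the same two-step strategy: first exhibit the stated $\delta_{2,n}$ as an element of $\MinRep_n$ of coset size $m$, then rule out any $s \in \MinRep_n$ with $\delta_{2,n} < s < \delta_{1,n}$. Throughout I would rely on Lemma~\ref{lem30} to track which cyclic shifts can decrease a given digit sequence, and on Lemma~\ref{lem4} to restrict the $q$-adic digit pattern of a coset leader modulo $\frac{q^m-1}{q-1}$.

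For Step~1, I would write each candidate in $q$-adic form as a leading-zero concatenation of weakly decreasing constant-digit blocks. In Case~(1) the expansion reads $(0,(q-1)^{a-1},(q-2)^{a+2},(q-3)^{a},\ldots,(1)^{a-1})$, and the analogous patterns for Cases~(2) and~(3) fall directly out of the stated numerators, with Case~(3) splitting naturally according to $q\bmod 3$. Lemma~\ref{lem30} then tells me that the only shifts worth checking are those of the form $j=\sum_{i=l}^{q-1}n_i$; all of these push the leading zero into the interior and produce strictly larger values, so $\delta_{2,n}\in\MinRep_n$ and, since no nontrivial shift fixes the sequence, $\mid C_{\delta_{2,n}}\mid = m$.

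For Step~2, suppose $s \in \MinRep_n$ with $\delta_{2,n} < s < \delta_{1,n}$ and write $s=\sum s_i q^i$. Comparing the top digits against $\delta_{1,n}$ forces $s_{m-1}=0$, $s_{m-2}=\cdots=s_{m-a}=q-1$, and leaves two subcases for $s_{m-1-a}$. If $s_{m-1-a}=q-1$, Lemma~\ref{lem4}(a) forces the remaining digits of $s$ to form a weakly decreasing block pattern; the ``$j$ equals a boundary'' case of Lemma~\ref{lem30} combined with $s \in \MinRep_n$ forces each block to meet a minimum length, and the total-length constraint $m = a(q-1)+b+1$ then pins $s$ down to $\delta_{1,n}$, a contradiction. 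If $s_{m-1-a}=q-2$, a similar minimality argument first forces $s$ into the block shape $(0,(q-1)^{a-1},(q-2)^{n_{q-2}},\ldots,(1)^{n_1})$ with each $n_i \geq a-1$; the inequality $s > \delta_{2,n}$ then forces some $n_i$ to exceed its value in $\delta_{2,n}$, which by length balance forces another $n_j$ to fall below, and applying Lemma~\ref{lem30} to a cyclic shift between the two unbalanced blocks produces a value strictly less than $s$, contradicting $s \in \MinRep_n$.

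The main obstacle is the bookkeeping in Step~2 for Cases~(2) and~(3): the residue $b\in\{1,2\}$ shifts the positions of the ``short'' and ``long'' blocks inside $\delta_{1,n}$, so the candidate intermediate leaders live in slightly different digit windows and the block-balance argument has to be reworked for each configuration. Case~(3) further splits into three subcases according to $q\bmod 3$, because the location of the omitted exponent inside the numerator of $\delta_{1,n}$, read off from Lemma~\ref{lem4}(b.1), depends on how $\lceil mt/(q-1)-1\rceil$ rounds for each $t$. Once the exact digit pattern of $\delta_{2,n}$ is extracted in each subcase, however, the final contradiction mechanism is structurally identical to Case~(1).
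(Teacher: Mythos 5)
Your plan follows the paper's proof essentially verbatim: the same two-step structure (exhibit the block form of $\delta_{2,n}$ and verify coset leadership via Lemma~\ref{lem30}, then rule out any intermediate leader $s$ by comparing top digits with $\delta_{1,n}$, splitting on $s_{m-1-a}\in\{q-1,q-2\}$, invoking Lemma~\ref{lem4} for the block pattern, and deriving a contradiction from unbalanced block lengths). The one substantive step you compress is the argument in the $s_{m-1-a}=q-2$ subcase that $s$ has only a single zero digit (the paper's decomposition into segments $I_t$ and the counting bound $m\geq(e+1)(q-1)(a-1)$, which is where the hypothesis $a\geq 3$ is actually used), but the mechanism you describe is the paper's.
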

\begin{lemma}\label{lem32}
Let $q>3$ and $m\geq q$. Suppose $m-1=a(q-1)+b$, where $a\geq 1$ and $0\leq b\leq q-2$.
\begin{itemize}
	\item [(1)]If $b=q-4$, i.e., $m=a(q-1)+q-3$, let $A=\left\lfloor\frac{q}{2}\right\rfloor$, then
	\begin{center}
		$\delta_{2,n}$=
		$\frac{q^{m}-1-\sum_{t=1}^{q-1}q^{\lceil \frac{mt}{q-1}-1\rceil}}{q-1}-q^{A(a+1)-2},$	
	\end{center}
	and $\mid C_{\delta_{2,n}}\mid=m$.
	\item [(2)]If $b=q-3$, i.e., $m=a(q-1)+q-2$, then
	\begin{center}
		$\delta_{2,n}=\frac{q^{m}-1-\sum_{t=1}^{q-1}q^{\lceil \frac{mt}{q-1}-1\rceil}}{q-1}-q^{a},$
	\end{center}
	and $\mid C_{\delta_{2,n}}\mid=m$.
	\item [(3)]If $b=q-2$, i.e., $m=(a+1)(q-1)$, then
	\begin{center}
		$\delta_{2,n}=\frac{q^m-1-q^{m-1}-q^{m-1-a}-\Sigma_{l=1}^{q-3}q^{(a+1)l-1} }{q-1},$
	\end{center}
	and $\mid C_{\delta_{2,n}}\mid=m$.
\end{itemize}
\begin{proof}
We just give the proof for Case (1), since the proofs for the other cases are similar.

If $2\nmid q$, then $\delta=\frac{q^{m}-1-\sum_{t=1}^{q-1}q^{\lceil \frac{mt}{q-1}-1\rceil}}{q-1}-q^{A(a+1)-2}$ is of the form
\begin{center}
$(0,\underbrace{q-1,...,q-1}_{a},\ldots,\underbrace{t,\ldots,t}_{a+1},\ldots,\underbrace{\frac{q+1}{2},\ldots,\frac{q+1}{2}}_{a},\underbrace{\frac{q-1}{2},\ldots,\frac{q-1}{2}}_{a+1},\ldots,\underbrace{i,\ldots,i}_{a+1},\ldots,\underbrace{1,\ldots,1}_{a}).$
\end{center}
It then follows that $\delta\in \MinRep_n$ and $\mid C_{\delta}\mid=m$ by Lemma \ref{lem30}.

We claim that $\delta_{2,n}$=
$\frac{q^{m}-1-\sum_{t=1}^{q-1}q^{\lceil \frac{mt}{q-1}-1\rceil}}{q-1}-q^{A(a+1)-2}$. 
Suppose there exists an integer $s$ such that $s\in \MinRep_n$ and $\delta_{2,n} <s<\delta_{1,n} $. Let the $q$-adic expansion of $s$ be $\sum_{i=1}^{m-1}s_{i}q^{i}$. Note that  $$\delta_{1,n}=(0,\underbrace{q-1,\ldots,q-1}_{a},\ldots,\underbrace{t,\ldots,t}_{a+1},\ldots,\underbrace{\frac{q+1}{2},\ldots,\frac{q+1}{2}}_{a+1},\underbrace{\frac{q-1}{2},\ldots,\frac{q-1}{2}}_{a},\ldots,\underbrace{i,\ldots,i}_{a+1},\ldots,\underbrace{1,\ldots,1}_{a}).$$
Since $\delta_{2,n} <s<\delta_{1,n}$, we have $s_{i}=q-1$ for all $m-1-a\leq i\leq m-2$. By Lemma \ref{lem4}, $s$ must be of the form
$$(0,\underbrace{q-1,\ldots,q-1}_{a},\ldots,\underbrace{t,\ldots,t}_{a+1},\ldots,\underbrace{\frac{q+1}{2},\ldots,\frac{q+1}{2}}_{n_\frac{q+1}{2}},\underbrace{\frac{q-1}{2},\ldots,\frac{q-1}{2}}_{n_{\frac{q-1}{2}}},\ldots,\underbrace{i,\ldots,i}_{n_{i}},\ldots,\underbrace{1,\ldots,1}_{n_{1}}).$$
Note that
$$\left[sq^{\sum_{i=2}^{q-1}n_{i}}\right]_{n}=(0,\underbrace{q-1,\ldots,q-1}_{n_{1}+1},\underbrace{q-2,\ldots,q-2}_{a},\ldots,\underbrace{t,\ldots,t}_{a+1},\underbrace{\frac{q-1}{2},\ldots,\frac{q-1}{2}}_{n_{\frac{q+1}{2}}},\ldots,\underbrace{1,...,1}_{n_{2}-1}),$$
then $n_{1}\geq a$ since $\left[sq^{\sum_{i=2}^{q-1}n_{i}}\right]_{n}\geq s$. Similarly, we have $n_{i}\geq a$ for all $\frac{q+1}{2}\geq i\geq 1$. Since $\delta_{2,n} <s<\delta_{1,n}$, then $n_{\frac{q+1}{2}}=a+1$ or $n_{\frac{q+1}{2}}=a$.
\begin{itemize}
	\item[1)] When $n_{\frac{q+1}{2}}=a+1$. Since $\delta_{1,n}>s$, then there exists an integer $l$ such that $n_{l}= a$ and $n_{j}=a+1$ for all $\frac{q-1}{2}> j>l>1$.
	Note that
	$$\left[sq^{\sum_{i=\frac{q+1}{2}}^{q-1}n_{i}}\right]_{n}=(0,\underbrace{q-1,\ldots,q-1}_{a},\ldots,\underbrace{t,\ldots,t}_{a+1},\ldots,\underbrace{l+\frac{q-1}{2},\ldots,l+\frac{q-1}{2}}_{a},\ldots,\underbrace{1,\ldots,1}_{a}),$$
	Since $q-1-\frac{q-1}{2}>q-1-\frac{q-1}{2}-l$, we have $\left[sq^{\sum_{i=\frac{q+1}{2}}^{q-1}n_{i}}\right]_{n}< s$, which contradicts the fact that $s$ is a coset leader.
	\item[2)] When $n_{\frac{q+1}{2}}=a$.
	
	If $n_{\frac{q-1}{2}}\geq a+2$, there are two integers $u$ and $v$ such that $n_{u},n_{v}=a$ and $n_{i}\leq a+1$ for all $\frac{q-3}{2}\geq u\geq i\geq v\geq 1$. Note that 
	$$\left[sq^{\sum_{i=u+1}^{q-1}n_{i}}\right ]_{n}=
	(0,\underbrace{q-1,\ldots,q-1}_{a},\underbrace{q-2,\ldots,q-2}_{n_{u-1}},\ldots,\underbrace{q+v-u-1,\ldots,q+v-u-1}_{a},\ldots,\underbrace{1,\ldots,1}_{n_{u+1}-1}).$$
	Since $q-1-\frac{q+1}{2}>u-v$, then we have $\left[sq^{\sum_{i=u+1}^{q-1}n_{i}}\right ]_{n}<s$, which contradicts the fact that $s$ is a coset leader.
	
	If $n_{\frac{q-1}{2}}= a+1$, there must exist an integer $l$ such that $n_{l}\geq a+2$ and $n_{i}= a+1$ for all $\frac{q-3}{2}\geq i>l>1$. We have $\mid\lbrace n_{i}=a : l> i \geq 1\rbrace\mid \geq 2$. Hence, there are two integers $u$ and $v$ such that $n_{u},n_{v}=a$ and $n_{i}\leq a+1$ for all $l>u\geq i\geq v\geq 1$. We can get that $\left[sq^{\sum_{i=u+1}^{q-1}n_{i}}\right ]_{n}<s$, which contradicts the fact that $s$ is coset leader.
\end{itemize}

In the same way, we can prove $\delta_{2,n}$ is the second largest coset leader when $2\mid q$. This completes the proof.
\end{proof}
\end{lemma}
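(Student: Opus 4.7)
The plan is to mirror the two-step structure used in the proof of Lemma~\ref{lem31}. For each of the three cases $b\in\{q-4,q-3,q-2\}$ I would first exhibit the claimed $\delta_{2,n}$ in $q$-adic form and verify, via Lemma~\ref{lem30}, that every cyclic shift $[\delta_{2,n}q^j]_n$ is at least $\delta_{2,n}$, so that $\delta_{2,n}\in\MinRep_n$ and $|C_{\delta_{2,n}}|=m$. Then I would assume for contradiction that some coset leader $s$ satisfies $\delta_{2,n}<s<\delta_{1,n}$ and force $s$ to equal one of those two values by examining its base-$q$ digits.

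For the first step, each candidate has the ``hill-shaped'' expansion
\[(0,\underbrace{q-1,\ldots,q-1}_{n_{q-1}},\underbrace{q-2,\ldots,q-2}_{n_{q-2}},\ldots,\underbrace{1,\ldots,1}_{n_{1}})\]
with nonincreasing digits, to which Lemma~\ref{lem30} applies directly. I would read off the block sizes $n_j$ from the formula in each case: in (3), $m-1=(a+1)(q-1)$, so $\delta_{1,n}$ has all blocks of size $a+1$, and the stated $\delta_{2,n}$ corresponds to shifting digits between two adjacent blocks; in (2), the correction $-q^a$ shaves one unit off the bottom block; in (1), the correction $-q^{A(a+1)-2}$ straddles the block of the pivot digit $A=\lfloor q/2\rfloor$, and the parity of $q$ determines whether the shift lands inside the $A$-block or reshapes two consecutive blocks. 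Once the $n_j$ are fixed, checking $[\delta_{2,n}q^j]_n\ge\delta_{2,n}$ at the relevant boundaries $j=\sum_{i=l}^{q-1}n_i$ is a finite verification using the explicit formula in Lemma~\ref{lem30}.

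For the second step, I would follow the template of Case~2 in the proof of Lemma~\ref{lem31}. By Lemma~\ref{lem4}(a) and $s<\delta_{1,n}$, the top $q$-adic digits of $s$ are forced to agree with those of $\delta_{1,n}$ down to some position, while $s>\delta_{2,n}$ limits how far down that first deviation can sit. The coset-leader property forces $s_i\ge s_{i-1}$ whenever $s_i\ne 0$, so $s$ decomposes into one or more hills $(I_e,\ldots,I_0)$; a digit-counting argument, using that $m$ barely accommodates a single hill of the required minimum block sizes, eliminates multi-hill shapes. Applying Lemma~\ref{lem30} at each boundary $j=\sum_{i=l}^{q-1}n_i$ then yields lower bounds $n_l\ge$ the corresponding block size in $\delta_{1,n}$, and the constraint $\sum_j n_j=m-1$ pins the admissible shapes down to $\delta_{1,n}$ and the claimed $\delta_{2,n}$; any other distribution produces a strictly smaller cyclic shift $[sq^j]_n<s$ and contradicts $s\in\MinRep_n$.

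The main obstacle I expect is case (1), where $b=q-4$ leaves two surplus digits to distribute and the parity of $q$ changes which block of $\delta_{1,n}$ is critical. Unlike (2) and (3), where passing from $\delta_{1,n}$ to $\delta_{2,n}$ amounts to a single easily located digit change, in (1) one must rule out rearrangements on both sides of the $\lfloor q/2\rfloor$-pivot, which requires a case split analogous to the (2.1)/(2.2) dichotomy appearing in Lemma~\ref{lem31}. Finally, $|C_{\delta_{2,n}}|=m$ is obtained in the style of Lemma~\ref{lem16}: $n\mid\delta_{2,n}(q^{\ell}-1)$ is equivalent to $(q^m-1)\mid(q-1)\delta_{2,n}(q^{\ell}-1)$, and inspecting the $q$-adic shape of $(q-1)\delta_{2,n}$, whose nonzero digits are spread across all $m$ positions, forces $\ell=m$.
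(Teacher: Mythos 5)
Your proposal follows essentially the same route as the paper: first exhibit the candidate's hill-shaped $q$-adic expansion and invoke Lemma~\ref{lem30} to certify it as a coset leader with full orbit, then rule out any coset leader strictly between $\delta_{2,n}$ and $\delta_{1,n}$ by constraining the block sizes $n_i$ via Lemma~\ref{lem4} and the boundary shifts of Lemma~\ref{lem30}, with the same case split on the pivot block near $\lfloor q/2\rfloor$ for $b=q-4$. The only cosmetic difference is that you compute $|C_{\delta_{2,n}}|=m$ by the divisibility argument of Lemma~\ref{lem16} whereas the paper reads it off from Lemma~\ref{lem30} directly; both are fine.
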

\subsection{BCH Codes and Dually-BCH Codes}
Then we have the following conclusion when the length of the BCH codes $\C_{(q,n,\delta)}$ satisfy the cases of Lemmas \ref{lem31} and \ref{lem32}.

\begin{theorem}\label{th33}
Let $q>3$ and $m$ be of the form given by Lemmas \ref{lem31} or \ref{lem32}. Then the BCH code $\C_{(q,n,\delta)}$ with $\delta_{2,n}+1\leq \delta \leq \delta_{1,n}$ has parameters
\begin{center}
	$[\frac{q^{m}-1}{q-1},\frac{m}{gcd(m,q-1)}+1,d\geq \delta]$,
\end{center}
and the BCH code $\C_{(q,n,\delta_{\delta_{2,n}})}$ has parameters
\begin{center}
	$[\frac{q^{m}-1}{q-1},\frac{m}{gcd(m,q-1)}+m+1,d\geq \delta_{2,n}]$,
\end{center}
\begin{proof}
	Form  Lemmas \ref{lem4}, \ref{lem31} and \ref{lem32}, we can obtain the results directly.
\end{proof}
\end{theorem}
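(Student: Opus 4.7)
The plan is to compute the dimension by identifying the defining set $T$ explicitly, then invoke the BCH bound for the minimum distance. The key structural fact is that $\delta_{1,n}$ and $\delta_{2,n}$ being the two largest coset leaders means there are no coset leaders in either of the intervals $(\delta_{1,n},n-1]$ or $(\delta_{2,n},\delta_{1,n})$, so once $\delta$ crosses a threshold, the defining set is forced to a very specific form.

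For the first case $\delta_{2,n}+1\le \delta\le \delta_{1,n}$, I would argue as follows. The defining set is $T=\bigcup_{i=1}^{\delta-1}C_i$, which equals $\bigcup_{i\in\MinRep_n\cap[1,\delta-1]}C_i$. Since $\delta-1\ge \delta_{2,n}$, all coset leaders $\le \delta_{2,n}$ appear; since $\delta-1<\delta_{1,n}$ and no coset leader lies in $(\delta_{2,n},\delta_{1,n})$, no further coset leader appears. Thus $T=\mathbb{Z}_n\setminus(C_0\cup C_{\delta_{1,n}})$, independent of $\delta$ in this range. Using $|C_{\delta_{1,n}}|=\tfrac{m}{\gcd(m,q-1)}$ from Lemma \ref{lem4}(b.2), we get $\dim\C_{(q,n,\delta)}=n-|T|=\tfrac{m}{\gcd(m,q-1)}+1$, and the BCH bound yields $d\ge\delta$.

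For the second case $\delta=\delta_{2,n}$, the same reasoning shows that $T=\mathbb{Z}_n\setminus(C_0\cup C_{\delta_{1,n}}\cup C_{\delta_{2,n}})$, since now $C_{\delta_{2,n}}$ is also removed from the defining set (the index $\delta_{2,n}$ no longer belongs to $[1,\delta-1]$). From Lemmas \ref{lem31} and \ref{lem32} we have $|C_{\delta_{2,n}}|=m$ for every admissible $m$, so $\dim\C_{(q,n,\delta_{2,n})}=n-|T|=\tfrac{m}{\gcd(m,q-1)}+m+1$, and again $d\ge\delta_{2,n}$ by the BCH bound.

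There is really no hard step here: the whole argument is an application of the basic dimension formula $\dim\C=n-|T|$ combined with the known coset-leader data. The only subtlety to state carefully is the emptiness of $\MinRep_n\cap(\delta_{2,n},\delta_{1,n})$, which is immediate from the definition of $\delta_{2,n}$ as the \emph{second} largest coset leader, so the proof will be quite short and can legitimately be handled by the one-line citation already in the excerpt.
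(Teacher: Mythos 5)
Your proposal is correct and is essentially the argument the paper intends: its one-line proof ``from Lemmas \ref{lem4}, \ref{lem31} and \ref{lem32}'' is exactly your computation, namely that the defining set is $\Z_n\setminus(C_0\cup C_{\delta_{1,n}})$ (resp.\ with $C_{\delta_{2,n}}$ also removed when $\delta=\delta_{2,n}$), with the coset sizes $\frac{m}{\gcd(m,q-1)}$ and $m$ supplied by those lemmas and the distance bound coming from the BCH bound. You have merely written out the details the paper leaves implicit.
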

\begin{example}
	 Let $(q,m)=(4,5)$, we have $n=341$. The BCH code $\C_{(4,341,\delta)}$ with $230\leq \delta\leq 233$ has parameters $[341,6,\geq \delta]$, and the code $\C_{(4,341,229)} $ has parameters $[341,11,\geq 229]$.
\end{example}
\begin{lemma}\label{lem35}
	Let $q\geq 3$ and $m\geq 4$. If $2\leq \delta\leq q-1$, then we have $\delta_{1,n}\in T^{\perp}$.
	\begin{proof}
		Let $m=a(q-1)+b$, $a\geq 0$ and $0\leq b\leq q-2$. Since $T^{\perp}=\Z_{n}$$\backslash T^{-1}$, then $C_{\delta_{1,n}}\subset T^{\perp}$ if and only if $C_{(n-\delta_{1,n})}\nsubseteq T$. By Lemma \ref{lem4}, we have $(q-1)(n-\delta_{1,n})=\sum_{t=1}^{q-1}q^{\lceil \frac{mt}{q-1}-1\rceil}=\sum_{i=0}^{m-1}a_{i}q^{i}$, where $a_{i}=\left\lceil \frac{q-1}{m}\right\rceil$ or $a_{i}=\left\lfloor \frac{q-1}{m}\right\rfloor$. Let $\delta^{'}\in \MinRep_{q^{m}-1}$ and $a\in C_{(q-1)(n-\delta_{1,n})}$.
		
		If $a=0$, then $q-1>m$. Note that $a_{i}=\left\lceil \frac{q-1}{m}\right\rceil\geq 2$ or $a_{i}=\left\lfloor \frac{q-1}{m}\right\rfloor\geq 1$ for all $0\leq i\leq m-1$, then $\delta^{'}\geq\sum_{i=0}^{m-1}q^{i}$ by Lemma \ref{lem4}. Then
		$$CL(n-\delta_{1,n})=\frac{\delta^{'}}{q-1}\geq \frac{\sum_{i=0}^{m-1}q^{i}}{q-1}>q+1>\delta.$$
		
		If $a\geq 1$, similarly, we have $CL(n-\delta_{1,n})$ has the form $(\ldots,1,\underbrace{0,\ldots,0}_{a})$, i.e., $CL(n-\delta_{1,n})>q^{m-a-1}-1$. It is easy to check that $m-a-1\geq a+1$ when $m\geq4$. We have
			$$CL(n-\delta_{1,n})=\frac{\delta^{'}}{q-1}>\frac{q^{m-a-1}}{q-1}\geq \frac{q^{a+1}-1}{q-1}>q+1>\delta.$$
		Therefore, $C_{(n-\delta_{1,n})}\nsubseteq T$, i.e., $\delta_{1,n}\in T^{\perp}$. This completes the proof.
	\end{proof}
\end{lemma}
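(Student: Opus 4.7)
The plan is to translate the membership $\delta_{1,n}\in T^{\perp}$ into a statement about the size of the coset leader of $n-\delta_{1,n}$. For the narrow-sense BCH code the defining set is $T=C_{1}\cup C_{2}\cup\cdots\cup C_{\delta-1}$, and by definition $T^{\perp}=\Z_{n}\setminus T^{-1}$, so $\delta_{1,n}\in T^{\perp}$ is equivalent to $C_{n-\delta_{1,n}}\not\subseteq T$; in view of $2\leq\delta\leq q-1$ it suffices to prove
\[
CL(n-\delta_{1,n})\geq q.
\]

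First I would use Lemma~\ref{lem4}(b) to write $(q-1)(n-\delta_{1,n})=\sum_{t=1}^{q-1}q^{\lceil mt/(q-1)-1\rceil}=\sum_{i=0}^{m-1}a_{i}q^{i}$ with each $a_{i}\in\{\lfloor(q-1)/m\rfloor,\lceil(q-1)/m\rceil\}$. Let $\delta'\in\MinRep_{q^{m}-1}$ be a coset leader of $C_{(q-1)(n-\delta_{1,n})}$ modulo $q^{m}-1$. Since $(q-1)\mid(q-1)(n-\delta_{1,n})$ and cyclic rotations preserve divisibility by $q-1$, one has $(q-1)\mid\delta'$ and hence $CL(n-\delta_{1,n})=\delta'/(q-1)$, reducing the problem to lower-bounding $\delta'$.

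Next I would split into two cases according to the value of $\lfloor(q-1)/m\rfloor$. When $q-1\geq m$, every $a_{i}\geq 1$, so $\delta'\geq\sum_{i=0}^{m-1}q^{i}=(q^{m}-1)/(q-1)$ and hence
\[
CL(n-\delta_{1,n})\geq\frac{q^{m}-1}{(q-1)^{2}}>q,
\]
using $m\geq 4$. When $q-1<m$, the nonzero digits of $(q-1)(n-\delta_{1,n})$ sit in only $q-1$ of the $m$ positions, namely the set $\{\lceil mt/(q-1)-1\rceil:1\leq t\leq q-1\}$, which is spread almost uniformly by Lemma~\ref{lem4}(b.1). Because the leading digit $a_{m-1}$ is nonzero, a suitable cyclic shift produces a representative of the form $(\ldots,1,\underbrace{0,\ldots,0}_{a})$, where $a$ is the largest gap between two consecutive nonzero positions; this forces $a+1\leq\lceil m/(q-1)\rceil$ and $\delta'>q^{m-a-1}$. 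A short check using $m\geq 4$ and $q\geq 3$ yields $m-a-1\geq a+1$, whence
\[
CL(n-\delta_{1,n})>\frac{q^{m-a-1}}{q-1}\geq\frac{q^{a+1}-1}{q-1}>q.
\]

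The main obstacle is the second case: extracting from the ceiling expression $\lceil mt/(q-1)-1\rceil$ a clean bound on the maximum gap $a$ and then verifying $m-a-1\geq a+1$ uniformly for $m\geq 4$. Both facts should follow from the near-uniform distribution of the indices guaranteed by Lemma~\ref{lem4}(b.1), but making this quantitative (especially in the edge cases where $(q-1)\nmid m$) is the delicate step. Once this bound is established, combining the two cases gives $CL(n-\delta_{1,n})>q-1\geq\delta-1$, so $C_{n-\delta_{1,n}}\not\subseteq T$, and therefore $\delta_{1,n}\in T^{\perp}$, completing the proof.
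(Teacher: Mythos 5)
Your proposal follows essentially the same route as the paper's proof: both reduce $\delta_{1,n}\in T^{\perp}$ to showing $CL(n-\delta_{1,n})>q-1$, invoke Lemma~\ref{lem4}(b) to write $(q-1)(n-\delta_{1,n})=\sum_{i}a_{i}q^{i}$, and split into the cases $q-1>m$ (all digits nonzero, giving $CL(n-\delta_{1,n})\geq\frac{q^{m}-1}{(q-1)^{2}}$) and $q-1\leq m$ (a representative ending in a block of $a$ zeros, giving $CL(n-\delta_{1,n})>\frac{q^{m-a-1}}{q-1}$ combined with $m-a-1\geq a+1$). The one step you explicitly flag as delicate --- quantifying the maximal gap $a$ and verifying $m-a-1\geq a+1$ --- is precisely the step the paper itself dispatches with ``it is easy to check,'' so your sketch matches the published argument in both structure and level of detail.
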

For $b=1$, the conditions of $\C_{(q,n,\delta)}$ being dually-BCH codes have been given by Lemma~\ref{lem8}. For $b=2$, we have
\begin{theorem}\label{th36}
	Let $q\geq 3$ and $m\geq 4$, then ${\C}_{(q,n,\delta,2)}$ is a dually-CBH code if and only if
	\begin{center}
		$\frac{q^{m}-1-\sum_{t=1}^{q-1}q^{\left\lceil \frac{mt}{q-1}-1\right\rceil}}{q-1}\leq \delta \leq n-1$.
	\end{center}
	\begin{proof}
		The proof is very similar to that of Theorem \ref{th28}, we omit it.	
	\end{proof}
\end{theorem}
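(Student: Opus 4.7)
The plan is to mirror the argument of Theorem \ref{th28}, with Lemma \ref{lem35} supplying the role that the explicit formulas for $CL(n-\delta_{1,n})$ played there. Write $T=C_{2}\cup C_{3}\cup \cdots \cup C_{\delta}$ for the defining set of $\C_{(q,n,\delta,2)}$ with respect to $\beta_{2}$. Since $0\notin T$, we have $C_{0}\subseteq T^{\perp}$, so if $\C_{(q,n,\delta,2)}$ is dually-BCH then there must exist $r\geq 1$ with $T^{\perp}=C_{0}\cup C_{1}\cup \cdots \cup C_{r-1}$. I would first observe that $\delta_{1,n}\geq q$ follows immediately from the formula in Lemma \ref{lem4}(b.2) for $q\geq 3$ and $m\geq 4$; this guarantees that the two cases $2\leq \delta \leq q-1$ and $q\leq \delta\leq n-1$ cover the claimed range while placing the boundary $\delta=\delta_{1,n}$ into the second case.

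In the range $q\leq \delta\leq n-1$, the inclusion $q\in T$ together with $C_{q}=C_{1}$ gives $C_{1}\subseteq T$, so $T=C_{1}\cup C_{2}\cup \cdots \cup C_{\delta}$. This coincides with the defining set of the narrow-sense code $\C_{(q,n,\delta+1)}$, hence $\C_{(q,n,\delta,2)}=\C_{(q,n,\delta+1)}$ and their duals agree. Applying the $n=\frac{q^{m}-1}{q-1}$ portion of Lemma \ref{lem8} to designed distance $\delta+1$ converts the condition $\delta_{1,n}+1\leq \delta+1\leq n$ into $\delta_{1,n}\leq \delta\leq n-1$, yielding the desired equivalence on this subinterval.

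For $2\leq \delta\leq q-1$, I would rule out dually-BCH-ness. Since $\frac{nq^{\lceil m/2\rceil}}{q^{m}-1}=\frac{q^{\lceil m/2\rceil}}{q-1}\geq q-1$ under our hypotheses, Lemma \ref{lem1} implies that each $i\in[1,q-1]$ is a coset leader with $\mid C_{i}\mid=m$. Hence $C_{1}\not\subseteq T$ and $\mid T\mid=m(\delta-1)$, giving $\dim \C_{(q,n,\delta,2)}=n-m(\delta-1)<n$. By Lemma \ref{lem35} we have $\delta_{1,n}\in T^{\perp}$, which forces $r-1\geq \delta_{1,n}$ in any hypothesised dually-BCH description of $T^{\perp}$. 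But then $T^{\perp}\supseteq C_{0}\cup C_{1}\cup \cdots \cup C_{\delta_{1,n}}=\Z_{n}$, since $\delta_{1,n}$ being the largest coset leader guarantees every element of $\Z_{n}$ lies in some $C_{j}$ with $j\leq \delta_{1,n}$. This would give $\dim \C_{(q,n,\delta,2)}^{\perp}=0$, contradicting $\dim \C_{(q,n,\delta,2)}+\dim \C_{(q,n,\delta,2)}^{\perp}=n$.

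The only real obstacle is bookkeeping: correctly handling the index shift $\delta\mapsto \delta+1$ between the $b=2$ and narrow-sense descriptions when invoking Lemma \ref{lem8}, and confirming that $\delta_{1,n}\geq q$ so that no values $\delta<q$ need separate treatment in the easy direction. Everything else is a direct transcription of the argument for Theorem \ref{th28}, which is presumably why the authors elect to omit the details.
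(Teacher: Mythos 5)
Your proposal is correct and follows exactly the route the paper intends: it transcribes the two-case argument of Theorem \ref{th28} (reducing $q\leq\delta\leq n-1$ to the narrow-sense case via $C_1=C_q$ and Lemma \ref{lem8}, and ruling out $2\leq\delta\leq q-1$ by a dimension contradiction), with Lemma \ref{lem35} supplying $\delta_{1,n}\in T^{\perp}$ in place of the explicit coset-leader computation used for $n=\frac{q^{2s}-1}{q+1}$. The bookkeeping you flag (the shift $\delta\mapsto\delta+1$ and the bound $\delta_{1,n}\geq q$) is handled correctly.
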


\section{Conclusions}\label{set6} The main contributions of this paper are as follows:
\begin{itemize}
	\item For the codes of length $n=q^{m}-1$, we found the $i$-th largest $q$-cyclotomic coset leader is $\delta_{i}=(q-1)q^{m-1}-1-q^{\lfloor \frac{m-1}{2} \rfloor +i-2}$. The parameters of $\C_{(q,n,\delta_{i,n})}$ was investigated (see Theorem \ref{th11}).
	
	\item For the codes of length $n=\frac{q^{2s}-1}{q+1}$, we find the second largest coset leader $\delta_{2,n}$. The parameters of $\C_{(q,n,\delta)}$ with $ \delta_{2,n}\leq \delta\leq \delta_{1,n}$ and its dual code was settled (see Theorems \ref{th18}-\ref{th20}).
	The dimension of $\C_{(q,n,\delta)}$ were determined, where $\lceil \frac{q}{2}\rceil q^{s-1}\leq \delta \leq \frac{q^{s+1}+1}{q+1}$ and $2\mid s$(see Theorem \ref{th24}). Finally, we gave the dimension and the minimum distance of three subclasses of $\C_{(q,n,\delta)}$ for $\delta=a\frac{q^{s}-1}{q-1},a\frac{q^{s}+1}{q+1}$ if $2\nmid s$ and $\delta=a\frac{q^{s}-1}{q^{2}-1}$ if $2\mid s$, $1\leq a\leq q-1$ (see Theorem \ref{th26}).
	
	\item For the codes of length $n=\frac{q^{m}-1}{q-1}$, we found the second largest coset leader $\delta_{2,n}$ for some special cases. The parameters of $\C_{(q,n,\delta)}$ with $ \delta_{2,n}\leq \delta\leq \delta_{1,n}$ were investigated (see Theorem \ref{th33}).
	
	\item Sufficient and necessary conditions for $\C_{(q,n,\delta,2)}$ being dually-BCH codes were given, where $n=q^{m}-1,\frac{q^{m}-1}{q-1}$ and $\frac{q^{2s}-1}{q+1}$ (see Theorems \ref{th12}, \ref{th13}, \ref{th28}, \ref{th29}, \ref{th36}). Moreover, we found the sufficient and necessary conditions for the dual code of $\widetilde{C}_{(q,q^{m}-1,\delta)}$ to be a narrow-sence primitive BCH code (see Theorems \ref{th14} and \ref{th15}).
\end{itemize}

%\section*{References}

\end{document}